\documentclass[sigconf]{acmart}
\AtBeginDocument{%
  }

\setcopyright{acmlicensed}
\copyrightyear{2018}
\acmYear{2018}
\acmDOI{XXXXXXX.XXXXXXX}
\acmConference[Conference acronym 'XX]{Make sure to enter the correct
conference title from your rights confirmation email}{June 03--05,
2018}{Woodstock, NY}
\acmISBN{978-1-4503-XXXX-X/2018/06}

\usepackage[ruled,linesnumbered,vlined]{algorithm2e}
\usepackage{algorithmic}
\usepackage{amsmath}
\usepackage{mathtools}
\usepackage{amsthm}
\usepackage{newtxmath}
\usepackage{amsfonts}
\usepackage{subfig}
\usepackage{enumitem}
\usepackage{balance}
\usepackage{multirow}
\usepackage[normalem]{ulem}
\usepackage{tabularx}
\usepackage{pdfpages}
\usepackage{makecell}

\begin{document}

%%
%% The "title" command has an optional parameter,
%% allowing the author to define a "short title" to be used in page headers.
\title{GRACE: A Dynamic Coreset Selection Framework for Large Language Model Optimization}

%%
%% The "author" command and its associated commands are used to define
%% the authors and their affiliations.
%% Of note is the shared affiliation of the first two authors, and the
%% "authornote" and "authornotemark" commands
%% used to denote shared contribution to the research.

\author{Tianhao Tang}
\affiliation{%
  \institution{CSE, HKUST}
  \city{Hong Kong SAR}
  \country{}
}
\email{ttangae@cse.ust.hk}

\author{Haoyang Li}
\affiliation{%
  \institution{Computing, PolyU}
  \city{Hong Kong SAR}
  \country{}
}
\email{haoyang-comp.li@polyu.edu.hk}

\author{Lei Chen}
\affiliation{%
  \institution{DSA, HKUST (GZ)\&HKUST}
  \city{Guangzhou, China}
  \country{}
}
\email{leichen@cse.ust.hk}

%%
%% By default, the full list of authors will be used in the page
%% headers. Often, this list is too long, and will overlap
%% other information printed in the page headers. This command allows
%% the author to define a more concise list
%% of authors' names for this purpose.
\renewcommand{\shortauthors}{Tang et al.}

%%
%% The abstract is a short summary of the work to be presented in the
%% article.
\begin{abstract}
  Large Language Models (LLMs) have demonstrated remarkable capabilities in natural language understanding and generation. However, their immense number of parameters and complex transformer-based architectures result in significant resource demands and computational complexity during training, making it challenging to optimize them efficiently on large datasets. To reduce training costs while preserving performance, researchers have investigated coreset selection techniques, which aim to identify small, representative subsets of the entire training dataset to accelerate LLM training. However, existing coreset selection methods fail to adapt to the dynamic nature of LLM training and often struggle with scalability for models of this size. To address these limitations, we propose a graph-guided adaptive and dynamic coreset selection framework for LLMs, namely GRACE. GRACE dynamically constructs and updates coresets by combining representation diversity with gradient-based importance metrics, ensuring both informativeness and efficiency. To mitigate the computational cost of frequent updates, GRACE leverages a $k$-NN graph-based propagation mechanism and selectively updates scores and embeddings, adapting to evolving training dynamics. Extensive experiments on three benchmarks demonstrate that GRACE significantly improves training efficiency and downstream performance across diverse LLMs and tasks.
\end{abstract}

%%
%% The code below is generated by the tool at http://dl.acm.org/ccs.cfm.
%% Please copy and paste the code instead of the example below.
%%
%  \begin{CCSXML}
%    <ccs2012>
%    <concept>
%    <concept_id>00000000.0000000.0000000</concept_id>
%    <concept_desc>Do Not Use This Code, Generate the Correct Terms for Your Paper</concept_desc>
%    <concept_significance>500</concept_significance>
%    </concept>
%    <concept>
%    <concept_id>00000000.00000000.00000000</concept_id>
%    <concept_desc>Do Not Use This Code, Generate the Correct Terms for Your Paper</concept_desc>
%    <concept_significance>300</concept_significance>
%    </concept>
%    <concept>
%    <concept_id>00000000.00000000.00000000</concept_id>
%    <concept_desc>Do Not Use This Code, Generate the Correct Terms for Your Paper</concept_desc>
%    <concept_significance>100</concept_significance>
%    </concept>
%    <concept>
%    <concept_id>00000000.00000000.00000000</concept_id>
%    <concept_desc>Do Not Use This Code, Generate the Correct Terms for Your Paper</concept_desc>
%    <concept_significance>100</concept_significance>
%    </concept>
%    </ccs2012>
%  \end{CCSXML}

%  \ccsdesc[500]{Do Not Use This Code~Generate the Correct Terms for Your Paper}
%  \ccsdesc[300]{Do Not Use This Code~Generate the Correct Terms for Your Paper}
%  \ccsdesc{Do Not Use This Code~Generate the Correct Terms for Your Paper}
%  \ccsdesc[100]{Do Not Use This Code~Generate the Correct Terms for Your Paper}

%%
%% Keywords. The author(s) should pick words that accurately describe
%% the work being presented. Separate the keywords with commas.
\keywords{Large Language Model, coreset, data selection, adaptive training}

%  \received{20 February 2007}
%  \received[revised]{12 March 2009}
%  \received[accepted]{5 June 2009}

%%
%% This command processes the author and affiliation and title
%% information and builds the first part of the formatted document.
\maketitle

\section{Introduction}\label{sec:intro}

Large language models (LLMs) \cite{touvron2023llama2openfoundation,dubey2024llama3herdmodels,li2024survey,openai2023gpt4}, such as GPT~\cite{openai2023gpt4}, LLaMA~\cite{touvron2023llama2openfoundation,dubey2024llama3herdmodels}, and DeepSeek~\cite{deepseekai2024deepseekv3}, have demonstrated remarkable abilities in understanding and generating human-readable language, emerging as novel approaches for interacting with database systems. Their strengths in tasks such as natural language query processing~\cite{10.14778/3681954.3682003,zhou2025cracksql}, knowledge extraction~\cite{10.14778/3626292.3626294,10.14778/3659437.3659461}, and data summarization~\cite{10.14778/3659437.3659461} make them particularly valuable for assisting in managing database systems~\cite{10.1145/3709652,zhao2024llmdbdemo,10.14778/3675034.3675043}, such as constructing and simplifying complex queries~\cite{10.14778/3696435.3696440,ren2024purple} or facilitating efficient data exploration~\cite{10.14778/3659437.3659461,10597732}. In general, LLMs are primarily built on the Transformer architecture~\cite{vaswani2017attention}, featuring billions of parameters (e.g., Llama-3 with 7$\sim$65 billion parameters) and trained on massive datasets. This design enables them to capture intricate data patterns, following an autoregressive training scheme that predicts the next token based on previous context. The training objective is to maximize the likelihood of correct token sequence.

Despite their impressive performance, LLMs present significant challenges regarding the computational demands of training.
Specifically, given $N$ training samples with an average sequence length $T$, and an LLM with $L$ layers and hidden dimension $d$, the time complexity of training this LLM on the entire dataset once (i.e., one epoch) is $O(NL(dT^2+d^2T))$. This complexity is determined by both the size of the data and the architecture of the model, specifically the sequence length, hidden size, and total number of parameters~\cite{hoffmann2022training,li2024survey}. Therefore, considering the large number of parameters in models such as LLMs, it is challenging and resource-intensive to optimize LLMs on new corpora or downstream task datasets. Recently, parameter-efficient fine-tuning (PEFT) methods~\cite{dettmers2023qlora,he2022towards,houlsby2019parameter,hu2022lora,li2021prefix} such as LoRA~\cite{hu2022lora} have been proposed. These methods introduce a small fraction of trainable parameters while freezing the rest, achieving performance comparable to full fine-tuning. However, effective task and domain adaptation with PEFT methods still requires processing millions of tokens and substantial GPU resources.

To accelerate model training and fine-tuning, recent research has proposed selecting a small and representative coreset from the entire training dataset and optimizing the model on this coreset instead of the full dataset~\cite{chaiGoodCoreDataeffectiveDataefficient2023,chaiEfficientCoresetSelection2023,10.14778/3712221.3712249,10.14778/3561261.3561267}. The coreset is designed to preserve the statistical and structural properties of the original data, ensuring that training on it achieves comparable model performance. By prioritizing the most informative and diverse samples, coreset selection reduces redundancy, improves training speed, and lowers resource consumption. The coreset selection technique has been highly successful in accelerating the training of neural networks, including graph neural networks~\cite{li20242,li2024fight,liCamelManagingData2022} and convolutional neural networks~\cite{chaiEfficientCoresetSelection2023,killamsettyGRADMATCHGradientMatching2021,mirzasoleimanCoresetsDataefficientTraining2020,zhengCoveragecentricCoresetSelection2023}. Depending on the technique, existing approaches can be categorized into three types: \textit{data-property-based}, \textit{uncertainty-based}, and \textit{gradient-based} approaches.

Specifically, data-property-based approaches \cite{chaiGoodCoreDataeffectiveDataefficient2023,chenMaybeOnly052023,liCamelManagingData2022,tirumalaD4ImprovingLLM2023} select coresets based on intrinsic properties of the data, such as feature similarity or diversity. While these methods are simple and efficient for selection, they overlook information about model-specific learning difficulties for the target models. Also, uncertainty-based methods \cite{anknerPerplexedPerplexityPerplexityBased2024,baiMultiAgentCollaborativeData2024,marionWhenLessMore2023,yangSmallToLargeS2LScalable2024} first obtain the prediction probabilities or loss values from the target model on each training sample and construct the coreset by prioritizing uncertain samples, such as those with incorrectly predicted labels. However, these selection methods require a full evaluation of samples using the target model, which is computationally intensive, especially for LLMs. Also, the uncertainty metrics do not directly correlate with the gradient optimization procedure, and thus the selected samples are not sufficiently beneficial for training. Further, gradient-based methods \cite{joaquinIn2CoreLeveragingInfluence2024,nguyenMemoryefficientTrainingLLMs2024,xiaLESSSelectingInfluential2024,yuMATESModelAwareData2024} utilize gradient information, which explicitly encodes how each sample affects model updates during optimization. These methods construct coresets by prioritizing samples with larger gradient magnitudes or by solving a gradient-matching problem to approximate the gradients of the full dataset using a smaller subset. However, due to the high dimensionality and significant computational cost of calculating gradients for LLMs with billions of parameters, these approaches often encounter efficiency and scalability challenges when applied to LLMs~\cite{hoffmann2022training,yin2024compute}. More importantly, the above methods select a static coreset before model training, failing to adapt to the evolving training dynamics of LLMs. Although some approaches dynamically update coresets at fixed intervals using intermediate metrics like sample losses or gradients \cite{yuDiversifyConquerDiversityCentric2024,yuMATESModelAwareData2024}, they cannot accurately align with real-time model dynamics, often resulting in unnecessary updates and excessive computational overhead.

In this paper, we propose an adaptive and dynamic coreset selection framework for LLMs. We construct dynamic coresets based on gradient matching during training and adaptively update them to efficiently capture important training information. However, directly constructing dynamic coresets presents several technical challenges, as outlined below:
\begin{itemize}[leftmargin=20pt]
  \item[\textbf{C1:}] While the gradient matching problem has been studied in conventional deep learning models (e.g., CNNs), its application to LLMs introduces unique challenges. The vast scale and complex transformer architecture of LLMs result in prohibitive computational costs and a lack of theoretical guarantees for coreset selection in transformer-based LLMs.

  \item[\textbf{C2:}] Dynamic coreset selection for LLMs inherently requires multiple evaluations of data samples, which results in high computational costs, even for forward passes through the model. Therefore, designing an effective metric and algorithm to select coresets for LLMs with billions of parameters is critical for accelerating LLM training.

  \item[\textbf{C3:}] Beyond simply updating the coreset based on a manually defined epoch interval, determining the optimal moment to update the coreset is nontrivial. A naive approach that repeatedly solves the matching problem leads to excessive computational overhead, which impedes efficiency.
\end{itemize}

\begin{sloppypar}
  To address these challenges, we propose GRACE, a \textbf{GR}aph-guided \textbf{A}daptive and Dynamic \textbf{C}oreset s\textbf{E}lection framework that selects coresets for transformer-based LLMs and adaptively updates them using graph-based approximations.
  First, we conduct a theoretical analysis of gradient calculations in LLMs based on transformer architectures. Building on this analysis, we formalize the coreset selection problem, incorporating both feature-coverage representation scores and gradient-related importance scores to ensure diversity and informativeness in the selected coreset. Since the problem is NP-hard, we design an efficient greedy algorithm with a provable approximation ratio. Secondly, to adaptively update the coreset during training, we propose an adaptive checking strategy to periodically check for value changes in gradient-related scores during training. Additionally, since recalculating all dynamic indicators is expensive, we propose to exploit data similarity properties by constructing a $k$-NN graph based on the data representations. When an update is required, instead of recalculating scores or representations for all samples, we leverage the historical indicators saved during training and selectively recompute indicators for a subset of the full data, propagating the update to similar samples through the graph structure. This enables an efficient way to update dynamic indicators with evolving LLMs and ensures the coreset remains effective.
\end{sloppypar}

To summarize, our contributions are as follows:
\begin{itemize}[leftmargin=12pt]
  \item We propose a dynamic and adaptive coreset selection framework, GRACE, for LLM training and fine-tuning. GRACE dynamically constructs and updates coresets based on the training dynamics, with an adaptive mechanism to determine update moments.

  \item  We provide a theoretical analysis of the gradient matching problem for transformer-based LLMs. This analysis enables the formulation of a tractable coreset selection objective that balances feature coverage and gradient-based importance, offering a foundation for efficient coreset construction in LLM training.

  \item To address the inefficiency of frequent full recomputation, GRACE employs a graph-guided update mechanism. The framework constructs a $k$-NN graph and uses training history to model data similarity, allowing selective recomputation of scores and embeddings, as well as efficient approximation of graph updates. An adaptive checking strategy identifies when updates are necessary, thereby further reducing computational overhead.

  \item Extensive experiments on the MathInstruct, BioInstruct, and DialogSum benchmarks demonstrate that GRACE significantly improves training efficiency and downstream performance across diverse tasks.
\end{itemize}

\section{Preliminary and Related Works}\label{sec:related_work}
In this section, we first introduce the preliminaries of LLMs and then discuss existing coreset selection approaches. Important notations used in this paper are summarized in Table~\ref{tab:notation}.
\subsection{Large Language Model and Training}\label{subsec:llm}
\subsubsection{Large Language Models}
Large language models (LLMs)~\cite{li2024survey}, such as GPT~\cite{openai2023gpt4},  LLaMA~\cite{touvron2023llama2openfoundation,dubey2024llama3herdmodels}, and DeepSeek~\cite{deepseekai2024deepseekv3},
are trained on extensive datasets (e.g., corpora).
These models have demonstrated exceptional abilities in text understanding and analysis, which have been applied to a wide range of tasks, such as database tuning~\cite{10.1145/3709652}, query optimization~\cite{10.14778/3681954.3681960}, and data integration~\cite{10.14778/3659437.3659461,10.14778/3574245.3574258}.
Besides the extensive training data, the success of LLMs also relies on the transformer architecture.
LLMs are composed of multiple transformer blocks~\cite{vaswani2017attention} with billions of parameters, which enable LLMs to capture long-range dependencies within input sequences and autoregressively generate output sequences.

Given any natural language input, it is first processed by a tokenizer to break the sentence into a sequence $X = (x_1, x_2, \ldots, x_T)$ of length $T$, where each token represents a word or sub-word. Each token $x_i$ is then mapped to an embedding $\mathbf{x}_i$, forming the initial hidden representation $\mathbf{X}=[\mathbf{x}_1, \mathbf{x}_2, \ldots, \mathbf{x}_T] \in \mathbb{R}^{T\times d}$, where $d$ is the hidden dimension of the LLM.
The transformer processes the embeddings through $L$ layers of transformer blocks. Each transformer block  consists of a multi-head self-attention (MHSA) mechanism module followed by a feed-forward network.
Specifically, given the input  $\mathbf{H}^{(l-1)}$  to the transformer block at layer $l$ and $\mathbf{H}^{(0)}=\mathbf{X}$,
the MHSA mechanism applies self-attention across $n_{h}$ attention heads.
The self-attention operation for each head $i \in \{1,\cdots, n_h\}$ is computed as:
\begin{align}\label{eqdef:attention}
  &\mathbf{Q}^{(l)}_i = \mathbf{H}^{(l-1)}\mathbf{W}^{(l)}_{Q_i}, \\  &\mathbf{K}^{(l)}_i = \mathbf{H}^{(l-1)}\mathbf{W}^{(l)}_{K_i}, \\  &\mathbf{V}^{(l)}_i = \mathbf{H}^{(l-1)}\mathbf{W}^{(l)}_{V_i}, \\
  &\mathbf{Z}^{(l)}_i = \operatorname{softmax}\!\Biggl(\frac{\mathbf{Q}^{(l)}_i (\mathbf{K}^{(l)}_i)^\top}{\sqrt{d_k}}\Biggr)\,\mathbf{V}^{(l)}_i
\end{align}
where $\mathbf{Q}^{(l)}_i \in \mathbb{R}^{T\times d_k}$, $\mathbf{K}^{(l)}_i \in \mathbb{R}^{T\times d_k}$,
and $\mathbf{V}^{(l)}_i \in \mathbb{R}^{T\times d_v}$ are the learned Query matrix, Key matrix, and Value matrix.
Also,
$\mathbf{W}^{(l)}_{Q_i}$, $\mathbf{W}^{(l)}_{K_i}$,  and $\mathbf{W}^{(l)}_{V_i}$ are learnable weight matrices that project the input $\mathbf{H}^{(l-1)}_i$ into query, key, and value spaces, respectively.
The outputs of all attention heads are then concatenated and projected with an output weight $\mathbf{W}^{(l)}_O$ to produce the final MHSA output:
\begin{align}\label{defeq:multihead}
  \mathbf{Z}^{(l)}=\texttt{CONCAT}\left[\mathbf{Z}^{(l)}_1, ..., \mathbf{Z}^{(l)}_{n_h}\right]\mathbf{W}^{(l)}_O
\end{align}
And this is followed by a feed-forward network:
\begin{align}\label{defeq:ffn}
  \mathcal{F}(\mathbf{Z}^{(l)}) = \mathbf{W}^{(l)}_2\,\sigma\!\Bigl(\mathbf{Z}^{(l)}\mathbf{W}^{(l)}_1 + \mathbf{b}^{(l)}_1\Bigr) + \mathbf{b}^{(l)}_2
\end{align}
where \(\mathbf{W}^{(l)}_1\) and \(\mathbf{W}^{(l)}_2\) are weight matrices, \(\mathbf{b}^{(l)}_1\) and \(\mathbf{b}^{(l)}_2\) are bias vectors, and \(\sigma(\cdot)\) denotes the activation function (e.g., \texttt{ReLU}).
Combining these components, the hidden representation of the sequence is updated at each layer $l = 1, 2, \ldots, L$ as follows:
\begin{equation*}
  \mathbf{H}^{(l+1)} = \mathbf{H}^{(l)} + \mathcal{F}(\mathbf{H}^{(l)} + \mathcal{A}(\mathbf{H}^{(l)}))
\end{equation*}
\noindent For simplicity, we omit the normalization operation and positional encoding in the transformer block.

After passing through $L$ layers, the final hidden state $\mathbf{H}^{(L)}$ is then projected to the token vocabulary space by a final projection matrix $\mathbf{W}_h$, resulting in the predicted outputs $\mathbf{\hat{O}}=\mathbf{H}^{(L)}\mathbf{W}_h\in\mathbb{R}^{T\times n_{vocab}}$.  The token-wise probabilities are then obtained by applying a softmax function row-wise.
Specifically, unlike autoregressive inference that generates tokens sequentially, the training procedure is fully parallel: the entire input sequence is available and processed at once. This allows the model to compute all token-level predictions simultaneously and enables efficient loss and gradient computation with respect to the full output logits, as follows:
\begin{equation*}
  \hat{P}_{\boldsymbol{\theta}}(X)=\text{softmax}(\mathbf{\hat{O}}) \in [0,1]^{T\times n_{vocab}}
\end{equation*}
\noindent where $n_{vocal}$ is the vocabulary size and $\boldsymbol{\theta}$ is the parameter of the large language model.

\subsubsection{Large Language Model Training}
The training loss leverages an autoregressive paradigm to maximize the correct next-token prediction. The loss function of learning a sequence is defined as:
\begin{equation}\label{eq:loss}
  \mathcal{L}(X, \boldsymbol{\theta}) = -\frac{1}{T}\sum_{i=1}^{T}\log{\hat{P}_{\boldsymbol{\theta}}(x_i|x_{<i})}
\end{equation}
\noindent where $x_i=X[i]$ is the $i$-th token of $X$, and $x_{<i}=X[1:i-1]$ denotes the tokens from the first to the $(i-1)$-th tokens for $X$.
Specifically, $\hat{P}_{\boldsymbol{\theta}}(x_i|x_{<i})$ can be seen as the $i$-th row of $\hat{P}_{\boldsymbol{\theta}}(X)$.
This loss function is equivalent to the average of cross-entropy losses that maximize the likelihood of correct next-token predictions. The training process employs the mini-batch stochastic gradient descent (SGD) algorithm~\cite{bottou2012stochastic} that uniformly samples mini-batches from the entire dataset.
Given the training data $\mathcal{D}_{train}$, the LLM $\mathcal{M}_\theta$ can be optimized as follows:
\begin{equation*}
  \boldsymbol{\theta}_{t+1} = \boldsymbol{\theta}_t - \eta \frac{1}{|\mathcal{D}_{train}|} \sum_{X_i \in \mathcal{D}_{train}} \nabla_{\boldsymbol{\theta}} \mathcal{L}(X_i, \boldsymbol{\theta}_t)
\end{equation*}
\noindent where the gradient $\nabla_\theta \mathcal{L}(X_i, \boldsymbol{\theta}_t)$ denotes the gradient of the sequence $X_i$ at training iteration $t$.

\begin{table}
  \small
  \caption{Important notations.}
  \begin{tabularx}{\linewidth}{p{0.18\linewidth}|X}
    \hline
    \textbf{Notation} & \textbf{Description} \\
    \hline
    $X, x_i, T$ & Input sequence, $i$-th token, sequence length\\
    \hline
    $\mathbf{X}, \mathbf{x}_1$ &  Initial
    hidden states, $i$-th token's embedding \\
    \hline
    $\mathbf{H}^{(l)}_i$ & Hidden states of $X_i$ after $l$-th layer \\
    \hline
    $\bar{\mathbf{H}}^{(l)}_i$ & Mean hidden states of $X_i$ along sequence \\
    \hline
    $\mathbf{Z}^{(l)}, \mathbf{\hat{O}}$ & MHSA output at $l$-th layer, predicted logits \\ \hline
    $\hat{P}_{\boldsymbol{\theta}}(X)$ & Predicted probabilities of sample $X$\\
    \hline
    $\hat{P}_{\boldsymbol{\theta}}(x_i|x_{<i})$ & Predicted probabilities of $i$-th output token \\
    \hline
    $\mathcal{D}_{train}$ &
    Training dataset \\
    \hline
    $\mathcal{D}_{test}$ & Test dataset \\
    \hline
    $\mathcal{D}_{core}$ & Selected coreset \\
    \hline
    $\mathcal{M}_{\boldsymbol{\theta}}, L$ & LLM with parameters $\boldsymbol{\theta}$, LLM layer size  \\
    \hline
    $\mathcal{L}(X_i,\boldsymbol{\theta})$ & Prediction loss of sequence $X_i$ \\
    \hline
    $\nabla_{\boldsymbol{\theta}}\mathcal{L}(X_i,\boldsymbol{\theta})$ & Gradient w.r.t model parameters\\
    \hline
    $\nabla_{\hat{\mathbf{O}}}\mathcal{L}(X_i,\boldsymbol{\theta})$ & Gradient w.r.t logits outputs \\
    \hline
    $G$ & Graph built from training data \\
    \hline
    $R(\cdot), I(\cdot)$ & Representation score, importance score \\
    \hline
    $\lambda, \delta$ & Balance coefficient, update check threshold\\
    \hline
    $b$ & Selection budget in size\\
    \hline
    $\mathcal{T}$ & Total number of training steps \\
    \hline
    $\Delta^I,\Delta^\mathbf{H}$ & Change of importance score and embedding \\
    \hline
    $I^{new}(X_i)$ & Importance score after updates \\
    \hline
    $\bar{\mathbf{H}}^{new}(X_i)$ & Mean hidden states after updates \\
    \hline
  \end{tabularx}
  \label{tab:notation}
\end{table}

\subsection{Coreset selection for LLM Training}

Due to the large parameter size of LLMs,
it is highly time-consuming and resource-intensive to optimize the LLMs~\cite{isikScalingLawsDownstream2024}, especially on large datasets.
Recently, coreset selection approaches~\cite{chaiGoodCoreDataeffectiveDataefficient2023,chaiEfficientCoresetSelection2023,10.14778/3712221.3712249,10.14778/3574245.3574261,li20242,li2024fight,liCamelManagingData2022,10.1145/3677139,10.14778/3561261.3561267}
have been proposed to select a weighted subset of the dataset that approximates the statistical properties of the full dataset.
In this way, the model optimized on the coreset achieves performance comparable to the model optimized on the full dataset.
By training models on carefully selected coreset, the training time can be significantly reduced while achieving performance comparable to models trained on the entire dataset.
We provide the formal definition of coreset selection as follows:

\begin{definition}[Coreset Selection]
  Given a model $\mathcal{M}_\theta$, a training dataset $\mathcal{D}_{train}$, and a test dataset $\mathcal{D}_{test}$,
  coreset selection aims to construct a weighted subset $\mathcal{D}_{core} \subseteq \mathcal{D}_{train}$ of size $b$ such that the performance of the model $\mathcal{M}_\theta$ trained on $\mathcal{D}_{core}$  is as close as possible to the performance of a model trained on the full dataset $\mathcal{D}_{train}$ when evaluated on the test dataset $\mathcal{D}_{test}$.
  The objective can be expressed as follows:
  \begin{align*}
    & \arg\min_{\mathcal{D}_{core}}\ \left| \textsf{Eval}(\mathcal{D}_{test}, \mathcal{M}_{\boldsymbol{\theta}_{\mathcal{D}_{train}}}) - \textsf{Eval}(\mathcal{D}_{test},\mathcal{M}_{\boldsymbol{\theta}_{\mathcal{D}_{core}}})\right| \\
    & s.t. \quad |\mathcal{D}_{core}| \leq b, \ \mathcal{D}_{core} \subseteq \mathcal{D}_{train}
  \end{align*}
  \noindent where $\textsf{Eval}(\mathcal{D}_{test},\cdot)$ is an evaluation function on the test data $\mathcal{D}_{test}$, such as accuracy.
  Also,  $\mathcal{M}_{\boldsymbol{\theta}_{\mathcal{D}_{train}}}$ and $\mathcal{M}_{\boldsymbol{\theta}_{\mathcal{D}_{core}}}$ are LLMs trained on the full training dataset $\mathcal{D}_{train}$ and the coreset $\mathcal{D}_{core}$, respectively.
\end{definition}

The optimization problem is challenging to solve explicitly for two main reasons. First, the objective function is often non-convex or high-dimensional~\cite{danilova2020recent}, making direct optimization difficult.
Second, solving the problem requires evaluating all possible subsets~\cite{feldman2020introduction},
which is computationally prohibitive due to the combinatorial nature of the selection process.

Therefore, researchers have explored alternative formulations to construct the coreset efficiently while approximating the desired performance~\cite{albalak2024a,feldman2020introduction}.
Specifically, depending on the techniques used to select the coreset
from the entire dataset, existing approaches
can be classified into three types, i.e., \textit{data-property-based}~\cite{bhattExperimentalDesignFramework2024,chaiGoodCoreDataeffectiveDataefficient2023,chaiEfficientCoresetSelection2023,chenMaybeOnly052023,10.14778/3712221.3712249,tirumalaD4ImprovingLLM2023,10.14778/3561261.3561267}, \textit{uncertainty-based}~\cite{anknerPerplexedPerplexityPerplexityBased2024,demidovskijDARELDataReduction2023,heSHEDShapleyBasedAutomated2024,10.14778/3574245.3574261,liQuantityQualityBoosting2024,DBLP:conf/icde/LiuDLLCZ24,10.1145/3677139,mekalaSmallerLanguageModels2024,thrushImprovingPretrainingData2024,yangSmallToLargeS2LScalable2024,zhangHarnessingDiversityImportant2024,zhouLoBaSSGaugingLearnability2023}, and \textit{gradient-based}~\cite{dengInfluentialLanguageData2024,killamsettyGRADMATCHGradientMatching2021,mirzasoleimanCoresetsDataefficientTraining2020,nguyenMemoryefficientTrainingLLMs2024,yangSustainableLearningCoresets2023,zhangTAGCOSTaskagnosticGradient2024}.

\subsubsection{Data-property-based  Approaches}
Data-property-based approaches rely on the intrinsic properties of the dataset, such as feature diversity or statistical similarity, to select the coreset.
For instance, methods such as FastCore~\cite{chaiEfficientCoresetSelection2023} and D4~\cite{tirumalaD4ImprovingLLM2023} employ clustering techniques (e.g., K-means~\cite{lloyd1982least}) to group similar data points based on their features extracted from a pre-trained encoder~\cite{bhattExperimentalDesignFramework2024,chenMaybeOnly052023,tirumalaD4ImprovingLLM2023}.
Subsequently, they select representative samples that maximize the coverage of dataset's distribution.
While these approaches are simple and effective at maintaining data diversity, they do not consider the task-specific information and the influence of individual data points on the specific model (e.g., LLMs)~\cite{albalak2024a}.
Consequently, the task-agnostic nature can lead to suboptimal performance~\cite{albalak2024a}.

\subsubsection{Uncertainty-based Approaches}
Uncertainty-based methods select coreset instances by leveraging uncertainty metrics derived from the training target model or proxy models, such as training loss and prediction scores.
The basic idea behind these methods is that uncertain samples, which are those the model cannot confidently classify or predict, contain valuable information and are likely to have a significant impact on improving the model's performance during training.
Firstly,
several studies, including ~\cite{anknerPerplexedPerplexityPerplexityBased2024,demidovskijDARELDataReduction2023,marionWhenLessMore2023,thrushImprovingPretrainingData2024,yangSmallToLargeS2LScalable2024}, compute prediction scores and training loss for inputs to the model. These works then empirically select samples based on uncertainty indicators such as prediction scores, training loss, or a combination of both.
Depending on the strategy, they may empirically select the most uncertain samples, moderately uncertain samples, or the easiest samples to learn.
Second, other approaches~\cite{heSHEDShapleyBasedAutomated2024,liQuantityQualityBoosting2024,mekalaSmallerLanguageModels2024,zhangHarnessingDiversityImportant2024,zhouLoBaSSGaugingLearnability2023} define uncertainty indicators based on changes in the training loss before and after training.

However, uncertainty-based approaches present three key limitations when applied to LLMs. First, they require a trained or partially trained model to calculate prediction or uncertainty scores, which is impractical for LLMs due to their immense number of parameters.
Second, estimating uncertainty over large datasets is computationally expensive, as it necessitates running the LLM on every instance to obtain scores.
Finally, empirically defined uncertainty scores are not theoretically or directly linked to model optimization, meaning that samples selected based on uncertainty may not provide the most beneficial gradients for improving the model.

\subsubsection{Gradient-based Approaches}\label{sssec:gradient-based}
Since data-property-based and uncertainty-based methods cannot theoretically guarantee optimal model performance, gradient-based methods are proposed.
They directly use gradient information from the training process to select data samples that are most influential for model optimization.
The most common way is to build a gradient matching problem~\cite{dengInfluentialLanguageData2024,nguyenMemoryefficientTrainingLLMs2024,zhangTAGCOSTaskagnosticGradient2024}, which is generally defined in other deep learning models~\cite{killamsettyGRADMATCHGradientMatching2021,mirzasoleimanCoresetsDataefficientTraining2020,yangSustainableLearningCoresets2023}. The gradient matching coreset aims to closely approximate the gradient of all data samples at a specific training step with parameter $\boldsymbol{\theta}$, seeking to ensure that training on the coreset results in similar parameter updates as training on the full dataset. Formally, following~\cite{li20242}, the problem is defined as:

\begin{definition}[Gradient Matching Coreset]
  Given a model $\mathcal{M}_\theta$ and a full training dataset $\mathcal{D}_{train}$ of size $n$, the goal is to construct a weighted subset $\mathcal{D}_{{core}}\subseteq\mathcal{D}_{train}$ of size less than $b$,
  such that the total gradient computed on the coreset $\mathcal{D}_{{core}}$ closely approximates the total gradient of the full dataset $\mathcal{D}_{{train}}$ as follows:
  \begin{align}
    & \arg\min_{\mathcal{D}_{core}}\| \sum_{i=1}^{|\mathcal{D}_{train}|}\nabla_{\boldsymbol{\theta}} \mathcal{L}(X_i, \boldsymbol{\theta}) - \sum_{j=1}^{|\mathcal{D}_{core}|} w_j\nabla_{\boldsymbol{\theta}} \mathcal{L}(X_j, \boldsymbol{\theta}) \|     \label{eq:gradient_matching} \\
    & {s.t.} \   |\mathcal{D}_{core}|\leq b, w_j\geq 0, \  \sum_{j=1}^{|\mathcal{D}_{core}|}w_j=n, \nonumber
  \end{align}
\end{definition}
\noindent where $\nabla_{\boldsymbol{\theta}} \mathcal{L}(X_j, \boldsymbol{\theta})$ is the gradient of sample $X_j$ with respect to model parameters $\boldsymbol{\theta}$, and $w_j$ is the non-negative weight of each selected sample $X_j \in \mathcal{D}_{core}$.

Due to the non-convexity of the objective function and the exponential combinatorial nature of subset selection in Equation~\eqref{eq:gradient_matching}, it is infeasible to obtain the optimal coreset through exhaustive search~\cite{mirzasoleimanCoresetsDataefficientTraining2020}.
The computational complexity grows exponentially with the dataset size, making direct optimization impractical for large-scale problems~\cite{mirzasoleimanCoresetsDataefficientTraining2020}.
To address this, several approaches~\cite{li2024fight}
compute the gradient norms of the loss function for individual instances and select samples by maximizing both gradient magnitudes and diversity.
While they are effective for smaller models, these approaches are unsuitable for LLMs due to the prohibitive cost of gradient computations, which are significantly higher for billions of parameters compared to smaller models like GNNs~\cite{li2024fight} or MLPs~\cite{mirzasoleimanCoresetsDataefficientTraining2020}.
To improve efficiency, several researchers replace gradient-based metrics with feature-based distances between samples~\cite{li20242,liCamelManagingData2022,mirzasoleimanCoresetsDataefficientTraining2020}.
This relaxation avoids gradient computations but fails to capture the complex dependencies and rich contextual information in transformer-based LLMs, leading to a loss of critical information and suboptimal performance.

More importantly, the above approaches select the coreset statically before training, which ignores the evolving dynamics of LLMs during training.
This static selection often results in suboptimal coresets that fail to adapt to changes in the model's learning process.
Thus, to address these challenges, we propose a dynamic coreset selection framework for LLMs.

\begin{figure*}[htbp]
  \centering
  \vspace{-1em}
  \includegraphics[width=\linewidth]{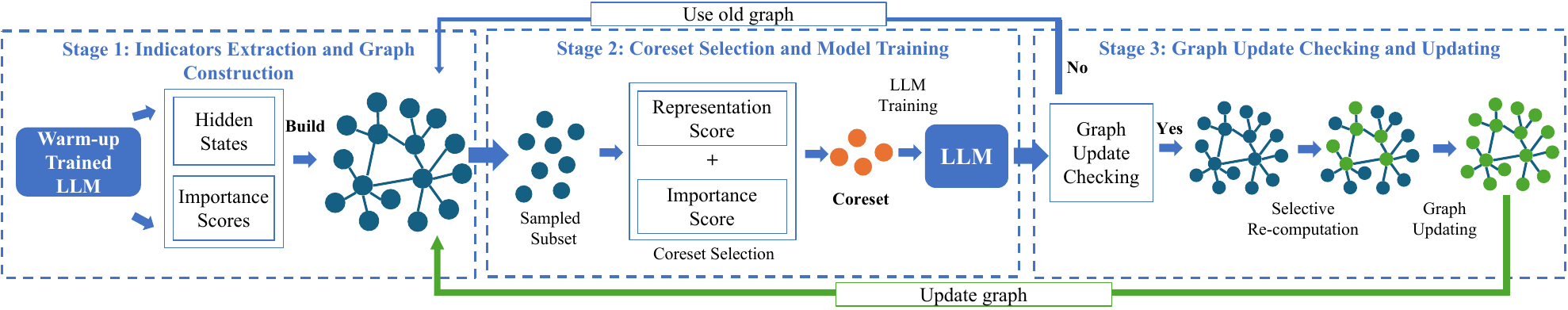}
  \vspace{-2em}
  \caption{Overview of GRACE. Stage 1: Hidden states and importance scores of all training samples are extracted from a warm-up trained model. A $k$-NN graph is constructed from these indicators. Stage 2: Sample a subset of samples from the graph and select the coreset, then train the model with the coreset. Stage 3: At checking steps, if significant changes in importance scores are detected, the graph will be updated. A subset is selected for recomputation and updated scores and embeddings are propagated through the graph structure. The $k$-NN graph is also updated and then reused in the next selection. }
  \label{fig:overall}
  \vspace{-1em}
\end{figure*}

\section{Framework Overview}
We introduce the procedure for GRACE, which is composed of three stages, summarized in Algorithm~\ref{alg:overall}.

\noindent \textbf{Stage 1}: \textbf{Indicator Extraction and Graph Construction.} Before coreset selection and model training, we extract the required features of training samples.
Based on this feature extraction, we construct a mutual $k$-NN graph to capture the similarity structure among samples, which serves as the backbone for coreset construction, enhancing diversity and supporting efficient local updates later in the process.
As shown in Algorithm~\ref{alg:overall} lines 1-3, given the warmup-trained target LLM $\mathcal{M}_{\boldsymbol{\theta}_{warm}}$, we extract importance scores $I(X_i)$ and the last-layer hidden states $\mathbf{H}^L_i$ for any $X_i\in \mathcal{D}_{train}$. We then construct a mutual $k$-NN graph $G$ by treating data samples as nodes and computing Euclidean distances from extracted hidden states. During construction, we retrieve the top-$k$ nearest neighbors in embedding space for each sample $X_i$ and keep an undirected edge $(i,j)$ only if the relation is mutual: $X_i$ is in the top-$k$ list of $X_j$, and $X_j$ is in the top-$k$ list of $X_i$.

\noindent \textbf{Stage 2}: \textbf{Coreset Selection and Model Training.}
After indicator extraction and graph construction, we train the LLM with coreset selection, as shown in Algorithm~\ref{alg:overall} lines 4-9. We divide the training steps within one epoch into several intervals, each comprising $t_c$ training steps, and define $b$ as the coreset budget for each interval. As in lines 5-6 of Algorithm~\ref{alg:overall}, at the start of each interval, we sample a subset $\mathcal{D}_s$ of size $n_S$ without replacement from the entire training dataset and construct a coreset $\mathcal{D}_{core}$ of size $b$ with the coreset selection objective following Equation~\eqref{eq:target} in Section~\ref{subsec:our_objective}. Then, the model is trained on the coreset $\mathcal{D}_{core}$ for $t_c$ steps, as shown in lines 7-9. This process is also shown as Stage 2 in Figure~\ref{fig:overall}.

\noindent \textbf{Stage 3}: \textbf{Graph Update Checking and Updating.} After $t_c$ training steps, following Algorithm~\ref{alg:overall} lines 10-11, we perform the checking and update procedure as described in Algorithm~\ref{algo:dynamic}, which may update importance scores and embeddings. We first randomly sample a small validation subset from the most recently selected coreset, recompute their current importance scores with the latest model parameters, and measure the score change via Equation~\eqref{eq:check}.
If the change exceeds the threshold, we select samples via Equation~\eqref{eq:update_select}, recompute their scores as well as embeddings, and update them through substitution and propagation over the graph structure via Equation~\eqref{eq:update}.
Finally, we incrementally repair only the affected portions of the $k$-NN structure. If no update is needed, we will keep the graph unchanged. After the checking and update step, we redo the sampling and coreset selection and repeat Stages 2 and 3 until the training finishes, as shown in Figure~\ref{fig:overall}.

\normalem
\begin{algorithm}[ht]
  \caption{Overall Framework}
  \label{alg:overall}
  \KwIn{Warmed-up LLM $\mathcal{M}_{\boldsymbol{\theta}_{warm}}$, initial LLM $\mathcal{M}_{\boldsymbol{\theta}_0}$, training dataset $\mathcal{D}_{train}$, training steps $\mathcal{T}$, check interval $t_c$, sample size $n_S$, coreset budget $b$}
  \KwOut{Trained LLM with parameters $\boldsymbol{\theta}_t$}
  $\{\bar{\mathbf{H}}_i, I(X_i)\}_{i=1}^n \gets $ \textsf{ExtractFeature}($\mathcal{M}_{\boldsymbol{\theta}_{warm}}$) \\
  $G \gets $ \textsf{BuildGraph}($\{\bar{\mathbf{H}}_i, I(X_i)\}_{i=1}^n$) \\
  $t=0$ \\
  \While {$t < \mathcal{T}$}{
    $\mathcal{D}_{S} \gets$\textsf{RandomSample}($\mathcal{D}_{train}, n_S$) \\
    $\mathcal{D}_{core} \gets$ \textsf{SelectCoreset}($\mathcal{D}_{S}, G, b$) \DontPrintSemicolon \tcp*{Algorithm~\ref{algo:greedy}}
    \For {$i=1\dots t_c$} {
      $\boldsymbol{\theta}_t \gets$
      $\boldsymbol{\theta}_{t-1} - \eta \frac{1}{|\mathcal{D}_{core}|} \sum_{X_i \in \mathcal{D}_{core}} \nabla_{\boldsymbol{\theta}} \mathcal{L}(X_i, \boldsymbol{\theta}_{t-1})$ \\
      $t\gets t+1$ \\
    }
    \If{$t\bmod t_c = 0$} {
      $G \gets$ \textsf{CheckandUpdate}($\mathcal{M}_{\boldsymbol{\theta}_t}, G, \mathcal{D}_{core}$) \DontPrintSemicolon \tcp*{Algorithm~\ref{algo:dynamic}}
    }
  }
  \Return Trained LLM $\mathcal{M}_{\boldsymbol{\theta}}$
\end{algorithm}
\ULforem

\section{Methods}\label{sec:ss}

In this section, we first provide a theoretical analysis of how to select a coreset for LLMs to preserve their performance,
taking into account the specific characteristics of transformer-based architectures and autoregressive training loss in Section~\ref{subsec:theoretical}. Building on this analysis, in Section~\ref{subsec:our_objective}, we define the objectives for coreset selection that incorporate representation scores of hidden states and the gradient-based importance scores extracted from the LLM. Then we introduce the $k$-NN graph and propose a greedy algorithm for coreset selection. In Section~\ref{subsec:graph}, we propose an adaptive graph update checking and updating approach over the $k$-NN graph to fit the changes in training steps and reduce computational overheads.

\subsection{Theoretical Analysis}\label{subsec:theoretical}
As introduced in Section~\ref{sssec:gradient-based}, directly utilizing Equation~\eqref{eq:gradient_matching} as the objective for coreset selection on LLMs is nontrivial. First, the relaxation analysis of transformers becomes more complex due to the non-linearity introduced by self-attention mechanisms, requiring further theoretical investigation. Second, utilizing gradients from other layers adds significant computational overhead during the backward pass. Moreover, the process remains highly time-consuming due to the size and complexity of LLMs, and the sparsity of gradients persists, given the billions of parameters in these models. Therefore, a theoretical analysis of coreset selection for LLMs remains lacking, particularly in the context of transformer-based architectures and billion scale LLM parameters. In the following, we present a theoretical framework to adapt the coreset selection definition in Equation~\eqref{eq:gradient_matching}
to LLMs.

Since it is infeasible to solve the coreset selection problem in Equation~\eqref{eq:gradient_matching} directly, following ~\cite{li20242,mirzasoleimanCoresetsDataefficientTraining2020,yangSustainableLearningCoresets2023}, we first relax the weight $w_j$ to a binary selection indicator $w_j \in \{0, 1\}$
and define a mapping function $\gamma: \mathcal{D}_{train} \rightarrow \mathcal{D}_{core}$ to associate each training sample $X_i \in  \mathcal{D}_{train}$ with a representative sample $ X_j \in \mathcal{D}_{core}$ in the coreset, i.e., $\gamma(X_i) = X_j$.
Therefore, we have the following upper bound through the triangle inequality:
\begin{align*}
  &\| \sum_{X_i\in\mathcal{D}_{train}}\nabla_{\boldsymbol{\theta}} \mathcal{L}(X_i, \boldsymbol{\theta}) - \sum_{X_j\in \mathcal{D}_{core}} w_j\nabla_{\boldsymbol{\theta}} \mathcal{L}(X_j, \boldsymbol{\theta}) \| \nonumber \\
  & \leq \|\sum_{X_i\in\mathcal{D}_{train}}(\nabla_{\boldsymbol{\theta}} \mathcal{L}(X_i, \boldsymbol{\theta}) - \nabla_{\boldsymbol{\theta}} \mathcal{L}(\gamma(X_i), \boldsymbol{\theta}))\| \nonumber\\
  & \leq \sum_{X_i\in\mathcal{D}_{train}}\|\nabla_{\boldsymbol{\theta}} \mathcal{L}(X_i, \boldsymbol{\theta}) - \nabla_{\boldsymbol{\theta}} \mathcal{L}(\gamma(X_i), \boldsymbol{\theta})\|
\end{align*}
Since minimizing this upper bound corresponds to assigning each training point to its nearest neighbor in the gradient space of the coreset instance, we have:
\begin{align}\label{eq:upper2}
  & \sum_{X_i\in\mathcal{D}_{train}}\|\nabla_{\boldsymbol{\theta}} \mathcal{L}(X_i, \boldsymbol{\theta}) - \nabla_{\boldsymbol{\theta}} \mathcal{L}(\gamma(X_i), \boldsymbol{\theta})\| \nonumber\\
  & \leq \sum_{X_i\in\mathcal{D}_{train}}\min_{X_j\in\mathcal{D}_{core}}\|\nabla_{\boldsymbol{\theta}} \mathcal{L}(X_i, \boldsymbol{\theta}) - \nabla_{\boldsymbol{\theta}} \mathcal{L}(X_j, \boldsymbol{\theta})\|
\end{align}

Equation~\eqref{eq:upper2} is still computationally intensive, as it requires extracting and comparing gradients for every training sample. Therefore, we further upper-bound the gradient representations through intermediate features that are easier to obtain during forward passes. As introduced in Section~\ref{subsec:llm}, LLMs are composed of layers of transformer blocks, which are non-linear and consist of a complex self-attention mechanism. Therefore, we first analyze the coreset selection based on a single-layer transformer block and discuss how to extend it to multi-layer transformer-based LLMs.

\begin{theorem}\label{thm:select}
  Given a one-layer transformer-based LLM $\mathcal{M}_\theta$ and a training dataset $\mathcal{D}_{{train}} = \{X_i\}_{i=1}^n$, the LLM $\mathcal{M}_\theta$ processes each training instance $X_i$ as input and progressively predicts the logits $\hat{\mathbf{O}}_i$ for $X_i$. Consequently, given the  $\mathcal{D}_{{train}}$ and the coreset  $\mathcal{D}_{core}$,
  the gradient difference in Equation~\eqref{eq:upper2} can be upper-bounded by the following objective:
  \begin{align}
    & \sum_{X_i\in\mathcal{D}_{train}}\min_{X_j\in\mathcal{D}_{core}}\|\nabla_{\boldsymbol{\theta}} \mathcal{L}(X_i, \boldsymbol{\theta}) - \nabla_{\boldsymbol{\theta}} \mathcal{L}(X_j, \boldsymbol{\theta})\|  \nonumber\\
    \leq & \sum_{X_i\in \mathcal{D}_{train}} \min_{X_j\in \mathcal{D}_{core}} c_1\left\|\mathbf{X}_i - \mathbf{X}_j \right\|  + c_2\| \nabla_{\mathbf{\hat{O}}_j}\mathcal{L}(X_j,\boldsymbol{\theta})\| + c_3 \label{eq:thm1}
  \end{align}
  \noindent where $\mathbf{X}_i$ denotes the input embedding representations of sample $X_i$ and $\nabla_{\mathbf{\hat{O}}_i}\mathcal{L}(X_i,\boldsymbol{\theta})$ is the gradient of loss with respect to the output logits $\hat{\mathbf{O}}_i$ for $X_i$. Also, $c_1$ and $c_2$ are Lipschitz constants associated with the model parameters $\boldsymbol{\theta}$ and $c_3=\| \sum_{X_i \in \mathcal{D}_{train} }{ \nabla_{\mathbf{\hat{O}}_i} \mathcal{L}(X_i,\boldsymbol{\theta})}\| $.
\end{theorem}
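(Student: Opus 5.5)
We work with a single sample pair $(X_i,X_j)$ and write the parameter gradient through the chain rule as a product of the upstream gradient on the logits and the Jacobian of the logits with respect to $\boldsymbol{\theta}$. Concretely, for the one-layer model we have $\hat{\mathbf{O}}_i=\mathbf{H}^{(1)}_i\mathbf{W}_h$ with $\mathbf{H}^{(1)}_i=\Phi_{\boldsymbol{\theta}}(\mathbf{X}_i)$, where $\Phi_{\boldsymbol{\theta}}$ is the composition of MHSA, the residual connection and the FFN. This gives
\begin{equation*}
\nabla_{\boldsymbol{\theta}}\mathcal{L}(X_i,\boldsymbol{\theta})=J_{\boldsymbol{\theta}}(\mathbf{X}_i)^\top\,\mathrm{vec}\bigl(\nabla_{\hat{\mathbf{O}}_i}\mathcal{L}(X_i,\boldsymbol{\theta})\bigr),
\end{equation*}
where $J_{\boldsymbol{\theta}}(\mathbf{X})=\partial\,\mathrm{vec}(\hat{\mathbf{O}})/\partial\boldsymbol{\theta}$ depends only on the input embeddings and the parameters.

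Write $g_i=\nabla_{\hat{\mathbf{O}}_i}\mathcal{L}(X_i,\boldsymbol{\theta})$. We add and subtract $J(\mathbf{X}_j)^\top g_i$ to split the difference into two terms:
\begin{equation*}
\|\nabla_{\boldsymbol{\theta}}\mathcal{L}_i-\nabla_{\boldsymbol{\theta}}\mathcal{L}_j\|\le\|J(\mathbf{X}_i)-J(\mathbf{X}_j)\|\,\|g_i\|+\|J(\mathbf{X}_j)\|\,\|g_i-g_j\|.
\end{equation*}
For the first term we argue that $J$ is Lipschitz in $\mathbf{X}$ on a bounded domain. We assume bounded embeddings, bounded weight norms, Lipschitz smooth activations, and that the softmax Jacobian is bounded and Lipschitz. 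The attention map $\mathbf{X}\mapsto\operatorname{softmax}(\mathbf{X}\mathbf{W}_Q\mathbf{W}_K^\top\mathbf{X}^\top/\sqrt{d_k})\mathbf{X}\mathbf{W}_V$ and its parameter derivatives are then polynomial-times-softmax expressions, so they are Lipschitz on compact sets. Next, each row of $g_i$ is $\hat{P}-\mathrm{onehot}$ divided by $T$, so $\|g_i\|$ is bounded by an absolute constant. Together these give $\|J(\mathbf{X}_i)-J(\mathbf{X}_j)\|\,\|g_i\|\le c_1\|\mathbf{X}_i-\mathbf{X}_j\|$.

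For the second term, let $B=\sup\|J\|$. We bound $\|g_i-g_j\|\le\|g_j\|+\|g_i\|$ and take $c_2=B$, which produces the $c_2\|g_j\|$ term. The remaining piece $B\|g_i\|$ has to be absorbed by the constant $c_3$. The cleanest way is to handle it at the level of the summed objective rather than pointwise, before applying the per-sample min. There I would bound the aggregated logit-gradient contribution by $\|\sum_i g_i\|$, using the original (pre-triangle-inequality) sum $\|\sum_i(\nabla\mathcal{L}_i-\nabla\mathcal{L}_{\gamma(i)})\|$ so that the $g_i$ parts combine into $\|\sum_i g_i\|$. If that fails, I would instead absorb the bounded quantity $B\sup\|g_i\|$ into the constant and note that the statement's $c_3$ dominates it up to rescaling.

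Finally, we take the minimum over $X_j\in\mathcal{D}_{core}$ and sum over $X_i$ to obtain the bound in the theorem.

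The main obstacle I expect is the Lipschitz control of the attention Jacobian $J$ with respect to $\mathbf{X}$, because the softmax couples all tokens quadratically. I will handle it by explicitly differentiating with respect to each weight block ($\mathbf{W}_Q,\mathbf{W}_K,\mathbf{W}_V,\mathbf{W}_O,\mathbf{W}_1,\mathbf{W}_2,\mathbf{W}_h$). I will then bound each factor using the boundedness of the softmax derivative, with all bounds stated under norm-bounded inputs and parameters; these are what the constants $c_1,c_2$ collect.

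The second delicate point is producing exactly the stated $c_3$. If the aggregated argument does not close, I will state the constant as a uniform bound on logit-gradient norms, which is finite for cross-entropy.
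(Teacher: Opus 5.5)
Your chain-rule factorization $\nabla_{\boldsymbol{\theta}}\mathcal{L}(X_i,\boldsymbol{\theta})=J(\mathbf{X}_i)^\top g_i$ and the add-and-subtract step are sound, given your smoothness assumption. (With ReLU, $J$ is not Lipschitz in $\mathbf{X}$; the paper sidesteps this by dropping the activation.) You reach the same pointwise bound $c_1\|\mathbf{X}_i-\mathbf{X}_j\|+c_2\|g_i\|+c_2\|g_j\|$ that the paper obtains by expanding the gradient block by block over $\mathbf{W}_1,\mathbf{W}_2,\mathbf{W}_Q,\mathbf{W}_K,\mathbf{W}_V,\mathbf{W}_O$ and removing $\|\mathbf{Z}_i-\mathbf{Z}_j\|$ via local Lipschitzness of attention. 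Your packaging is cleaner.

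The gap is the final step. Summing the pointwise bound leaves the extra term $c_2\sum_{X_i\in\mathcal{D}_{train}}\|g_i\|$. Since $\|\sum_i g_i\|\le\sum_i\|g_i\|$, the inequality points the wrong way, and this term cannot be replaced by $c_3$.

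Your aggregated route does not help. The theorem's left-hand side $\sum_i\min_j\|\nabla_{\boldsymbol{\theta}}\mathcal{L}(X_i,\boldsymbol{\theta})-\nabla_{\boldsymbol{\theta}}\mathcal{L}(X_j,\boldsymbol{\theta})\|$ is itself an \emph{upper} bound on $\|\sum_i(\nabla_{\boldsymbol{\theta}}\mathcal{L}(X_i,\boldsymbol{\theta})-\nabla_{\boldsymbol{\theta}}\mathcal{L}(\gamma(X_i),\boldsymbol{\theta}))\|$. Controlling the norm of the sum therefore says nothing about it. Even for that norm, $J$ varies across samples: grouping by clusters $C_j=\gamma^{-1}(X_j)$ gives $B\sum_j\|\sum_{i\in C_j}g_i\|$, not $B\|\sum_i g_i\|$, unless the coreset is a single point.

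Your fallback fails as well. Each row of $\hat{P}-\mathbf{O}$ sums to zero and has entries of both signs. So $\|\sum_i g_i\|$ can vanish by cancellation, for example one sample with a row near $\mathbf{e}_a-\mathbf{e}_b$ and another with the same row near $\mathbf{e}_b-\mathbf{e}_a$, while $\sum_i\|g_i\|>0$. No rescaling of $c_3$ can then dominate $B\sup_i\|g_i\|$.

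The paper's proof does not resolve this either. After reaching the same pointwise bound, it simply defines $c_3=\|\sum_{X_i\in\mathcal{D}_{train}}\nabla_{\hat{\mathbf{O}}_i}\mathcal{L}(X_i,\boldsymbol{\theta})\|$ and writes the conclusion. That is exactly the unjustified leap you flagged.

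What your argument (and the paper's) actually proves is the inequality with $c_3$ replaced by $c_2\sum_{X_i\in\mathcal{D}_{train}}\|\nabla_{\hat{\mathbf{O}}_i}\mathcal{L}(X_i,\boldsymbol{\theta})\|$. This is at most $\sqrt{2}\,c_2 n$, because every row of $\hat{P}-\mathbf{O}$ has norm at most $\sqrt{2}$ and the loss averages over tokens. That constant does not depend on $\mathcal{D}_{core}$, which is all the downstream selection objective needs. You should state the result with it rather than try to force the form $\|\sum_i g_i\|$.
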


\begin{proof}[Proof Sketch]\label{prf1}
  We analyze over $\mathbf{W}_1$, $\mathbf{W}_2$, $\mathbf{W}_Q$, $\mathbf{W}_K$, $\mathbf{W}_V$ and $\mathbf{W}_O$ defined in~\eqref{eqdef:attention},~\eqref{defeq:multihead} and~\eqref{defeq:ffn}. Defining $c_1$ and $c_2$ as constants correlated with $\mathbf{W}_1$, $\mathbf{W}_2$, $\mathbf{W}_Q$, $\mathbf{W}_K$, $\mathbf{W}_V$ and $\mathbf{W}_O$, we have:
  \begin{align*}
    &\|\nabla_{\boldsymbol{\theta}} \mathcal{L}(X_i, \boldsymbol{\theta}) - \nabla_{\boldsymbol{\theta}} \mathcal{L}(X_j, \boldsymbol{\theta})\| \\
    \leq & c_1\|\mathbf{X}_i - \mathbf{X}_j\| + c_2\|\nabla_{\hat{\mathbf{O}_i}}\mathcal{L}(X_i,\boldsymbol{\theta})-\nabla_{\hat{\mathbf{O}_j}}\mathcal{L}(X_j,\boldsymbol{\theta})\| \\
    \leq & c_1(\|\mathbf{X}_i - \mathbf{X}_j\|) + c_2\|\nabla_{\hat{\mathbf{O}_i}}\mathcal{L}(X_i,\boldsymbol{\theta})\|+c_2\|\nabla_{\hat{\mathbf{O}_j}}\mathcal{L}(X_j,\boldsymbol{\theta})\| \
  \end{align*}
  Then we have:
  \begin{align*}
    &  \sum_{X_i\in\mathcal{D}_{train}}\min_{X_j\in\mathcal{D}_{core}}\|\nabla_{\boldsymbol{\theta}} \mathcal{L}(X_i, \boldsymbol{\theta}) - \nabla_{\boldsymbol{\theta}} \mathcal{L}(X_j, \boldsymbol{\theta})\|  \nonumber\\
    \leq & \sum_{X_i\in \mathcal{D}_{train}} \min_{X_j\in \mathcal{D}_{core}} c_1\left\|\mathbf{X}_i - \mathbf{X}_j \right\| +  c_2\| \nabla_{\mathbf{\hat{O}}_j}\mathcal{L}(X_j,\boldsymbol{\theta})\| +c_3
  \end{align*} Thus, the theorem is proved. Due to the space limit, we put the whole proof in the Appendix.
\end{proof}

The objective in Equation~\eqref{eq:thm1} is to select a coreset $\mathcal{D}_{{core}}$ with size $b$ from the training data $\mathcal{D}_{{train}}$,
such that the LLM optimized on  $\mathcal{D}_{\text{core}}$ achieves similar performance to that optimized on $\mathcal{D}_{{train}}$, balancing data diversity and data quality.
Specifically,
\begin{itemize}[leftmargin=12pt]
  \item The first term $\|\mathbf{X}_i - \mathbf{X}_j\|$ is a \textit{representation score} that measures the distance between a training data point $X_i \in \mathcal{D}_{{train}}$ and a coreset point $X_j\in \mathcal{D}_{{core}}$. The goal is to minimize this distance, ensuring that the coreset $\mathcal{D}_{{core}}$ adequately represents the entire training dataset $\mathcal{D}_{{train}}$.
  \item The second term $ \|\nabla_{\mathbf{\hat{O}}_j} \mathcal{L}(X_j, \boldsymbol{\theta})\|$ is an \textit{importance score} that focuses on the gradient magnitude of the coreset point  $X_j \in \mathcal{D}_{{core}}$.  By minimizing this term, the selected coreset points are encouraged to have gradients that are significant enough to effectively guide the model's parameter updates, thereby preserving the model's training quality.
\end{itemize}
The inclusion of both terms ensures that the coreset points are not only representative of the input diversity in the training data but also make substantial contributions to gradient updates during optimization. This dual focus enables the trained LLMs to achieve high performance when using the selected coreset.

\subsection{Coreset Selection}\label{subsec:our_objective}
In this subsection, we propose the coreset selection objective and present a greedy algorithm for coreset selection.

\subsubsection{Representation Score}\label{define:repre}

Intuitively, the point $X_j$ in $\mathcal{D}_{core}$ should have a minimal distance to as many data points in $\mathcal{D}_{train}$ as possible for sufficient data diversity. To minimize the embedding distance $\|\mathbf{X}_i - \mathbf{X}_j\|$, we utilize the cosine similarity to convert it into a maximization objective. We use hidden state representations from the final layer of the model for each instance, as the transformer blocks there provide more semantic and contextual information.

Specifically, given a training dataset $\mathcal{D}_{{train}} = \{X_i\}_{i=1}^n$ where each $X_i$ is composed of $T$ tokens, suppose the LLM $\mathcal{M}_\theta$ processes each training instance $X_i$ as input and produces hidden state representations $\mathbf{H}_i^{(L)} \in \mathbb{R}^{T\times d}$ for $X_i$ at the last layer $L$ of transformer blocks, where $d$ is the hidden state dimension.
Formally, the representation score of the coreset $\mathcal{D}_{core}$ is:
\begin{equation}\label{eq:repre}
  R(\mathcal{D}_{core})=\sum_{X_i\in \mathcal{D}_{train}} \max_{X_j\in \mathcal{D}_{core}}\cos(\bar{\mathbf{H}}_i^{(L)} ,\bar{\mathbf{H}}_j^{(L)})
\end{equation}
\noindent To obtain a fixed-size representation, we compute the mean in the token length dimension of $\mathbf{H}_i^{(L)}$, denoted as  $\bar{\mathbf{H}}_i^{(L)}\in\mathbb{R}^d$.

\subsubsection{Importance score}

In Theorem~\ref{thm:select}, the selected coreset is encouraged to have gradients with respect to the output logits that can effectively guide the model's parameter updates, i.e., $ \|\nabla_{\mathbf{\hat{O}}_j} \mathcal{L}(X_j, \boldsymbol{\theta})\|$. It can be computed by:
\begin{equation} \label{eq:el2n}
  I(X)=\|\nabla_{\mathbf{\hat{O}}}\mathcal{L}(X,\boldsymbol{\theta})\|=\left\|\hat{P}(\hat{\mathbf{O}})-\mathbf{O}\right\|
\end{equation}
\noindent where $\mathbf{O}$ is the one-hot ground truth matrix of sample $X$. We define $I(X)$ as the importance score.

The importance score measures per-sample gradient sensitivity, analogous to per-sample gradient-norm difficulty. However, directly selecting samples with large importance score values can be sub-optimal. High importance scores may reflect noisy and mislabeled inputs. In addition, under constrained training budgets, prioritizing hard samples does not always contribute to model training due to limited steps for gradients explode. In practice, with constrained training budgets, prioritizing low- or mid-difficulty mass yields steadier improvement. Such a strategy is consistent with the observations in~\cite{sorscherNeuralScalingLaws2022,zhengCoveragecentricCoresetSelection2023} and with our evaluation setup.

Thus, to adaptively favor moderately difficult samples~\cite{acharyaBalancingFeatureSimilarity2024,choLightweightDatasetPruning2025}, instead of selecting based on the largest or smallest importance scores, we first rescale the raw importance scores to $[0,1]$ via min-max normalization $\tilde{I}(X)=\frac{I(X)-\min_{X'\in \mathcal{D}_{train}}I(X)}{\max_{X'\in \mathcal{D}_{train}}I(X)-\min_{X'\in \mathcal{D}_{train}}I(X)}$ to normalize scale differences across datasets, sources, or checkpoints. Then, we define a warped importance score for selection as:

\begin{align}\label{eq:importance}
  \hat{I}(X)&=(\text{Beta}(\tilde{I}(X),\alpha,\beta))^\gamma\nonumber \\
  &=(\frac{\Gamma(\alpha+\beta)}{\Gamma(\alpha)\Gamma(\beta)}\tilde{I}(X)^{\alpha-1}(1-\tilde{I}(X))^{\beta-1})^\gamma
\end{align}
\noindent $\alpha$ and $\beta$ are the parameters associated with the Beta distribution that control the mean and variance. $\gamma \geq 0$ is a temperature parameter that controls the sharpness of the warped importance scores without changing the induced ranking.
We define:
\begin{equation}
  \alpha=1+C\cdot (\text{Mean}(\tilde{I}(X)))^q \cdot \eta^r,
  \beta=C-\alpha
\end{equation}
\noindent where $\eta$ is the coreset selection budget in percentage, $q$ and $r$ are hyper-parameters controlling the contribution of the mean score and budget, and $C$ is a constant.
In this way, we suppress outliers with extremely high or low $I(X_i)$. In addition, we can shift the mode of the Beta curve toward harder samples when the budget is large or toward easier samples when the budget is tight, thus adapting to different regions according to the score distribution and training budget. The temperature $\gamma$ allows us to smoothly align the score scale with the representation score computed by embedding similarity.
Formally, given a selected subset from the training dataset $\mathcal{D}_{{core}} \subseteq \mathcal{D}_{{train}}$, the warped importance score of the coreset $\mathcal{D}_{core}$ is:
\begin{align}\label{eq:impor}
  \hat{I}(\mathcal{D}_{core})&=\sum_{X_i\in \mathcal{D}_{core}}(\text{Beta}(\tilde{I}(X_i),\alpha,\beta))^\gamma \nonumber\\
  &=\sum_{X_i\in \mathcal{D}_{core}}(\frac{\Gamma(\alpha+\beta)}{\Gamma(\alpha)\Gamma(\beta)}\tilde{I}(X_i)^{\alpha-1}(1-\tilde{I}(X_i))^{\beta-1})^\gamma
\end{align}

With these scores, we then formally define our coreset selection problem as follows:
\begin{definition}[Coreset Selection Problem]\label{define:coreset} Given the training dataset $\mathcal{D}_{train}$ and the selection budget $b$, our target is to select a coreset $\mathcal{D}_{core}\subseteq \mathcal{D}_{train}$ that maximizes the following objective $RS(\mathcal{D}_{core})$:
  \begin{equation}
    \begin{split}
      & RS(\mathcal{D}_{core})=\lambda R(\mathcal{D}_{core}) + (1-\lambda)\hat{I}(\mathcal{D}_{core}) \\
      & s.t. |\mathcal{D}_{core}|\leq b \\
    \end{split}
    \label{eq:target}
  \end{equation}
  \noindent where $R(\mathcal{D}_{core})$ is the representation score of $\mathcal{D}_{core}$ defined in Equation~\eqref{eq:repre}, $\hat{I}(\mathcal{D}_{core})$ is the importance score of $\mathcal{D}_{core}$ defined in Equation~\eqref{eq:importance}, and $\lambda$ is a hyper-parameter balancing contributions of each term.
\end{definition}

\begin{theorem}
  \label{thm:nphard}
  The Coreset Selection Problem is NP-hard.
\end{theorem}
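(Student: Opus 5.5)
We reduce from a known NP-hard problem: the decision version of \emph{Maximum Coverage}, or more directly \emph{facility location}. Since the objective $RS(\mathcal{D}_{core})$ contains the facility-location term $R(\mathcal{D}_{core})=\sum_{X_i}\max_{X_j\in\mathcal{D}_{core}}\cos(\bar{\mathbf{H}}_i^{(L)},\bar{\mathbf{H}}_j^{(L)})$, it suffices to show that the special case $\lambda=1$ is already NP-hard. In that case $RS(\mathcal{D}_{core})=R(\mathcal{D}_{core})$ and the importance term plays no role. The hardness of the general problem then follows, because an algorithm for arbitrary $\lambda\in[0,1]$ in particular solves the instance with $\lambda=1$.

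For the reduction we start from Maximum Coverage. We are given a universe $U=\{u_1,\dots,u_m\}$, sets $S_1,\dots,S_p\subseteq U$, a budget $b$ and a target $K$, and we ask whether some $b$ sets cover at least $K$ elements. We build one training sample for every set $S_a$ and one for every element $u_e$, and we choose the mean hidden states directly as vectors in $\mathbb{R}^d$. This is legitimate because the problem in Definition~\ref{define:coreset} takes the embeddings $\bar{\mathbf{H}}_i^{(L)}$ as given input data.

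We want the cosine similarity to encode incidence. We aim for $\cos(\text{element } u_e,\text{set } S_a)=s_1$ when $u_e\in S_a$ and $s_0<s_1$ otherwise, with all remaining similarities no larger than $s_0$. For the concrete choice we give each element $u_e$ its own orthonormal coordinate $\mathbf{e}_e$. We represent set $S_a$ by $\mathbf{v}_a=\sum_{e\in S_a}\mathbf{e}_e+\mathbf{f}_a$, where $\mathbf{f}_a$ is a private orthogonal coordinate. We then rescale or pad the vectors so that the similarity between an element and a containing set is a fixed constant.

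The main obstacle is exactly this step. The cosine values depend on $|S_a|$, and selecting element-nodes or set-set similarities could also create spurious gains. We first handle the dependence on $|S_a|$ by reducing from Maximum Coverage restricted to equal set sizes, for instance from Exact Cover by 3-Sets with all $|S_a|=3$, which is NP-complete. With equal sizes all incidence cosines equal a single constant $s_1$, and all non-incidences give cosine $0$. To make choosing an element-node never better than choosing a set-node, we replicate each set-node $M$ times, with $M$ polynomially large. Then the contribution of the set copies dominates, and an exchange argument shows that an optimal coreset consists of set-nodes only.

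With this in place, $R$ takes the form $\text{const}+s_1\cdot(\#\text{covered elements})$. So $R(\mathcal{D}_{core})\ge\tau$ holds for an explicit threshold $\tau$ if and only if $b$ sets cover the whole universe. The construction uses polynomial size and dimension $m+p$, so the decision version of the Coreset Selection Problem is NP-hard.
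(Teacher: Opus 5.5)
\emph{There is a genuine gap: your construction does not produce $R=\text{const}+s_1\cdot\#\text{covered}$.} Your overall plan matches the paper's: neutralize the importance term (you set $\lambda=1$) and reduce from a coverage problem. Realizing the incidence pattern with actual vectors is more careful than the paper, which simply declares $\cos(X_i,X_j)$ to be a $0/1$ coverage indicator.

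The construction fails at the step you flagged yourself. With $\mathbf{v}_a=\sum_{e\in S_a}\mathbf{e}_e+\mathbf{f}_a$ and $|S_a|=3$, two distinct set-nodes have $\cos(\mathbf{v}_a,\mathbf{v}_c)=|S_a\cap S_c|/4$. This equals $1/4$ or $1/2$ whenever the sets overlap, which is unavoidable in nontrivial X3C instances. So your own requirement that all remaining similarities be at most $s_0=0$ is not met, and nothing later addresses it.

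Since $R$ sums over all of $\mathcal{D}_{train}$, the $M$-fold replication amplifies exactly this term. For a coreset made of $b$ distinct set-nodes with index set $A$,
\[
R(\mathcal{D}_{core})=Mb+\frac{M}{4}\sum_{c\notin A}\max_{a\in A}|S_c\cap S_a|+\frac{1}{2}\,\#\{\text{covered elements}\}.
\]
The middle term depends on $A$ and carries the dominant weight $M$. So $R$ is not of the form $\text{const}+s_1\cdot\#\text{covered}$, a non-cover can outscore an exact cover, and no fixed $\tau$ gives your equivalence.

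The exchange argument is also unsupported. The element-node $u_e$ has cosine $1/2$ with all $M\deg(e)$ copies of the sets containing $e$. A set-node passes only $|S_a\cap S_c|/4$ to other sets, so for large $\deg(e)$ an element-node can be the better choice.

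The idea can be repaired by making the set--set similarities negligible relative to one unit of coverage, even after multiplication by $M$. Take $\mathbf{v}_a=\sum_{e\in S_a}\mathbf{e}_e+t\,\mathbf{f}_a$ and $N=\sqrt{3+t^2}$. Then incidence cosines are $1/N$ and distinct set--set cosines are at most $3/N^2$. If $\mathcal{D}_{core}$ contains copies of $s$ distinct sets, then $R\le Ms+M(p-s)/N+m$. Choose $M=m+1$ and $t=3Mp$, so that $N>3Mp$.
\begin{itemize}
\item Every coreset with $s\le b-1$ has $R<Mb$.
\item For $s=b$, the coreset is exactly $b$ distinct set-nodes. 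Then $R$ lies between $Mb+\#\text{covered}/N$ and that value plus $3Mp/N^2$, and $3Mp/N^2<1/N$.
\end{itemize}
Hence some coreset reaches $\tau=Mb+m/N$ if and only if an exact 3-cover exists, and $t$, $M$ and the dimension $m+p$ are all polynomial. Your argument is missing a quantitative condition of this kind, one that controls the total weight of the spurious similarities.
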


\begin{proof}[Proof Sketch]
  This problem can be reduced from the Maximum Coverage Problem (MCP)~\cite{hochbaum1997approximating}, which is NP-hard.
  Due to space limit, we put the full proof in our technique report.
\end{proof}

\begin{figure*}[t]
  \centering
  \vspace{-1em}
  \includegraphics[width=\linewidth]{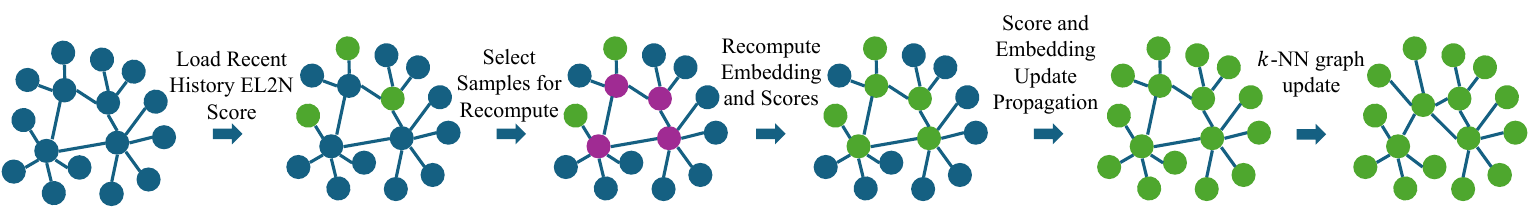}
  \caption{Overview of the graph update process. From left to right: we begin with a $k$-NN graph constructed from initial representations and EL2N scores (blue nodes). When adaptive update checking is triggered, a subset of recently trained samples (green) is loaded and we use their history scores as new scores. Another small number of samples (purple) are then selected for recomputation. Combining the history scores and recomputed scores as updated scores, they are then propagated to their neighbors through the graph structure (blue to green), allowing approximate updates without full recomputation.}
  \label{fig:graph-update}
  \vspace{-1em}
\end{figure*}

\subsubsection{Coreset Selection Algorithm}

Since the coreset selection problem is NP-hard, we cannot obtain an optimal solution in polynomial time. To address this, we propose a greedy algorithm. The basic idea is to iteratively select the data point that provides the maximum gain of $RS$ score until reaching the budget $b$ in a greedy manner. First of all, we define the marginal gain of the $RS$ score as follows:
\begin{equation} \label{eq:marginalrs}
  \Delta RS(X_i|\mathcal{D}_{core})=RS(\mathcal{D}_{core}\cup \{X_i\})-RS(\mathcal{D}_{core})
\end{equation}

\normalem

\begin{algorithm}[t]
  \caption{SelectCoreset}
  \label{algo:greedy}
  \KwIn{Candidate samples $\mathcal{D}_{S}$, selection budget $b$, Graph $G$ built on training data}
  \KwOut{Coreset $\mathcal{D}_{core}$}
  $\mathcal{D}_{core} \gets \emptyset$ \\
  \While{$|\mathcal{D}_{core}|<b$} {
    \For{$X_i\in \mathcal{D}_{S}\backslash \mathcal{D}_{core}$} {
      $RS(\mathcal{D}_{core}\cup \{X_i\}) \gets \text{Equation \eqref{eq:target}}$ \\
      $\Delta RS(X_i|\mathcal{D}_{core})=RS(\mathcal{D}_{core}\cup \{X_i\})-RS(\mathcal{D}_{core})$ \\
    }
    $X^*={\arg\max}_{X_i\in \mathcal{D}_{S}\setminus \mathcal{D}_{core}}\Delta RS(X_i|\mathcal{D}_{core})$\\
    $\mathcal{D}_{core}=\mathcal{D}_{core}\cup \{X^*\}$
  }

  \Return $\mathcal{D}_{core}$
\end{algorithm}

\ULforem

To simplify the process of the coreset selection algorithm, we build a mutual $k$-NN graph $G$ to save the hidden state representations and importance scores. The detailed definition of $G$ is as follows:
\begin{definition}[$k$-NN graph] \label{define:knngraph}
  We construct a mutual $k$-NN graph $G=(\mathcal{D}_{train},E)$ where each node $X_i$ is a data point in $\mathcal{D}_{train}$ filled with its importance score and hidden state representation. There is an edge $e_{ij} \in E$ connecting $X_i$ and $X_j$ if and only if $X_j$ is one of the top-$k$ nearest neighbors of $X_i$, and $X_i$ is one of the top-$k$ nearest neighbors of $X_j$ too. The edge weight of $e_{ij}$ is defined as $w_{ij}=\exp(-{\|\bar{\mathbf{H}}_i^{(L)} - \bar{\mathbf{H}}_j^{(L)}\|^2}/{100})$.
\end{definition}

Since hidden states and importance scores evolve during training, $G$ must be updated accordingly, where the details are discussed in Section~\ref{subsec:graph}. Then, as shown in Algorithm~\ref{algo:greedy}, we initialize $\mathcal{D}_{core}$ as an empty set. Then we compute the marginal gain of $RS$ score as Equation~\eqref{eq:marginalrs} for each $x_i$ in $\mathcal{D}_S$, where $RS(\cdot)$ is the score calculated from Equation~\eqref{eq:target}. This process is repeated until the number of selected samples reaches the budget $b$.

\begin{theorem}
  \label{thm:greedy}
  Algorithm~\ref{algo:greedy} achieves a $(1-\frac{1}{e})$ approximation ratio to the optimal solution.
\end{theorem}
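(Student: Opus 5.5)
The plan is to invoke the classical result of Nemhauser, Wolsey and Fisher: greedily maximizing a monotone, nonnegative, normalized submodular set function under a cardinality constraint $|\mathcal{D}_{core}|\le b$ gives a $(1-\frac{1}{e})$-approximation. The proof therefore reduces to showing that $RS(\cdot)$ in Equation~\eqref{eq:target} has these properties on the ground set over which Algorithm~\ref{algo:greedy} operates. Algorithm~\ref{algo:greedy} is exactly the standard greedy rule with marginal gains from Equation~\eqref{eq:marginalrs}, so no further algorithmic analysis is needed. Since $\lambda\in[0,1]$ and both properties are preserved under nonnegative combinations, I will treat $R$ and $\hat{I}$ separately.

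For $\hat{I}(\mathcal{D}_{core})$ in Equation~\eqref{eq:impor}, the function is modular, i.e.\ a sum of per-element weights $(\text{Beta}(\tilde{I}(X_i),\alpha,\beta))^\gamma$. These weights are nonnegative because $\tilde I\in[0,1]$, $\gamma\ge 0$, and the Beta density is nonnegative for $\alpha,\beta>0$; I will note that this requires $C>\alpha$. A modular function with nonnegative weights is monotone and submodular, with equality holding in the diminishing-returns inequality, and $\hat I(\emptyset)=0$.

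For $R(\mathcal{D}_{core})=\sum_{i}\max_{j\in\mathcal{D}_{core}}\cos(\bar{\mathbf H}_i,\bar{\mathbf H}_j)$ this is a facility-location function. For each fixed $i$, the map $S\mapsto\max_{j\in S} s_{ij}$ is monotone. It is also submodular: for $A\subseteq B$ and $x\notin B$, the gain $\max(0,s_{ix}-\max_{B}s_{ij})\le\max(0,s_{ix}-\max_A s_{ij})$. Summing over $i$ preserves both properties. The subtle points are nonnegativity and normalization, because cosine similarities can be negative and the max over the empty set is undefined. I will fix these by the standard convention of setting $\max_{\emptyset}:=0$ and using $s_{ij}=\max(0,\cos(\cdot,\cdot))$, or equivalently shifting to $(1+\cos)/2$. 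Either choice changes $R$ only by an additive constant or leaves the greedy choices unchanged once a nonnegative baseline is used, so $R(\emptyset)=0$ and $R$ is nonnegative and monotone.

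The main obstacle is this normalization step. With raw, possibly negative cosines and $\max_\emptyset=-\infty$, the function is not normalized, and the approximation guarantee is stated relative to $f(\emptyset)=0$. I will state explicitly that the guarantee holds for the shifted objective $R'(S)=\sum_i\max(0,\max_{j\in S}\cos_{ij})$. I will also remark that, because greedy runs on the sampled candidate set $\mathcal{D}_S$, the bound compares against the optimum over subsets of $\mathcal{D}_S$. Finally, I will record the standard inductive argument: $\mathrm{OPT}-f(S_{t+1})\le(1-\frac1b)(\mathrm{OPT}-f(S_t))$, which follows from submodularity and monotonicity via $\mathrm{OPT}\le f(S_t)+\sum_{o\in\mathrm{OPT}}\Delta(o|S_t)\le f(S_t)+b\,\Delta(X^*|S_t)$. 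This yields $f(S_b)\ge(1-(1-\frac1b)^b)\mathrm{OPT}\ge(1-\frac1e)\mathrm{OPT}$.
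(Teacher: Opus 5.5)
Your proposal is correct and follows the same route as the paper. Like the paper's lemma, you show that $RS$ is monotone and submodular, which you justify more fully by splitting it into a modular part $\hat{I}$ and a facility-location part $R$. You then run the standard greedy induction, and your factor $1-(1-\frac{1}{b})^{b}$ is the correct form of the paper's mistyped $1-(\frac{1}{k_S})^{k_S}$. Your remarks on $R(\emptyset)$, negative cosines and the ground set $\mathcal{D}_S$ cover points the paper silently assumes.

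One small fix concerns the normalization. Only the unscaled shift $1+\cos$ with $\max_{\emptyset}:=0$ leaves the choices of Algorithm~\ref{algo:greedy} unchanged. Clipping at $0$ changes the gains for points whose similarities to the current set are all negative, and $(1+\cos)/2$ halves $R$ relative to $\hat{I}$, which effectively changes $\lambda$. Use the plain shift if you want the guarantee to apply to the algorithm exactly as written.
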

\begin{proof}[Proof Sketch]
  Suppose $\mathcal{D}_{core}^{k}$ is the selected coreset using Algorithm~\ref{algo:greedy} of size $k$, and $\mathcal{D}_{core}^{opt}$ is the unknown subset that maximizes the score $RS(\cdot)$ in Equation~\eqref{eq:marginalrs}. We can first prove that $\Delta RS(X_i|\mathcal{D}_{core}^{k})$ is monotone increasing and submodular. Then, we prove that it satisfies $RS(\mathcal{D}_{core}^{k}) \geq (1-\frac{1}{e})RS(\mathcal{D}_{core}^{opt})$.
  Due to space limit, we put the full proof in our Appendix.
\end{proof}

\noindent\underline{\textbf{Time Complexity Analysis.}}
\label{subsec:complex-coreset} Given the selection budget $b$, the candidate sample size $|\mathcal{D}_S|$, and the embedding dimension $d$, the time complexity is $O(d)$ for each computation of Equation~\eqref{eq:repre}, and $O(1)$ for Equation~\eqref{eq:impor}. This requires a loop of $O(|D_S|b)$ for each selection, with $k$ selections in total. Thus, the total time complexity for Algorithm~\ref{algo:greedy} is $O(|\mathcal{D}_S|(b^2d+b))$.

\subsection{Graph Update Checking and Updating}\label{subsec:graph}
According to Equation~\eqref{eq:gradient_matching} and Equation~\eqref{eq:target}, the selection quality is constrained by representations and scores under parameter $\boldsymbol{\theta}$. While $\boldsymbol{\theta}$ changes during training, the extracted representations and scores evolve. Therefore, if we use the static graph constructed at the beginning to select the coreset for subsequent training steps, the new coreset may not accurately reflect the current training dynamics or capture the most representative samples under the updated model parameters.

One simple way is to fully update the $k$-NN graph $G$ by recomputing all sample embeddings and scores. However, its time complexity is $\mathcal{O}(N^2 d+NL(dT^2+d^2T))$, which is high and impractical for LLM training. In addition, we observe that model drift is not uniform: most samples maintain roughly the same difficulty ranking across short training intervals, and only a small subset changes substantially. To efficiently capture training dynamics, we propose approximating score and representation updates via the graph $G$. The dynamic update procedure in Algorithm~\ref{algo:dynamic} consists of three main steps:

\begin{itemize}[leftmargin=*]
  \item \textbf{Step 1: Adaptive Update Checking: }
    In lines 1–3, we randomly sample a set $\mathcal{S}_t$ from the current $D_{{core}}$ and compute their importance score variation using Equation~\eqref{eq:check}. If the variation exceeds the threshold $\delta$, an approximate graph update is triggered.

  \item \textbf{Step 2: Selective Recalculation: }
    In lines 4–10, for each $X_i$ in $D_{train}$, we calculate $s_{update}$ according to Equation~\eqref{eq:update_select}. We then iteratively expand a set $D_{{recal}}$ to include the top-$k_{{recal}}$ disjoint data points with the highest $s_{{update}}$ scores. The importance score of data in  $D_{{recal}}$ will be accurately updated.

  \item \textbf{Step 3: Score and Embedding Propagation: }
    In lines 11–16, we recompute the importance scores for samples in $D_{{recal}}$ using the current model, while approximately updating the remaining importance scores in $D_{{train}} \setminus D_{{recal}}$ via Equation~\eqref{eq:update}. Similarly, we further check and update embeddings for selected samples and update the $k$-NN graph accordingly.
\end{itemize}

\normalem
\begin{algorithm}[t]
  \caption{CheckandUpdate}
  \label{algo:dynamic}
  \KwIn{Graph $G$, model $\mathcal{M}_{\boldsymbol{\theta}_t}$, graph score update check threshold $\delta$, embedding update check threshold $\delta_h$, coreset trained for last steps $\mathcal{D}_{core}$, update momentum $\alpha$, recompute budget $k_{recal}$}
  \KwOut{Updated Graph $G$}
  $\mathcal{S}_t \gets \text{RandomSample}(\mathcal{D}_{core}, n_s)$ \\
  $\Delta^I \gets$  Equation~\eqref{eq:check} \DontPrintSemicolon \tcp*{Update Checking}
  \If{$\Delta^I>\delta$} {
    $\{s_{update}(X_i)\}_{X_i\in \mathcal{D}_{train}} \gets \text{Equation~\eqref{eq:update_select}}$ \\
    $\mathcal{D}_{recal} \gets \emptyset$ \\
    \While{$|\mathcal{D}_{recal}| < k_{recal}$} {
      $X^*=\arg\max_{X_i \in \mathcal{D}_{train}} s_{update}(X_i)$ \\
      \If{
        $X^*\notin \mathcal{D}_{recal}, X^*\notin \mathcal{N}(X_j) \ \forall X_j\in \mathcal{D}_{recal}$
      }{
        $\mathcal{D}_{recal} \gets \mathcal{D}_{recal}\cup\{X^*\}$,
        $s_{update}(X^*)\gets 0$ \\
      }
    }

    $\{I^{new}(X_i),\mathbf{H}_i^{new}\} \gets$ \textsf{ExtractFeature}($\mathcal{M}_{\boldsymbol{\theta}_t}, \mathcal{D}_{recal}$) \\
    $\{I^{new}(X_i)\}_{X_i\in\mathcal{D}_{train}\backslash\mathcal{D}_{recal}} \gets$ Equation~\eqref{eq:update} \\
    $G \gets \{I^{new}(X_i)\}_{X_i\in\mathcal{D}_{train}}$ \\
    $\Delta^\mathbf{H} \gets$  Equation~\eqref{eq:check_rep} \DontPrintSemicolon \tcp*{Embedding Update Checking}
    \If{$\Delta^\mathbf{H}>\delta_h$} {
      $\{\mathbf{H}_i^{new}\}_{X_i\in\mathcal{D}_{train}\backslash\mathcal{D}_{recal}} \gets$ Equation~\eqref{eq:update_rep} \\
      $G \gets \text{Update $k$-NN structure}$
    }
  }
  \Return Updated Graph $G$
\end{algorithm}
\ULforem

\subsubsection{Adaptive Update Checking}\label{subsubsec:adaptive-check}
To avoid unnecessary recomputation, we adopt an adaptive strategy that triggers graph updates only if significant changes in importance scores are detected. We periodically evaluate the need for recomputation based on the discrepancy between historical and current importance scores. When the training steps reach the check interval $t_c$, we first uniformly sample a subset $\mathcal{S}_t$ from the trained data in the previous steps and evaluate their current importance scores for comparison. For samples $X_i \in \mathcal{S}_t$ with historical scores at previous $c$ steps, we define the average discrepancy score of $\mathcal{S}_t$ that reflects the variation in importance scores as:
\begin{equation}\label{eq:check}
  \Delta^I=\frac{1}{|\mathcal{S}_{t}|}\sum_{X_i\in\mathcal{S}_{t}}\frac{\sum_{t_k=0}^{t_c} \lambda_c^{t_k - t_c}|I(X_i)_{t_k}-I(X_i)_{t_c}|}{\sum_{t_k=0}^{t_c}\lambda_c^{t_k - t_c}I(X_i)_{t_c}}
\end{equation}
\noindent where $\lambda_c=0.99$ is an exponential decay factor that emphasizes recent changes. $I(X_i)$ is the importance score in Equation~\eqref{eq:el2n}. We then compare it to a predefined threshold $\delta$.
Once the discrepancy exceeds the threshold, it means a significant shift in training dynamics, and we start the graph updates. During training, we compute and cache importance scores at each training step.

\subsubsection{Selective Recomputation}\label{subsubsec:selective-recom}

We first adopt a selective recomputation strategy guided by two key principles: (1) reusing historical importance scores for samples whose scores are likely to remain valid during training, and (2) prioritizing representative samples that are located at the boundaries of homogeneous regions in the graph. First, we initially filter out candidates that have recent historical scores, as they are less likely to require updating. To quantify this, we define the staleness score as $\text{sta}(X_i)=t-t_{history}$, which indicates the number of steps since the last update for $X_i$. $t_{history}$ is the training step of the latest update.

Secondly, we also want to de-prioritize samples that are highly similar to their neighbors and prioritize samples that lie at the edge of homogeneous regions. We define the uniqueness score as $\text{uni}(X_i)=|I(X_i)-\frac{\sum_{j\in\mathcal{N}(X_i)}w_{ij}I(X_j)}{\sum_{j\in\mathcal{N}(X_i)}w_{ij}}|$, where $w_{ij}$ is the edge weight in Definition~\ref{define:knngraph}. $\text{uni}(X_i)$ captures the deviation of $X_i$'s current importance score from the weighted average of its neighborhood. Larger $uni(X_i)$ means the node sits in a decision boundary region where a correction would affect many other points.
Combining two scores, we have the selective update determination score as:
\begin{equation}\label{eq:update_select}
  s_{update}(X_i)=\text{uni}(X_i)\cdot\text{sta}(X_i)
\end{equation}
The larger the $s_{update}$, the more strongly the sample is prioritized to recompute the importance score. To ensure diversity and avoid over-concentration in dense regions, we apply a simple greedy algorithm to iteratively select samples with the largest $s_{update}(X_i)$, while simultaneously avoiding the selection of neighbors of already selected samples. This selection process is repeated until a predefined budget $k_{recal}$ is reached, forming a subset of samples $\mathcal{D}_{recal}$ for score and embedding recomputation.

\subsubsection{Approximate Graph Updating}\label{subsubsec:update}

After selective recomputation, we have new scores and representations of the selected samples. We first update the scores via approximate updating and further update the embeddings and the $k$-NN graph.

We propagate updated importance scores to the remaining outdated samples using their local neighborhood information. Given the updated scores $I^\text{new}(X_{i})$ for each node $X_i$ in $G$, we identify the nodes whose neighbors have updated scores and are sufficiently similar to their updated neighbors. Concretely, For every non-anchor node $X_i \notin \mathcal{D}_{\text{recal}}$, we consider only those graph neighbors $X_j$ that are in $D_{\text{recal}}$ and discard neighbors that are far in representation space or that have historically disagreed in difficulty. We define the affinity between $X_i$ and its neighbor $X_j$ as:
\begin{equation}\label{eq:aff}
  \text{aff}(i,j)=w_{ij}\cdot \exp(-0.1|I(X_i)-I(X_j)|)
\end{equation}
\noindent where $w_{ij}$ is the edge weight in Definition~\ref{define:knngraph}. If any of the neighbors has $\text{aff}(i,j)>\beta$ for a threshold $\beta$, we include it for propagation.

Let $\mathcal{N}_i^\star=\{j\in\mathcal{N}(X_i), \text{aff}(i,j)>\beta\}$ be the set of neighbors of $X_i$
that passed the threshold.
We update $X_i$'s score by solving a simple quadratic objective. This objective balances its old score with those of its most similar neighbors as follows:

\begin{align*}
  E(X_i) = \frac{1}{2} \big(I^{new}(X_i) - I(X_i)\big)^2 \nonumber + \frac{1}{2}
  \sum_{j \in \mathcal{N}_i^\star}
  w_{ij}\big(I^{new}(X_i) - I^{new}(X_j)\big)^2
\end{align*}

where $I^{\text{new}}(X_j)$ is the recomputed score of neighbor $X_j $, and $w_{ij}$ is the edge weight.  Define $\alpha_{ij}=\frac{w_{ij}}{\sum_{j\in\mathcal{N}_i^\star}w_{ij}}, \sum_{j\in\mathcal{N}_i^\star}\alpha_{ij}=1$, the closed-form solution of this strictly convex objective is exactly the following weighted average:

\begin{align}\label{eq:update}
  I^{new}(X_i)= & \frac{1}{2} I(X_i) + \frac{1}{2}\alpha_{ij}I^{\text{new}}(X_j)
\end{align}
Then, we apply this update to the neighbors of each $X_i$ that satisfy the affinity condition.
After the update is complete, we reconstruct the coreset based on the updated scores and resume training.

Instead of using the same procedure for updating the representation, we have an additional check for the embedding update. Given the old embeddings as well as the recomputed embeddings in Section~\ref{subsubsec:selective-recom}, we check the average embedding shift for these embeddings by:
\begin{equation}\label{eq:check_rep}
  \Delta^\mathbf{H}=\frac{1}{|\mathcal{D}_{recal}|}\sum_{X_i\in\mathcal{D}_{recal}}\|\bar{\mathbf{H}}_i-\bar{\mathbf{H}}_i^{new}\|
\end{equation}
where $\bar{\mathbf{H}}_i\in\mathbb{R}^d$ denotes the mean hidden state representation obtained by $\bar{\mathbf{H}}_i=\frac{1}{T}\sum_{t=1}^T\mathbf{H}_i[t]$, where $\mathbf{H}_i \in\mathbb{R}^{T\times d}$ represents the hidden states from the model.
The score is then compared with a given threshold $\delta_h$. If the shift is insignificant, we skip the embedding update and retain the graph unchanged. Otherwise, we perform the representation update using the same strategy as the scores. For each embedding recomputed, we directly replace the embeddings with the new embeddings. We adopt the same similarity check as Equation~\eqref{eq:check} and propagate the embedding via:
\begin{equation}\label{eq:update_rep}
  \bar{\mathbf{H}}_i^{new}=\frac{1}{2} \bar{\mathbf{H}}_i + \frac{1}{2}\frac{\sum_{j\in\mathcal{N}(X_i), \text{aff}(i,j)>\beta}w_{ij}\bar{\mathbf{H}}_j}{\sum_{j\in\mathcal{N}(X_i),\text{aff}(i,j)>\beta}w_{ij}}
\end{equation}

Updating the representations affects the previously constructed $k$-NN graph. However, rebuilding the entire graph after each update would incur $O(n^2d)$ cost and becomes unnecessary when only a subset of nodes changes. Therefore, GRACE performs a local graph repair strategy that updates only the changed nodes and their affected neighborhoods. Specifically, we use an LSH-based approximate neighbor index to retrieve a small candidate set for updated embeddings, and then refine the final neighbors by exact distance computation. In this way, the graph can be maintained efficiently without recomputing the full $k$-NN structure.

\begin{theorem}[Bounded Error of the Selective Update, Short version]\label{thm:error-bound-short}
  Consider the update round as defined in Equation~\eqref{eq:update}.
  Then for every $X_i$, we have \textbf{Change Stability}: The per-round change of any node is bounded, and \textbf{Local Approximation Error bound}: The local approximation methods error is bounded.
\end{theorem}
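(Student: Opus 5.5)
The plan is to fix one update round and a non-anchor node $X_i\notin\mathcal{D}_{recal}$, and to write the propagated score in its intended convex-combination form
\begin{equation*}
I^{new}(X_i)=\tfrac12 I(X_i)+\tfrac12\sum_{j\in\mathcal{N}_i^\star}\alpha_{ij}I^{new}(X_j),\qquad \alpha_{ij}\ge 0,\ \sum_{j\in\mathcal{N}_i^\star}\alpha_{ij}=1 .
\end{equation*}
This is the weighted average that Equation~\eqref{eq:update} abbreviates. If $\mathcal{N}_i^\star=\emptyset$, the node is untouched and both claims are trivial. Anchors in $\mathcal{D}_{recal}$ have exact recomputed scores, so they contribute zero approximation error.

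\textbf{Change stability.}
Subtracting $I(X_i)$ gives
\begin{equation*}
I^{new}(X_i)-I(X_i)=\tfrac12\sum_j\alpha_{ij}\bigl(I^{new}(X_j)-I(X_j)\bigr)+\tfrac12\sum_j\alpha_{ij}\bigl(I(X_j)-I(X_i)\bigr).
\end{equation*}
The first term is at most $\tfrac12\max_{j\in\mathcal{D}_{recal}}|I^{new}(X_j)-I(X_j)|$. For the second term, I would use the affinity filter: $\mathrm{aff}(i,j)>\beta$ together with $w_{ij}\le 1$ gives $\exp(-0.1|I(X_i)-I(X_j)|)>\beta$, hence $|I(X_i)-I(X_j)|<10\ln(1/\beta)$. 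Therefore the per-round change satisfies
\begin{equation*}
|I^{new}(X_i)-I(X_i)|\le \tfrac12\max_{j}|\Delta_j|+5\ln(1/\beta),
\end{equation*}
where $\Delta_j$ is the anchor change. A simpler bound is also available: since $I^{new}(X_i)$ lies in the convex hull of old and new scores, and the min–max normalized scores lie in $[0,1]$, the change is at most $1/2$ in normalized units. For embeddings, Equation~\eqref{eq:update_rep} has the same structure, and the affinity condition gives $w_{ij}>\beta$, i.e.\ $\|\bar{\mathbf{H}}_i-\bar{\mathbf{H}}_j\|^2<100\ln(1/\beta)$. The analogous bound follows with $\tfrac12\cdot 10\sqrt{\ln(1/\beta)}$ in place of the score-gap term.

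\textbf{Local approximation error.}
Let $I^\ast(X_i)$ denote the true score under the current $\boldsymbol{\theta}_t$, and set $e_i=I^{new}(X_i)-I^\ast(X_i)$. Then
\begin{equation*}
e_i=\tfrac12\bigl(I(X_i)-I^\ast(X_i)\bigr)+\tfrac12\sum_j\alpha_{ij}\bigl(I^\ast(X_j)-I^\ast(X_i)\bigr),
\end{equation*}
using $I^{new}(X_j)=I^\ast(X_j)$ for anchors. The first term is the staleness drift, which I bound by $\tfrac12\epsilon_{drift}$; the trigger threshold $\delta$ in Equation~\eqref{eq:check} controls it empirically, or one can assume a per-step drift rate multiplied by $\mathrm{sta}(X_i)$. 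For the second term, I would assume that the score is $L_I$-Lipschitz in the mean hidden state, which is suggested by the logit-gradient dependence in Theorem~\ref{thm:select}. Then $|I^\ast(X_j)-I^\ast(X_i)|\le L_I\|\bar{\mathbf{H}}_i-\bar{\mathbf{H}}_j\|\le 10L_I\sqrt{\ln(1/\beta)}$, which yields
\begin{equation*}
|e_i|\le \tfrac12\epsilon_{drift}+5L_I\sqrt{\ln(1/\beta)} .
\end{equation*}

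\textbf{Main obstacle.}
The hard step is turning the local approximation bound into something rigorous. It requires a smoothness assumption on $I^\ast$ over the embedding space and control of the drift of non-recomputed nodes between checks. Neither follows from the algorithm itself, so I would state both explicitly as hypotheses, namely Lipschitzness and bounded drift. If updates are iterated across neighbors, the factor $\tfrac12$ makes the propagation a contraction, and the errors sum geometrically to at most twice the one-step bound.
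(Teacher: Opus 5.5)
Your proposal is correct and is essentially the paper's own argument. You read Equation~\eqref{eq:update} as a convex combination over $\mathcal{N}_i^\star$, subtract $I(X_i)$ (resp.\ $I^\star(X_i)$), use $I^{new}(X_j)=I^\star(X_j)$ on anchors, and assume $I^\star$ is locally Lipschitz in the mean hidden state; the paper reads off $|I^{new}(X_i)-I(X_i)|\le\frac12\max_{j\in\mathcal{N}_i^\star}|I^\star(X_j)-I(X_i)|$ from this and proves $|e_i^{new}|\le\frac12|e_i^{old}|+\frac12(L_h r_i+\varepsilon)$. Your differences are refinements, not a new route: the affinity filter $\mathrm{aff}(i,j)>\beta$ turns the data-dependent $r_i$ and score gaps into explicit $\beta$-dependent constants, which also makes the paper's separate LSH term $\varepsilon$ unnecessary because the filter constrains whichever neighbours are actually used, and an explicit drift hypothesis replaces the paper's retained $\frac12|e_i^{old}|$.
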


\begin{proof}[Proof Sketch]
  Due to space limit, we put the full proof in our Appendix.
\end{proof}

Theorem~\ref{thm:error-bound-short} supports our approximation strategy by showing that selective updates through similar neighbors keep the update error controlled. Therefore, our graph-based selective update can maintain locally accurate importance scores and embeddings while recomputing only a subset of samples.

\noindent\underline{\textbf{Time Complexity Analysis.}}
Suppose the LLM has $L$ layers with a hidden dimension of $d$, the average data length is $T$, the data size is $N$, the update checking size is $|\mathcal{S}_t|$, and the recomputation size is $|\mathcal{D}_{recal}|$. The time complexity is $O(|\mathcal{S}_t|L(d^2T+dT^2))$ for adaptive checking, $O(N + |\mathcal{D}_{recal}|L(d^2T+dT^2))$ for selective recomputation, $O(|\mathcal{D}_{recal}|nd)$ for score and embedding propagation, and $O(|\mathcal{D}_{recal}|(\log N + nd))$ for $k$-NN updating. In summary, the time complexity for Algorithm~\ref{algo:dynamic} is $O(|\mathcal{S}_t||\mathcal{D}_{recal}|L(d^2T+dT^2)+N+|\mathcal{D}_{recal}|(\log N + 2nd))$.

\section{Experiments}\label{sec:experiments}

In this section, we present a comprehensive empirical evaluation of GRACE. We first introduce the experimental settings in Section~\ref{subsec:exp-setting}, including datasets, tasks, model backbones, baseline methods, evaluation metrics, and implementation details. We then report the main results on MathInstruct in Section~\ref{subsec:main-exp}, analyzing both effectiveness and end-to-end efficiency. Next, we conduct ablation studies to isolate the contributions of key components in GRACE in Section~\ref{subsec:ablation} and perform a systematic sensitivity analysis under multiple selection budgets, balance coefficients, and update schedules in Section~\ref{subsec:sensitivity}. Finally, in Section~\ref{subsec:addition-exp}, we evaluate GRACE on two additional instruction-tuning benchmarks to assess its generalization across domains and tasks.

\subsection{Experimental Settings}\label{subsec:exp-setting}

\subsubsection{Datasets and Evaluation Procedure}
We use three benchmarks from different tasks and fields: MathInstruct~\cite{yue2024mammoth}, BioInstruct~\cite{10.1093/jamia/ocae122}, and DialogSum~\cite{chen-etal-2021-dialogsum} for evaluation.

\noindent\underline{\textit{\textbf{MathInstruct}}}. MathInstruct~\cite{yue2024mammoth} contains 262K high-quality instances for fine-tuning on math problems, which are constructed from 14 different math-related data sources to provide wide coverage of various math fields and difficulty levels. It contains several tasks related to mathematics, including code generation, reasoning, and question answering.

Following the settings of~\cite{yue2024mammoth}, we evaluate the performance on both in-domain and out-of-domain open-ended generation tasks from the MathInstruct~\cite{yue2024mammoth} benchmark. Here, the in-domain tasks consist of test sets from three math datasets whose training portions are included in the MathInstruct training mixture, while the out-of-domain tasks consist of three standalone math reasoning benchmarks that are never used for training and are only used for evaluation.

In-domain tasks primarily evaluate the training quality on training the distribution, while out-of-domain tasks assess the model’s ability to generalize beyond the training distribution. Together, these benchmarks offer a comprehensive assessment across a range of mathematical problems, such as arithmetic, algebra, and commonsense reasoning.

For evaluation, following~\cite{yangSmallToLargeS2LScalable2024,yue2024mammoth}, each evaluation task is assessed using \textit{\textbf{exact match accuracy}}, which measures whether the final output exactly matches the ground-truth solution.

We use the same procedure as the original paper~\cite{yue2024mammoth}, where all evaluations are conducted under the zero-shot setting and adopt the Program-of-Thought prompting approach~\cite{chen2021evaluating} as the primary strategy. If the generated code is not executable, we fall back to Chain-of-Thought prompting~\cite{wei2022chain}. This encourages the model to give numerical answers directly after reasoning steps, imitating a human's step-by-step reasoning process.

\noindent\underline{\textit{\textbf{BioInstruct and DialogSum}}}. In Section~\ref{subsec:addition-exp}, we also conduct experiments on two additional datasets. BioInstruct~\cite{10.1093/jamia/ocae122} contains biomedical question answering problems. With an instruction-input-output triplet format, the dataset evaluates a wide range of scenarios in medical and biomedical question answering tasks, such as diagnostic analysis and clinical decision making. DialogSum~\cite{chen-etal-2021-dialogsum} contains comprehensive dialog samples of daily-life scenarios extracted from open dialog repositories, each paired with a corresponding summarization reference. We use the split settings from~\cite{wang2025datawhispererefficientdata} that perform a $9:1$ train–test split on datasets.

For BioInstruct and DialogSum datasets, we use the same generation prompt at evaluation as in training with a zero-shot setting. We evaluate the generated output using the ROUGE-L metric~\cite{lin-2004-rouge}, which measures the comprehensiveness of answers compared to human references. Typically, we measure the F1 score of the ROUGE-L metric. The higher the F1 score, The better the QA answers or summarized text match the reference.

\subsubsection{Models and Baselines}

We evaluate GRACE on supervised fine-tuning tasks with LoRA~\cite{hu2022lora} using Phi-2~\cite{li2023textbooks}, Llama-2-7b~\cite{touvron2023llama2openfoundation} and Qwen2.5-7b~\cite{qwen2.5} models. We compare GRACE with several baselines that employ diverse selection methods and strategies. We set up the following selection types: static and dynamic.

\noindent\underline{\textit{\textbf{Static selection:}}}
Static selection methods select one fixed coreset for the entire training procedure. We compare with the following static selection baselines:
\begin{itemize}[leftmargin=*]
  \item \textbf{Random}. It randomly samples a subset from the training dataset.
  \item \textbf{Middle Perplexity (MP)}~\cite{anknerPerplexedPerplexityPerplexityBased2024,marionWhenLessMore2023}. It selects samples with loss values that are closest to the median of all loss values.
  \item \textbf{Facility Location (FL)}~\cite{bhattExperimentalDesignFramework2024}. It selects samples to maximize similarity to all samples in the embedding space by solving a facility location problem.
  \item \textbf{DivIR}~\cite{zhouLoBaSSGaugingLearnability2023}. It selects samples that have the largest reduction in loss value after warm-up training.
  \item \textbf{Long4Align}~\cite{zhaoLongMoreAlignment2024}. It selects samples with the longest sequential completion lengths.
  \item \textbf{GradN}~\cite{paulDeepLearningData2021}. It selects samples with the largest gradient values from the last layer.
  \item \textbf{Least Confidence (LC)} \cite{bhattExperimentalDesignFramework2024}. It selects samples that have the smallest output probabilities of tokens multiplied over the entire generated sequences.
  \item \textbf{TAGCOS}~\cite{zhangTAGCOSTaskagnosticGradient2024}. It uses Orthogonal Matching Pursuit to solve the gradient matching problem~\eqref{eq:gradient_matching} directly to select a coreset.
\end{itemize}

\begin{table*}[ht]
  \centering
  \caption{MathInstruct results with 10\% training budget. We report in-domain (GSM8K, MATH, NumGlue) and out-of-domain (SVAMP, DeepMind, SimuLeq) average accuracies (\%). }
  \label{tab:mathinstruct-10-all}
  \resizebox{\linewidth}{!}{%
    \begin{tabular}{c|c|cc|cc|cc}
      \hline
      \multirow{2}{*}{Type} & \multirow{2}{*}{\textbf{Methods}}
      & \multicolumn{2}{c|}{\textbf{Phi-2}}
      & \multicolumn{2}{c|}{\textbf{Llama2-7B}}
      & \multicolumn{2}{c}{\textbf{Qwen2.5-7B}} \\
      \cline{3-8}
      & & In-domain & Out-domain & In-domain & Out-domain & In-domain & Out-domain \\
      \hline
      - & Base (Pretrained Model)
      & 19.66 & 22.96 & 3.39 & 2.56 & 47.73 & 58.39 \\
      \hline
      \multirow{8}{*}{Static}
      & Random                         & 48.35 & 50.79 & 22.66 & 17.57 & 60.66 & 63.39 \\
      & Middle Perplexity              & 47.46 & 47.55 & 22.90 & 18.33 & 61.02 & 62.07 \\
      & Facility Location              & 46.16 & 48.89 & 21.01 & 15.54 & 60.66 & 65.26 \\
      & DivIR                          & 47.58 & 52.16 & 23.33 & 21.90 & 61.34 & 64.23 \\
      & Long4Align                     & 42.54 & 36.11 & 15.16 & 10.61 & 59.91 & 64.96 \\
      & GradN                          & 43.00 & 39.89 & 19.30 & 15.18 & 60.54 & 60.64 \\
      & Least Confidence               & 46.93 & 45.49 & 22.34 & 18.13 & 61.47 & 59.57 \\
      & TAGCOS                         & 43.59 & 42.94 & 19.44 & 16.12 & 56.24 & 54.64 \\
      \hline
      \multirow{5}{*}{Dynamic}
      & Fix Interval Random            & 47.00 & 50.07 & 20.65 & 18.80 & 60.06 & 61.14 \\
      & Fix Interval Middle Perplexity & 46.92 & 47.31 & 22.24 & 17.05 & 60.17 & 61.41 \\
      & Fix Interval Facility Location & 47.38 & 48.88 & 22.13 & 17.44 & 61.23 & 61.57 \\
      & kMQ                            & 45.90 & 48.26 & 22.47 & 16.21 & 60.85 & 59.73 \\
      & GRACE                          & \textbf{50.54} & \textbf{55.27}
      & \textbf{23.89} & \textbf{24.05}
      & \textbf{62.19} & \textbf{67.12} \\
      \hline
    \end{tabular}
  }
\end{table*}

\noindent\underline{\textit{\textbf{Dynamic selection:}}}
Dynamic selection methods reconstruct a new coreset before the start of every epoch. We establish the following baselines for comparison with dynamic selection methods:
\begin{itemize}[leftmargin=*]
  \item \textbf{Fix Interval Random}. It randomly samples a subset from the training dataset before starting each epoch.
  \item \textbf{Fix Interval Middle Perplexity (DynamicMP)}. Before starting each epoch, it recalculates the loss values of all samples and selects samples with loss values that are closest to the median of all values.
  \item \textbf{Fix Interval Facility Location (DynamicFL)}. Before starting each epoch, it re-extracts the representations for all data samples and selects samples to maximize representation coverage over the embeddings by solving a facility location problem.
  \item \textbf{kMQ}~\cite{yuDiversifyConquerDiversityCentric2024}. It first extracts embeddings for the data and clusters the embeddings into several clusters. Before each epoch, it randomly samples from clusters with weights and uses the scores to adjust the weights of each cluster before the next selection.
\end{itemize}

\subsubsection{Implementation Details}

For a fair comparison, all methods are compared under the same constrained budget of training steps, such as 10\% of the full training steps. Static selection methods first warm up the model for 10\% of the full training steps, then use it to extract scores, embeddings, or gradient vectors, and select a single coreset of size $b$. Dynamic selection methods first extract scores or embeddings using the current model before each epoch, and then build a coreset for the upcoming epoch. The selection size at each update is bounded by $b$.

For model training, we apply LoRA~\cite{hu2022lora} with rank 128, scale factor 512, and dropout rate 0.05. Models are trained for 3 epochs with a batch size of 128. We use learning rates of 2e-5 for Phi-2 and 4e-5 for Llama-2-7b and Qwen2.5-7b, with a linear learning rate scheduler with a 3\% warm-up stage. In GRACE, we use FAISS~\cite{johnson2019billion} to construct a $k$-NN graph and build the LSH index with $k=10$. The sensitivity of the selection budget $\eta$, the balance control hyperparameter $\lambda$, the check threshold $\delta$, and the check interval $t_c$ will be discussed in Section~\ref{subsec:sensitivity}.

\subsection{Main Results}\label{subsec:main-exp}

\subsubsection{Effectiveness Evaluation}

Table~\ref{tab:mathinstruct-10-all} summarizes the in-domain and out-of-domain average accuracy on MathInstruct under a 10\% training budget for all selection strategies. GRACE is the only method that consistently achieves the best in-domain and out-of-domain averages across all three backbones, showing that our coreset construction transfers well across different model architectures and task distributions.

Compared to static coreset baselines, uncertainty-based methods such as Middle Perplexity, GradN, and Least Confidence, as well as gradient-based TAGCOS, exhibit substantial performance variation across models and between in-domain and out-of-domain tasks. This suggests that their signals do not generalize robustly. This also shows that relying on a single indicator is insufficient to provide stable gains in the multitask setting. Dynamic methods that naively recalculate static criteria at fixed intervals do not uniformly improve upon their corresponding static versions and, in some cases, may even degrade performance, indicating that simple updates can introduce noise into the training dynamics. Overall, GRACE outperforms both static and dynamic alternatives on average across all evaluation regimes.

\subsubsection{Efficiency Evaluation}
We measure and report the end-to-end running time for all methods under a consistent 10\% training budget on the MathInstruct dataset, using the same hardware and optimizer settings. The reported time combines three stages: (1) the warm-up phase used by baseline methods that require an initial model to extract indicators, (2) the indicator extraction step for all scores and embeddings needed by the baselines and GRACE, and (3) the subsequent model-training phase, which includes on-the-fly indicator recalculation and coreset updates for dynamic methods.

As Figure~\ref{fig:time} shows, the uncertainty-based and the geometric-based static methods are the fastest overall because they perform feature or score extraction only once at the beginning and then train on a fixed coreset without further updates. However, the gradient-based method TAGCOS uses noticeably more time than all other methods because it needs to compute and store full gradient vectors over the training set, which is substantially more expensive than extracting forward-pass indicators. For dynamic baselines, the methods that require full feature or score extraction incur higher runtime, as they need to refresh their coresets at fixed intervals. Despite adding extra overhead through warm-up and adaptive updates over static settings, GRACE is still more efficient than entirely dynamic strategies, which are costly due to repeated recomputation and coreset re-selection. This highlights that GRACE achieves a well-balanced trade-off between dynamic settings and computational efficiency.

\begin{figure}[t]
  \centering
  \includegraphics[width=\linewidth]{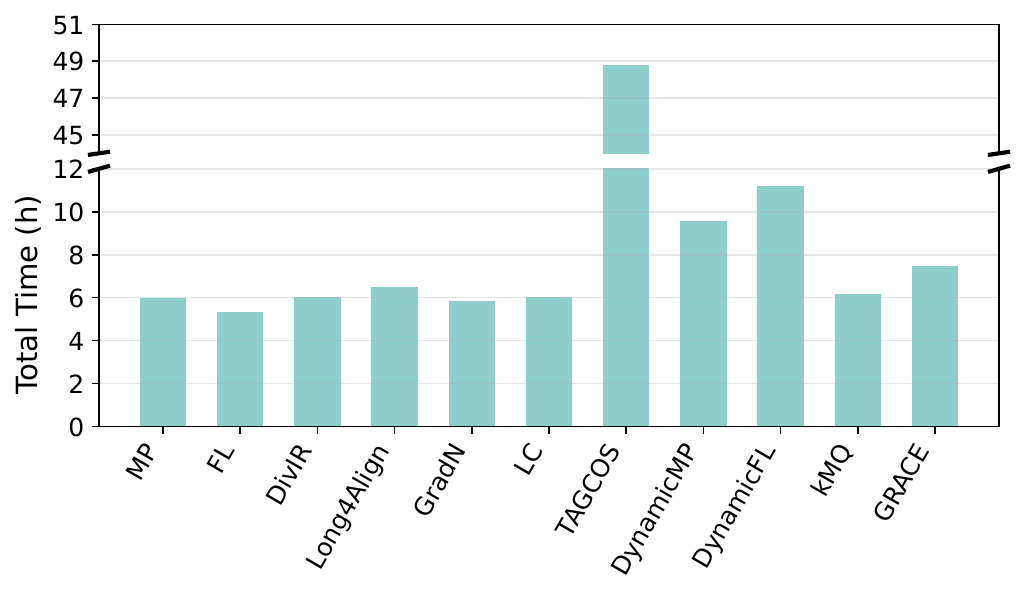}
  \setlength{\abovecaptionskip}{-10pt}
  \caption{Time Comparison}
  \label{fig:time}
  \vspace{-1em}
\end{figure}

\begin{figure*}

  \centering
  \includegraphics[width=\linewidth]{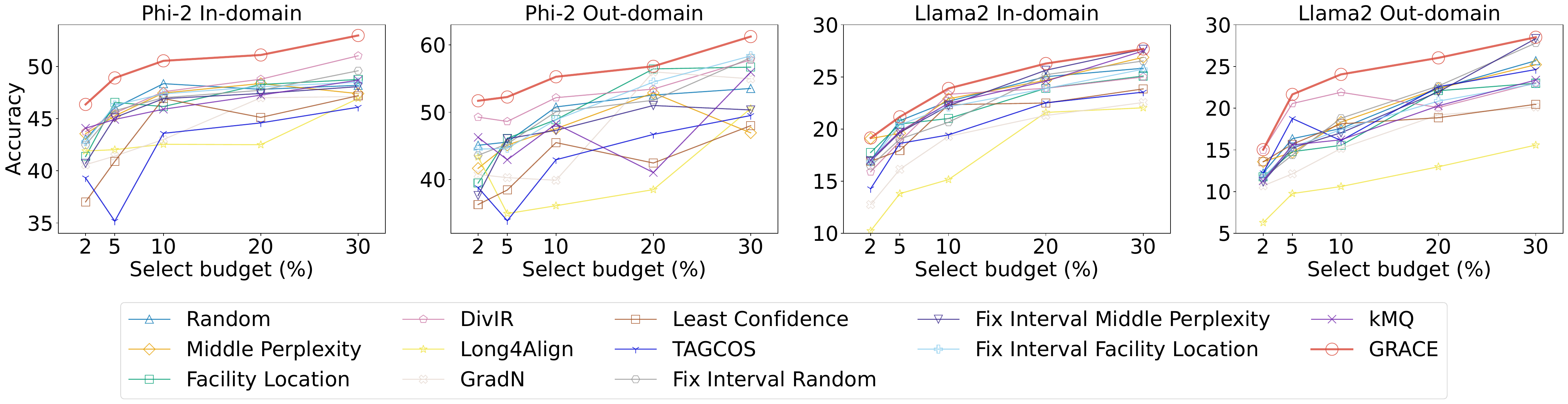}
  \vspace{-2em}
  \caption{Experiment results under different budgets on MathInstruct}
  \label{fig:math}
  \vspace{-1em}
\end{figure*}

\begin{figure}[tbp]
  \centering
  \begin{minipage}[t]{0.45\linewidth}
    \centering
    \includegraphics[width=\linewidth]{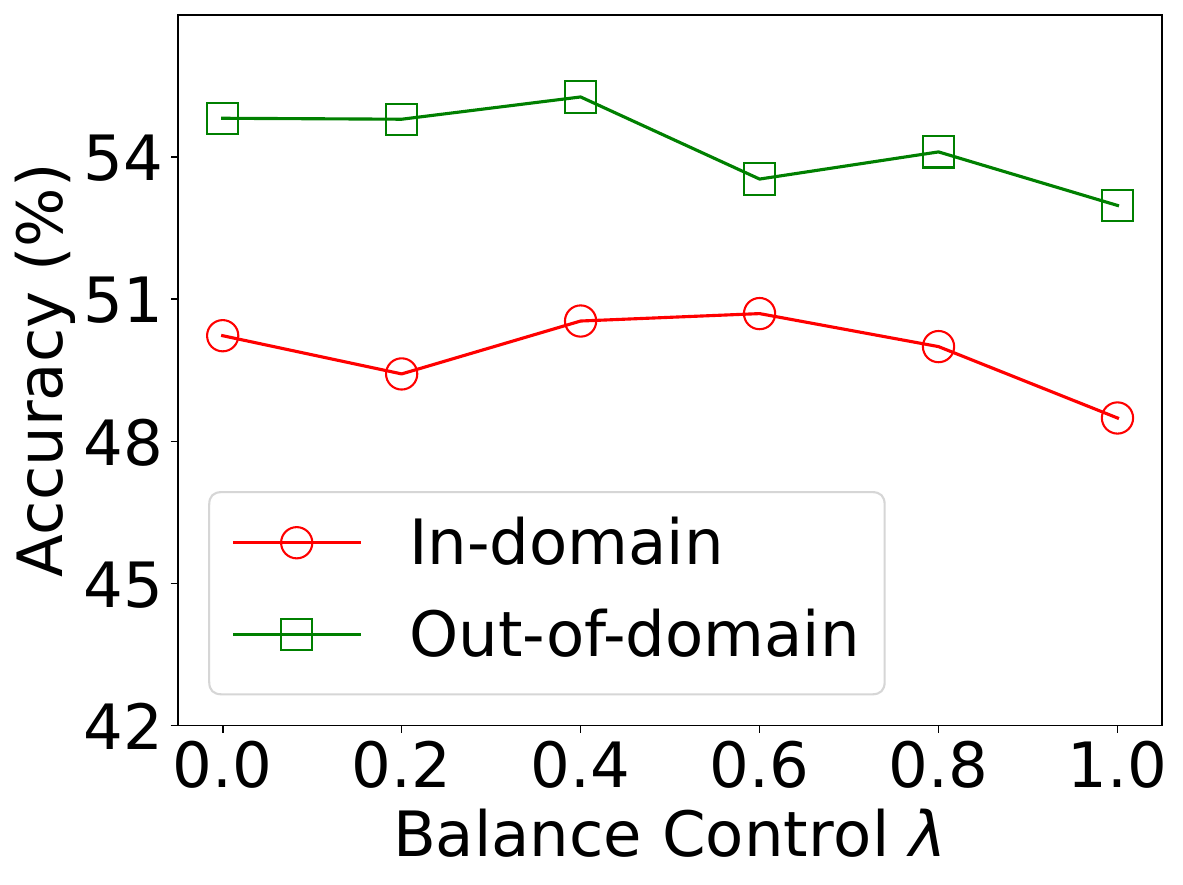}
    \vspace{-2em}
    \caption{ $\lambda$ for Phi-2}
    \label{fig:lambda-phi}
  \end{minipage}
  \hfill
  \begin{minipage}[t]{0.45\linewidth}
    \centering
    \includegraphics[width=\linewidth]{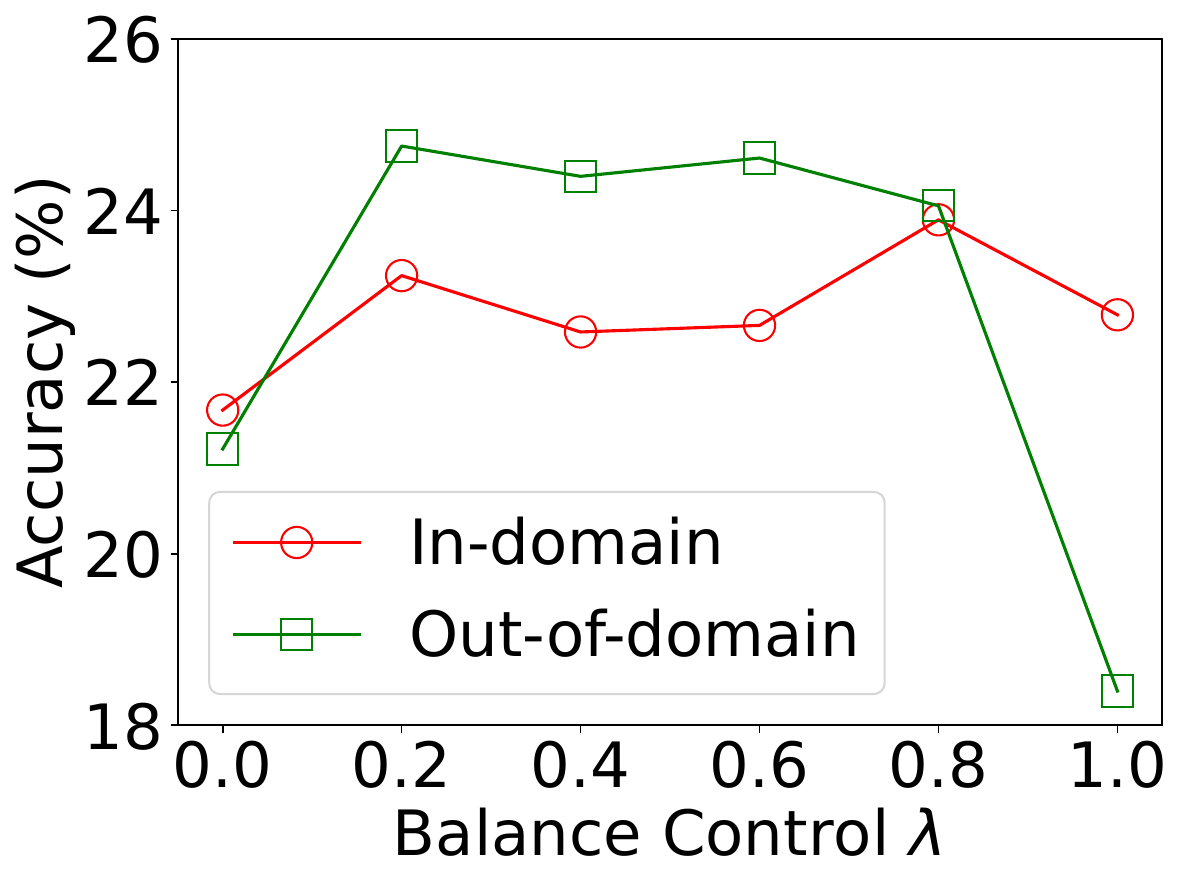}
    \vspace{-2em}
    \caption{ $\lambda$ for Llama2-7b}
    \label{fig:lambda-llama}
  \end{minipage}
  \label{fig:lambda}
  \vspace{-1em}
\end{figure}

\subsection{Ablation Study}\label{subsec:ablation}

In this section, we apply several ablation variants of GRACE to test the components and designs:
\begin{itemize}[leftmargin=*]
  \item \textbf{GRACE\textbackslash R}: GRACE excludes the representation score $R(\cdot)$ and only uses the importance score $\hat{I}(\cdot)$.
  \item \textbf{GRACE\textbackslash I}: GRACE excludes the importance score $\hat{I}(\cdot)$ and only uses the representation score $R(\cdot)$.
  \item \textbf{GRACE\textbackslash Update}: GRACE without any coreset update strategy.
  \item \textbf{GRACE\textbackslash AdaUpdate}: GRACE without the adaptive coreset update strategy. Coreset is updated at every interval.
\end{itemize}

As shown in Table~\ref{tab:ablation}, \textbf{GRACE\textbackslash R} suffers from a performance drop in generalization according to the performance decrease on out-of-domain tasks. \textbf{GRACE\textbackslash I} results in an even sharper performance drop, showing that informativeness is a critical signal for coreset selection. Using both components jointly achieves the highest overall accuracy, confirming their complementarity.

In addition, incorporating the update strategy is essential for coreset effectiveness, as \textbf{GRACE\textbackslash Update} shows significant performance degradation, particularly on out-of-domain tasks. On the other hand, \textbf{GRACE\textbackslash AdaUpdate} yields only marginal improvements or even worse results compared to the adaptive strategy. GRACE adopts an adaptive strategy triggered only when training dynamics shift significantly and can achieve the best performance, confirming that well-timed, signal-driven updates are more effective than frequent but unconditioned ones.

\begin{table}[ht]
  \caption{Ablation Experiments to test the Effect of Scores. \textit{In} means in-domain tasks, and \textit{Out} means out-of-domain tasks.}
  \label{tab:ablation}
  \begin{tabular}{l|c|c|c|c}
    \hline
    \multirow{2}{*}{Model} & \multicolumn{2}{c|}{Phi-2}      & \multicolumn{2}{c}{Llama-2}                  \\
    \cline{2-5}
    & In     & Out     & In                & Out      \\
    \hline
    \textbf{G$\cdot$\textbackslash I}            & 48.49          & 52.98          & 22.78                     & 18.40           \\
    \textbf{G$\cdot$\textbackslash R}            & 50.23          & 54.82          & 21.67                     & 21.22           \\
    \textbf{G$\cdot$\textbackslash Update}        & 48.79          & 51.54          & 23.42          & 23.62           \\
    \textbf{G$\cdot$\textbackslash AdaUpdate}     & 50.08          & 53.03          & 23.19          & 22.57           \\
    \hline
    \textbf{GRACE}                  & \textbf{50.54} & \textbf{55.27} & \textbf{23.89}            & \textbf{24.05}  \\
    \hline
  \end{tabular}
  \vspace{-1em}
\end{table}
\subsection{Parameter Sensitivity}\label{subsec:sensitivity}
In this section, we systematically probe key hyper-parameters of GRACE, including the training budget $\eta$, representation-importance balance $\lambda$, adaptive checking threshold $\delta$, and checking interval $t_c$, to reveal how each one influences performance. We conduct sensitivity experiments on the Phi-2 and Llama2-7b models.

\subsubsection{Selection budget $\eta$}
Figure~\ref{fig:math} reports the accuracy of GRACE and all baseline methods on Phi-2 and Llama2-7b under selection budgets $\eta\in\{2\%,5\%,10\%,20\%,30\%\}$, for both in-domain and out-of-domain tasks. Across models, domains, and budgets, GRACE is consistently the best performing methods, demonstrating its robustness to budget constraints. In the extremely low-budget regime, where many static and dynamic baselines suffer sharp drops in performance, GRACE degrades much more gracefully and still secures substantial gains, showing its effectiveness under severe training constraints. Compared to static methods, GRACE achieves a more stable improvement as the budget increases, suggesting that dynamic settings help to better align the selected coreset with the evolving model needs. While dynamic baselines do benefit from periodic updates, they still lag behind GRACE in most settings, and sometimes they show fluctuations in performance when the budget increases. This indicates that while periodic updates do help selection strategies improve performance, they may suffer from problems due to simple update schedules or less effective selection criteria. Overall, the figure illustrates that GRACE combines strong performance across different budgets with stable, monotonic gains as more data is allowed, achieving robust coreset quality.

\subsubsection{Balance Control $\lambda$}
We vary the balancing coefficient $\lambda$ from Equation~\eqref{eq:target} as $\lambda\in \{0, 0.2, 0.4, 0.6, 0.8, 1\}$ and plot the relationship between $\lambda$ and performance on both Phi-2 and Llama2-7b models, referring to Figure~\ref{fig:lambda-phi} and Figure~\ref{fig:lambda-llama}. Both models exhibit a non-monotonic trend, where the performance initially improves when $\lambda$ increases from $0$ and peaks around intermediate values before degrading as $\lambda$ approaches 1. This pattern confirms that neither component alone is sufficient. Typically, Llama2-7b shows sharper performance variance according to the change in $\lambda$, which might be because larger models may be more sensitive to the balance between diversity and informativeness due to their higher capacity to exploit subtle distributional coverage.

\begin{figure}[tbp]
  \centering
  \begin{minipage}[t]{0.45\linewidth}
    \centering
    \includegraphics[width=\linewidth]{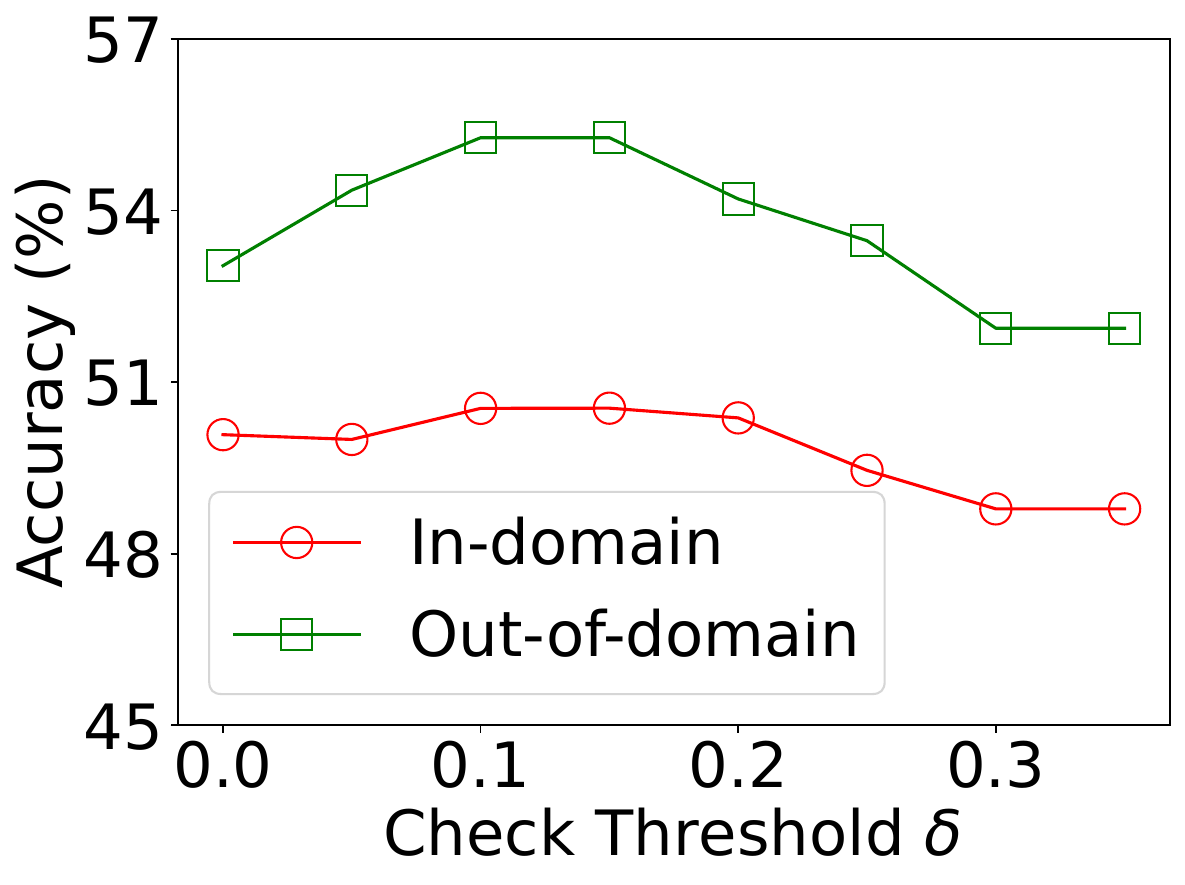}
    \vspace{-2em}
    \caption{$\delta$ for Phi-2}
    \label{fig:delta-phi}
  \end{minipage}
  \hfill
  \begin{minipage}[t]{0.45\linewidth}
    \centering
    \includegraphics[width=\linewidth]{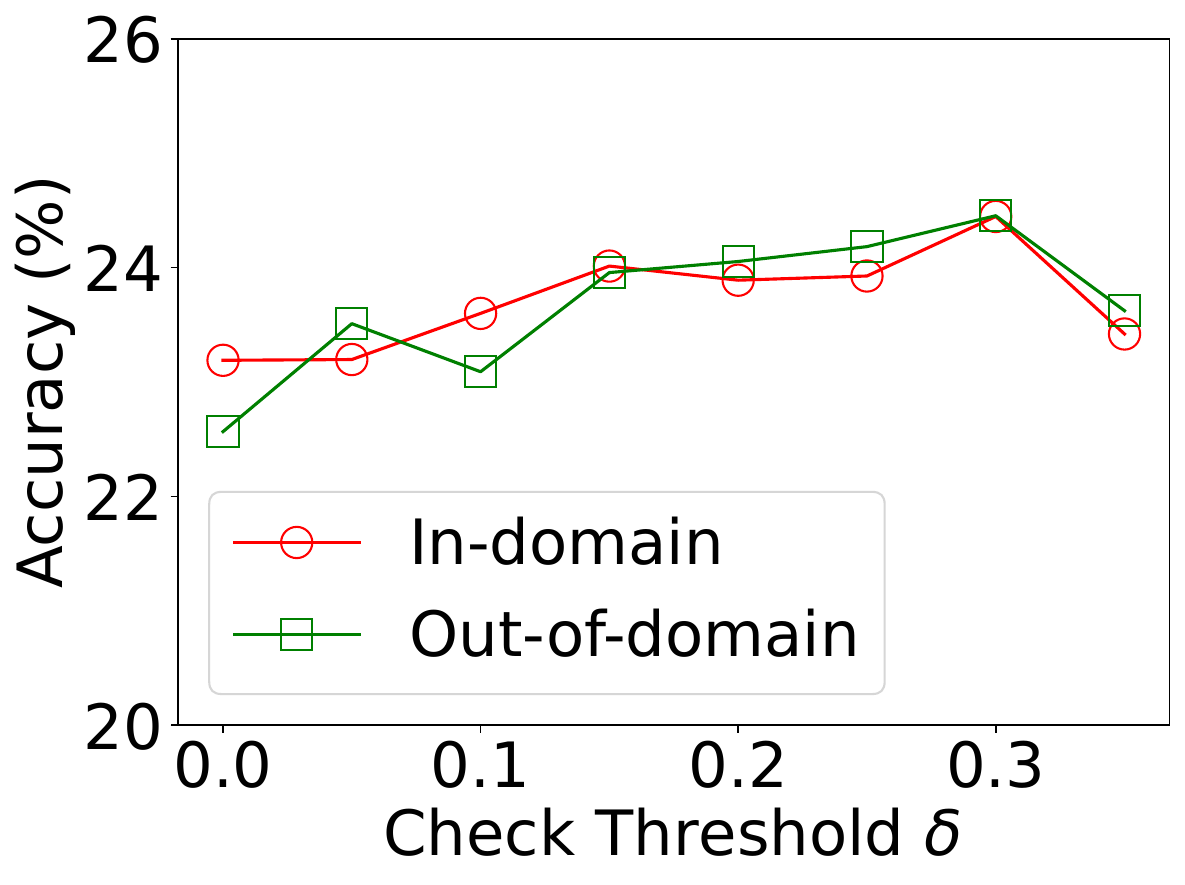}
    \vspace{-2em}
    \caption{$\delta$ for Llama2-7b}
    \label{fig:delta-llama}
  \end{minipage}
  \label{fig:delta}
  \vspace{-1em}
\end{figure}

\begin{figure}[tbp]
  \centering
  \begin{minipage}[t]{0.45\linewidth}
    \centering
    \includegraphics[width=\linewidth]{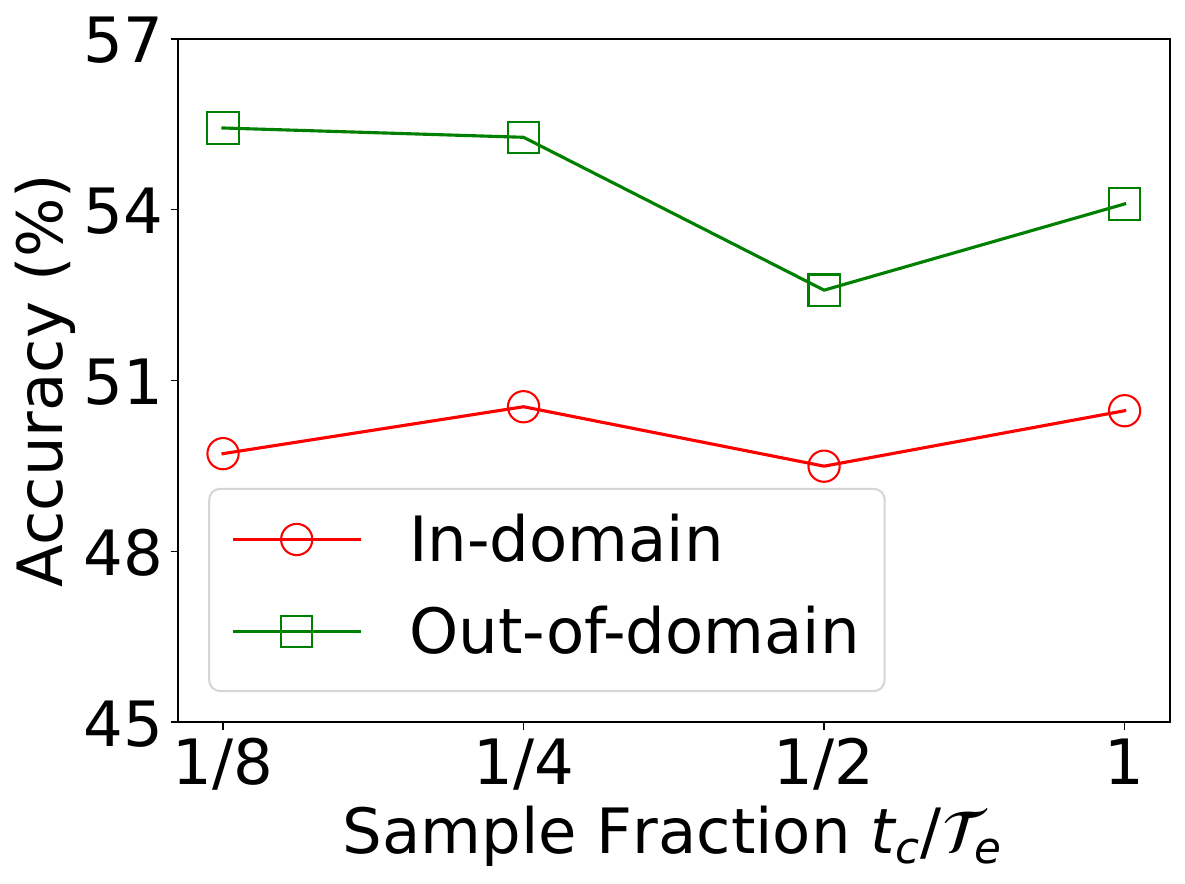}
    \vspace{-2em}
    \caption{$t_c/\mathcal{T}_e$ for Phi-2}
    \label{fig:tc-phi}
  \end{minipage}
  \hfill
  \begin{minipage}[t]{0.45\linewidth}
    \centering
    \includegraphics[width=\linewidth]{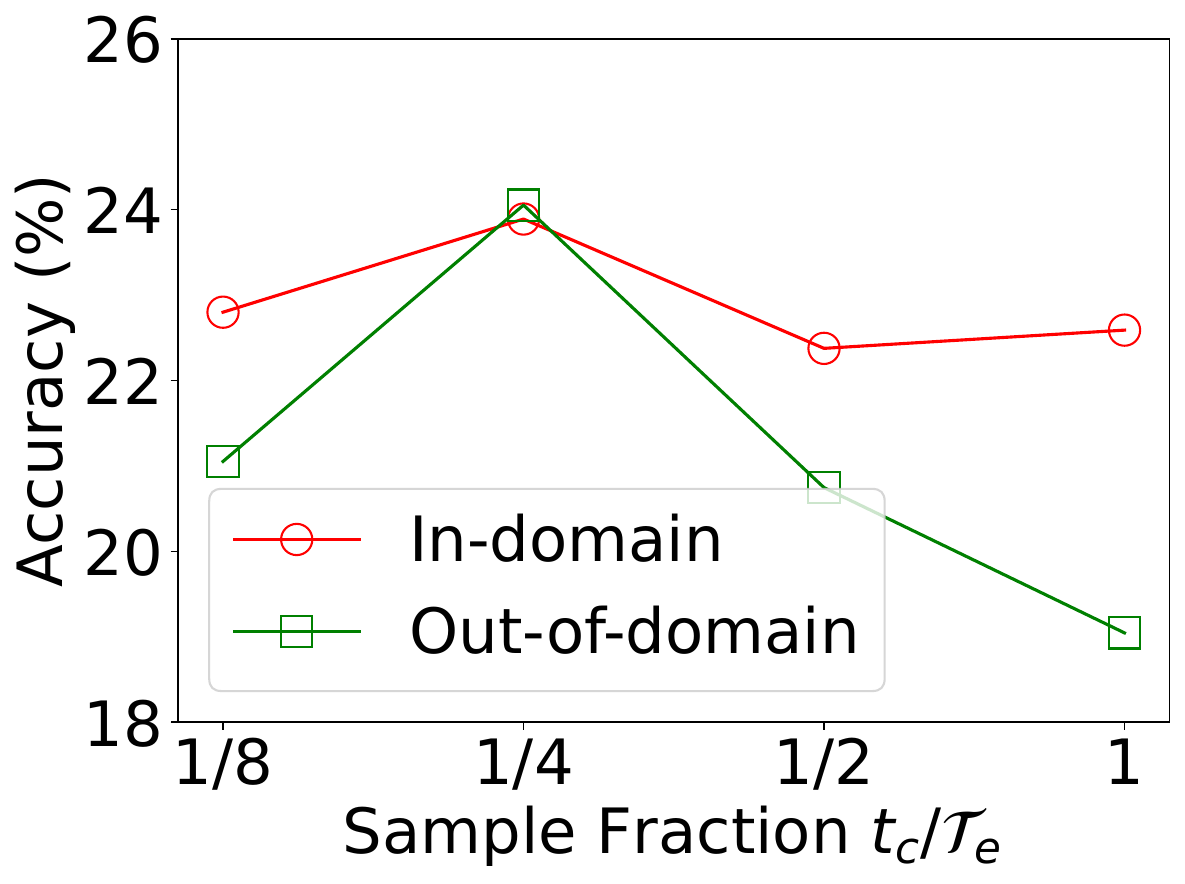}
    \vspace{-2em}
    \caption{$t_c/\mathcal{T}_e$ for Llama2}
    \label{fig:tc-llama}
  \end{minipage}
  \label{fig:tc}
  \vspace{-1em}
\end{figure}

\begin{table*}
  \centering
  \caption{ROUGE-L results on BioInstruct}
  \label{tab:bioinstruct}
  \resizebox{\linewidth}{!}{%
    \begin{tabular}{c|c|c|c|l|c|c|l|c|c|l}
      \hline
      \multirow{2}{*}{Type}    & \textbf{Models}                & \multicolumn{3}{c|}{\textbf{Llama2-7b}}       & \multicolumn{3}{c|}{\textbf{Phi-2}}           & \multicolumn{3}{c}{\textbf{Qwen2.5-7b}}        \\
      \cline{2-11}
      & \textbf{Selection Budget}      & \textbf{10\%} & \textbf{20\%} & \textbf{30\%} & \textbf{10\%} & \textbf{20\%} & \textbf{30\%} & \textbf{10\%} & \textbf{20\%} & \textbf{30\%}  \\
      \hline
      -                        & Base (Pretrained Model)        & \multicolumn{3}{c|}{19.70}                    & \multicolumn{3}{c|}{34.77}                    & \multicolumn{3}{c}{23.62}                     \\
      \hline
      \multirow{7}{*}{Static}  & Random                         & 32.57         & 33.38         & 33.47         & 33.63         & 23.13         & 20.72         & 38.10         & 37.03         & 36.85          \\
      & Middle Perplexity              & 32.62         & 32.81         & 33.10         & 31.24         & 23.36         & 16.77         & 38.98         & 38.43         & 38.07          \\
      & Facility Location              & 32.97         & 33.02         & 32.85         & 35.57         & 37.85         & 35.16         & \uline{39.41}         & 37.19         & 37.62          \\
      & DivIR                          & 31.83         & 33.36         & 33.34         & 36.91         & 33.93         & 27.43         & 38.47         & 38.47         & \uline{38.75}          \\
      & Long4Align                     & 28.29         & 29.07         & 29.80         & 33.54         & 33.66         & 34.61         & 33.29         & 33.23         & 33.67          \\
      & GradN                          & 30.77         & 33.44         & 33.60         & 32.21         & 29.29         & 22.11         & 38.91         & 38.18         & 37.37          \\
      & Least Confidence               & 31.52         & 31.54         & 32.01         & 34.52         & 30.01         & 34.64         & 34.82         & 34.40         & 34.58          \\
      & TAGCOS                         & 31.33         & \uline{33.78}         & \uline{34.12}         & 33.67         & 36.59         & 34.58         & 38.78         & \uline{38.69}         & 38.50          \\
      \hline
      \multirow{5}{*}{Dynamic} & Fix Interval Random            & 33.06         & 33.18         & 33.05         & 32.72         & 24.46         & 17.95         & 38.09         & 37.13         & 37.30          \\
      & Fix Interval Middle Perplexity & \uline{33.25}         & 32.79         & 33.02         & 31.48         & 34.96         & \uline{37.51}         & 38.94         & 38.05         & 38.30          \\
      & Fix Interval Facility Location & 32.81         & 32.75         & 33.04         & \uline{37.77}         & \uline{38.49}         & 31.62         & 37.22         & 35.99         & 36.03          \\
      & kMQ                            & 33.08         & 32.89         & 33.58         & 32.72         & 30.09         & 33.01         & 39.02         & 37.74         & 37.17          \\
      \cline{2-11}
      & GRACE           &   \textbf{33.58}   &    \textbf{33.97}           &       \textbf{34.14}        &      \textbf{39.61}        &      \textbf{40.00}         &       \textbf{39.80}        &       \textbf{39.71}        &      \textbf{39.15}         &      \textbf{38.96}          \\
      \hline
    \end{tabular}
  }

\end{table*}

\begin{table*}
  \centering
  \caption{ROUGE-L results on DialogSum}
  \label{tab:dialogsum}
  \resizebox{\linewidth}{!}{%
    \begin{tabular}{c|c|c|c|l|c|c|l|c|c|l}
      \hline
      \multirow{2}{*}{Type}    & \textbf{Models}                & \multicolumn{3}{c|}{\textbf{Llama2-7b}}       & \multicolumn{3}{c|}{\textbf{Phi-2}}           & \multicolumn{3}{c}{\textbf{Qwen2.5-7b}}        \\
      \cline{2-11}
      & \textbf{Selection Budget}      & \textbf{10\%} & \textbf{20\%} & \textbf{30\%} & \textbf{10\%} & \textbf{20\%} & \textbf{30\%} & \textbf{10\%} & \textbf{20\%} & \textbf{30\%}  \\
      \hline
      -                        & Base (Pretrained Model)        & \multicolumn{3}{c|}{15.92}                    & \multicolumn{3}{c|}{22.24}                    & \multicolumn{3}{c}{16.97}                     \\
      \hline
      \multirow{7}{*}{Static}  & Random                         & 37.38         & 37.98        & 38.86         & 36.06         & 37.47         & 37.67         & 37.93         & 38.99         & 39.28          \\
      & Middle Perplexity              & 37.12         & 38.35         & 38.55         & 36.35         & 37.23         & 37.68         & 37.63         & 39.07         & 39.19          \\
      & Facility Location              & 37.46         & 38.52         & 39.21         & 36.22         & 36.76         & 38.44         & 38.12         & 39.28         & 38.99          \\
      & DivIR                          & 37.52         & 38.54         & 39.11         & \uline{37.22}         & \uline{38.78}         & \uline{39.13}         & \uline{39.61}         & \uline{40.14}         & \uline{40.05}          \\
      & Long4Align                     & 32.80         & 34.48         & 35.66         & 31.16         & 33.92         & 34.90         & 32.55         & 34.75         & 35.87          \\
      & GradN                          & 36.90         & 37.91         & \uline{39.28}         & 35.02         & 35.80         & 37.10         & 37.38         & 38.45         & 38.66          \\
      & Least Confidence               & 36.80         & 38.02         & 38.19         & 36.41         & 36.99         & 37.72         & 37.83         & 38.64         & 39.12          \\
      & TAGCOS                         & 37.48         & 37.38         & 38.62         & 35.51         & 36.84         & 38.01         & 38.23         & 38.97         & 39.50          \\
      \hline
      \multirow{5}{*}{Dynamic} & Fix Interval Random            & 36.23         & 38.02         & 39.14         & 36.01         & 36.77         & 37.97         & 38.84         & 38.81         & 39.35          \\
      & Fix Interval Middle Perplexity & \uline{37.55}         & \uline{38.48}         & 39.00         & 35.85         & 36.93         & 37.49         & 38.06         & 39.20         & 39.41          \\
      & Fix Interval Facility Location & 35.91         & 37.47         & 37.59         & 35.56         & 36.96         & 38.09         & 37.81         & 38.92         & 39.18          \\
      & kMQ                            & 36.46         & 38.38         & 38.91         & 35.33         & 37.27         & 37.93         & 37.59         & 39.24         & 39.34          \\
      \cline{2-11}
      & GRACE    &  \textbf{37.63}    &      \textbf{38.70}         &        \textbf{39.38}       &         \textbf{37.61}      &      \textbf{38.91}        &     \textbf{39.39}          &       \textbf{39.86}       &      \textbf{40.31}        &        \textbf{40.29}        \\
      \hline
    \end{tabular}
  }
\end{table*}

\subsubsection{Checking Threshold $\delta$}
In Figure~\ref{fig:delta-phi} and~\ref{fig:delta-llama}, we present how GRACE performs under different settings of the checking threshold $\delta$ in Equation~\eqref{eq:check}. With varying $\delta\in[0,0.35]$ at a step of 0.05, we can see that extremely low thresholds ($\delta \approx 0$) result in frequent updates, similar to a strategy that is not adaptive. In contrast, high thresholds (i.e., $\delta \geq 0.3$) suppress updates and reduce the method to a static strategy. We can see that the best performance is observed under moderate thresholds, indicating that infrequent but strategically timed updates are adequate for a dynamic coreset strategy. This confirms that excessive updates may cause instability in the coreset, whereas insufficient updates result in outdated importance scores and degrade performance. Thus, a balanced choice matters.

\subsubsection{Checking interval $t_c$ and Sample Fraction $t_c/\mathcal{T}_e$}
To enable more adaptive coreset updates, we partition the training process into multiple subsets, allowing update checks and reselection to occur more flexibly. Suppose training steps in one epoch $\mathcal{T}_e$, we define the Sample Fraction as the ratio of the checking interval $t_c$ to $\mathcal{T}_e$, computed by $t_c/\mathcal{T}_e$. Here, we vary $t_c/\mathcal{T}_e\in\{0.125,0.25,0.5,1\}$. We then analyze how different $t_c/\mathcal{T}_e$ affects overall performance, as shown in Figure~\ref{fig:tc-phi} and~\ref{fig:tc-llama}. We observe that in-domain performance is relatively stable across different interval settings. However, using a moderate check interval tends to outperform both frequent and coarse settings. In addition, out-of-domain accuracy appears to be more sensitive to the chunk size. These results show that a balanced update interval is preferable for maximizing overall performance.

\subsection{Evaluation on Additional Datasets}\label{subsec:addition-exp}
We further evaluate GRACE on two additional benchmarks to assess its generalization beyond mathematical problems. Specifically, we consider a domain-specific QA dataset, BioInstruct, and a dialog summarization dataset, DialogSum, following the same training protocol as in the main experiments, and evaluate on three budgets.

\subsubsection{Effectiveness Evaluation}
Table~\ref{tab:bioinstruct} and Table~\ref{tab:dialogsum} show the performance comparison on two additional datasets for QA and summarization tasks. GRACE outperforms all the static and dynamic baselines at comparable budgets. On BioInstruct, the behavior across budgets is quite heterogeneous. The performance may drop significantly even if the selection and training budget is increased. This suggests that on this relatively small, domain-specific QA dataset, simply adding more data does not always help if the selection criterion only focuses on very easy or very hard samples. GRACE runs remain aligned with best baselines and often improve upon them, demonstrating the powerful adaptive ability under different conditions. On DialogSum, the overall task appears easier for all methods, and the gap between strong baselines is narrower. In this regime, GRACE yields consistent but moderate improvements. For all three models, GRACE outperforms the best method at all budgets. Overall, these two datasets show that GRACE generalizes beyond mathematical problems to other tasks, such as natural language question answering and summarization.

\section{Conclusion}\label{sec:conclusion}

In this paper, we propose GRACE, a dynamic coreset selection framework for efficient large language model training. Combining the representation diversity and gradient-based importance metrics with a $k$-NN graph-based update mechanism, GRACE adapts coresets to evolving training dynamics while reducing computational costs. Experiments on Llama2-7b, Phi-2 and Qwen2.5-7b models across three benchmarks demonstrate that GRACE outperforms baselines in both efficiency and performance, making it a scalable and effective solution for resource-constrained LLM training.

%%
%% The next two lines define the bibliography style to be used, and
%% the bibliography file.
\balance
\bibliographystyle{ACM-Reference-Format}
\bibliography{grace}

%%% -*-BibTeX-*-
%%% Do NOT edit. File created by BibTeX with style
%%% ACM-Reference-Format-Journals [18-Jan-2012].

\begin{thebibliography}{85}

%%% ====================================================================
%%% NOTE TO THE USER: you can override these defaults by providing
%%% customized versions of any of these macros before the \bibliography
%%% command.  Each of them MUST provide its own final punctuation,
%%% except for \shownote{} and \showURL{}.  The latter two
%%% do not use final punctuation, in order to avoid confusing it with
%%% the Web address.
%%%
%%% To suppress output of a particular field, define its macro to expand
%%% to an empty string, or better, \unskip, like this:
%%%
%%% \newcommand{\showURL}[1]{\unskip}   % LaTeX syntax
%%%
%%% \def \showURL #1{\unskip}           % plain TeX syntax
%%%
%%% ====================================================================

\ifx \showCODEN    \undefined \def \showCODEN     #1{\unskip}     \fi
\ifx \showISBNx    \undefined \def \showISBNx     #1{\unskip}     \fi
\ifx \showISBNxiii \undefined \def \showISBNxiii  #1{\unskip}     \fi
\ifx \showISSN     \undefined \def \showISSN      #1{\unskip}     \fi
\ifx \showLCCN     \undefined \def \showLCCN      #1{\unskip}     \fi
\ifx \shownote     \undefined \def \shownote      #1{#1}          \fi
\ifx \showarticletitle \undefined \def \showarticletitle #1{#1}   \fi
\ifx \showURL      \undefined \def \showURL       {\relax}        \fi
% The following commands are used for tagged output and should be
% invisible to TeX
\providecommand\bibfield[2]{#2}
\providecommand\bibinfo[2]{#2}
\providecommand\natexlab[1]{#1}
\providecommand\showeprint[2][]{arXiv:#2}

\bibitem[Acharya et~al\mbox{.}(2024)]%
        {acharyaBalancingFeatureSimilarity2024}
\bibfield{author}{\bibinfo{person}{Abhinab Acharya}, \bibinfo{person}{Dayou Yu}, \bibinfo{person}{Qi Yu}, {and} \bibinfo{person}{Xumin Liu}.} \bibinfo{year}{2024}\natexlab{}.
\newblock \showarticletitle{Balancing {{Feature Similarity}} and {{Label Variability}} for {{Optimal Size-Aware One-shot Subset Selection}}}. In \bibinfo{booktitle}{\emph{Forty-First {{International Conference}} on {{Machine Learning}}}}.
\newblock


\bibitem[AI(2023)]%
        {touvron2023llama2openfoundation}
\bibfield{author}{\bibinfo{person}{Meta AI}.} \bibinfo{year}{2023}\natexlab{}.
\newblock \bibinfo{title}{Llama 2: Open Foundation and Fine-Tuned Chat Models}.
\newblock
\showeprint[arxiv]{2307.09288}~[cs.CL]
\urldef\tempurl%
\url{https://arxiv.org/abs/2307.09288}
\showURL{%
\tempurl}


\bibitem[AI(2024)]%
        {dubey2024llama3herdmodels}
\bibfield{author}{\bibinfo{person}{Meta AI}.} \bibinfo{year}{2024}\natexlab{}.
\newblock \bibinfo{title}{The Llama 3 Herd of Models}.
\newblock
\showeprint[arxiv]{2407.21783}~[cs.AI]
\urldef\tempurl%
\url{https://arxiv.org/abs/2407.21783}
\showURL{%
\tempurl}


\bibitem[Albalak et~al\mbox{.}(2024)]%
        {albalak2024a}
\bibfield{author}{\bibinfo{person}{Alon Albalak}, \bibinfo{person}{Yanai Elazar}, \bibinfo{person}{Sang~Michael Xie}, \bibinfo{person}{Shayne Longpre}, \bibinfo{person}{Nathan Lambert}, \bibinfo{person}{Xinyi Wang}, \bibinfo{person}{Niklas Muennighoff}, \bibinfo{person}{Bairu Hou}, \bibinfo{person}{Liangming Pan}, \bibinfo{person}{Haewon Jeong}, \bibinfo{person}{Colin Raffel}, \bibinfo{person}{Shiyu Chang}, \bibinfo{person}{Tatsunori Hashimoto}, {and} \bibinfo{person}{William~Yang Wang}.} \bibinfo{year}{2024}\natexlab{}.
\newblock \showarticletitle{A Survey on Data Selection for Language Models}.
\newblock \bibinfo{journal}{\emph{Transactions on Machine Learning Research}} (\bibinfo{year}{2024}).
\newblock
\showISSN{2835-8856}
\urldef\tempurl%
\url{https://openreview.net/forum?id=XfHWcNTSHp}
\showURL{%
\tempurl}
\newblock
\shownote{Survey Certification}.


\bibitem[Ankner et~al\mbox{.}(2024)]%
        {anknerPerplexedPerplexityPerplexityBased2024}
\bibfield{author}{\bibinfo{person}{Zachary Ankner}, \bibinfo{person}{Cody Blakeney}, \bibinfo{person}{Kartik Sreenivasan}, \bibinfo{person}{Max Marion}, \bibinfo{person}{Matthew~L. Leavitt}, {and} \bibinfo{person}{Mansheej Paul}.} \bibinfo{year}{2024}\natexlab{}.
\newblock \bibinfo{title}{Perplexed by {{Perplexity}}: {{Perplexity-Based Data Pruning With Small Reference Models}}}.
\newblock
\href{https://doi.org/10.48550/ARXIV.2405.20541}{doi:\nolinkurl{10.48550/ARXIV.2405.20541}}


\bibitem[Arora et~al\mbox{.}(2023)]%
        {10.14778/3626292.3626294}
\bibfield{author}{\bibinfo{person}{Simran Arora}, \bibinfo{person}{Brandon Yang}, \bibinfo{person}{Sabri Eyuboglu}, \bibinfo{person}{Avanika Narayan}, \bibinfo{person}{Andrew Hojel}, \bibinfo{person}{Immanuel Trummer}, {and} \bibinfo{person}{Christopher R\'{e}}.} \bibinfo{year}{2023}\natexlab{}.
\newblock \showarticletitle{Language Models Enable Simple Systems for Generating Structured Views of Heterogeneous Data Lakes}.
\newblock \bibinfo{journal}{\emph{Proc. VLDB Endow.}} \bibinfo{volume}{17}, \bibinfo{number}{2} (\bibinfo{date}{Oct.} \bibinfo{year}{2023}), \bibinfo{pages}{92–105}.
\newblock
\showISSN{2150-8097}
\href{https://doi.org/10.14778/3626292.3626294}{doi:\nolinkurl{10.14778/3626292.3626294}}


\bibitem[Bai et~al\mbox{.}(2024)]%
        {baiMultiAgentCollaborativeData2024}
\bibfield{author}{\bibinfo{person}{Tianyi Bai}, \bibinfo{person}{Ling Yang}, \bibinfo{person}{Zhen~Hao Wong}, \bibinfo{person}{Jiahui Peng}, \bibinfo{person}{Xinlin Zhuang}, \bibinfo{person}{Chi Zhang}, \bibinfo{person}{Lijun Wu}, \bibinfo{person}{Jiantao Qiu}, \bibinfo{person}{Wentao Zhang}, \bibinfo{person}{Binhang Yuan}, {and} \bibinfo{person}{Conghui He}.} \bibinfo{year}{2024}\natexlab{}.
\newblock \bibinfo{title}{Multi-{{Agent Collaborative Data Selection}} for {{Efficient LLM Pretraining}}}.
\newblock
\showeprint[arxiv]{2410.08102}~[cs]
\href{https://doi.org/10.48550/arXiv.2410.08102}{doi:\nolinkurl{10.48550/arXiv.2410.08102}}


\bibitem[Bhatt et~al\mbox{.}(2024)]%
        {bhattExperimentalDesignFramework2024}
\bibfield{author}{\bibinfo{person}{Gantavya Bhatt}, \bibinfo{person}{Yifang Chen}, \bibinfo{person}{Arnav~M. Das}, \bibinfo{person}{Jifan Zhang}, \bibinfo{person}{Sang~T. Truong}, \bibinfo{person}{Stephen Mussmann}, \bibinfo{person}{Yinglun Zhu}, \bibinfo{person}{Jeffrey Bilmes}, \bibinfo{person}{Simon~S. Du}, \bibinfo{person}{Kevin Jamieson}, \bibinfo{person}{Jordan~T. Ash}, {and} \bibinfo{person}{Robert~D. Nowak}.} \bibinfo{year}{2024}\natexlab{}.
\newblock \bibinfo{title}{An {{Experimental Design Framework}} for {{Label-Efficient Supervised Finetuning}} of {{Large Language Models}}}.
\newblock
\showeprint[arxiv]{2401.06692}~[cs]
\href{https://doi.org/10.48550/arXiv.2401.06692}{doi:\nolinkurl{10.48550/arXiv.2401.06692}}


\bibitem[Bottou(2012)]%
        {bottou2012stochastic}
\bibfield{author}{\bibinfo{person}{L{\'e}on Bottou}.} \bibinfo{year}{2012}\natexlab{}.
\newblock \showarticletitle{Stochastic gradient descent tricks}.
\newblock In \bibinfo{booktitle}{\emph{Neural networks: tricks of the trade: second edition}}. \bibinfo{publisher}{Springer}, \bibinfo{pages}{421--436}.
\newblock


\bibitem[Castin et~al\mbox{.}(2024)]%
        {castin2024how}
\bibfield{author}{\bibinfo{person}{Val{\'e}rie Castin}, \bibinfo{person}{Pierre Ablin}, {and} \bibinfo{person}{Gabriel Peyr{\'e}}.} \bibinfo{year}{2024}\natexlab{}.
\newblock \showarticletitle{How Smooth Is Attention?}. In \bibinfo{booktitle}{\emph{Forty-first International Conference on Machine Learning}}.
\newblock
\urldef\tempurl%
\url{https://openreview.net/forum?id=aP0H8A1ywk}
\showURL{%
\tempurl}


\bibitem[Chai et~al\mbox{.}(2023a)]%
        {chaiGoodCoreDataeffectiveDataefficient2023}
\bibfield{author}{\bibinfo{person}{Chengliang Chai}, \bibinfo{person}{Jiabin Liu}, \bibinfo{person}{Nan Tang}, \bibinfo{person}{Ju Fan}, \bibinfo{person}{Dongjing Miao}, \bibinfo{person}{Jiayi Wang}, \bibinfo{person}{Yuyu Luo}, {and} \bibinfo{person}{Guoliang Li}.} \bibinfo{year}{2023}\natexlab{a}.
\newblock \showarticletitle{{{GoodCore}}: {{Data-effective}} and {{Data-efficient Machine Learning}} through {{Coreset Selection}} over {{Incomplete Data}}}.
\newblock \bibinfo{journal}{\emph{Proc. ACM Manag. Data}} \bibinfo{volume}{1}, \bibinfo{number}{2} (\bibinfo{date}{June} \bibinfo{year}{2023}), \bibinfo{pages}{157:1--157:27}.
\newblock
\href{https://doi.org/10.1145/3589302}{doi:\nolinkurl{10.1145/3589302}}


\bibitem[Chai et~al\mbox{.}(2023b)]%
        {chaiEfficientCoresetSelection2023}
\bibfield{author}{\bibinfo{person}{Chengliang Chai}, \bibinfo{person}{Jiayi Wang}, \bibinfo{person}{Nan Tang}, \bibinfo{person}{Ye Yuan}, \bibinfo{person}{Jiabin Liu}, \bibinfo{person}{Yuhao Deng}, {and} \bibinfo{person}{Guoren Wang}.} \bibinfo{year}{2023}\natexlab{b}.
\newblock \showarticletitle{Efficient {{Coreset Selection}} with {{Cluster-based Methods}}}. In \bibinfo{booktitle}{\emph{Proceedings of the 29th {{ACM SIGKDD Conference}} on {{Knowledge Discovery}} and {{Data Mining}}}} \emph{(\bibinfo{series}{{{KDD}} '23})}. \bibinfo{publisher}{Association for Computing Machinery}, \bibinfo{address}{New York, NY, USA}, \bibinfo{pages}{167--178}.
\newblock
\showISBNx{9798400701030}
\href{https://doi.org/10.1145/3580305.3599326}{doi:\nolinkurl{10.1145/3580305.3599326}}


\bibitem[Chen et~al\mbox{.}(2023)]%
        {chenMaybeOnly052023}
\bibfield{author}{\bibinfo{person}{Hao Chen}, \bibinfo{person}{Yiming Zhang}, \bibinfo{person}{Qi Zhang}, \bibinfo{person}{Hantao Yang}, \bibinfo{person}{Xiaomeng Hu}, \bibinfo{person}{Xuetao Ma}, \bibinfo{person}{Yifan Yanggong}, {and} \bibinfo{person}{Junbo Zhao}.} \bibinfo{year}{2023}\natexlab{}.
\newblock \bibinfo{title}{Maybe {{Only}} 0.5\% {{Data}} Is {{Needed}}: {{A Preliminary Exploration}} of {{Low Training Data Instruction Tuning}}}.
\newblock
\showeprint[arxiv]{2305.09246}~[cs]
\href{https://doi.org/10.48550/arXiv.2305.09246}{doi:\nolinkurl{10.48550/arXiv.2305.09246}}


\bibitem[Chen et~al\mbox{.}(2021b)]%
        {chen2021evaluating}
\bibfield{author}{\bibinfo{person}{Mark Chen}, \bibinfo{person}{Jerry Tworek}, \bibinfo{person}{Heewoo Jun}, \bibinfo{person}{Qiming Yuan}, \bibinfo{person}{Henrique Ponde De~Oliveira Pinto}, \bibinfo{person}{Jared Kaplan}, \bibinfo{person}{Harri Edwards}, \bibinfo{person}{Yuri Burda}, \bibinfo{person}{Nicholas Joseph}, \bibinfo{person}{Greg Brockman}, {et~al\mbox{.}}} \bibinfo{year}{2021}\natexlab{b}.
\newblock \showarticletitle{Evaluating large language models trained on code}.
\newblock \bibinfo{journal}{\emph{arXiv preprint arXiv:2107.03374}} (\bibinfo{year}{2021}).
\newblock


\bibitem[Chen et~al\mbox{.}(2021a)]%
        {chen-etal-2021-dialogsum}
\bibfield{author}{\bibinfo{person}{Yulong Chen}, \bibinfo{person}{Yang Liu}, \bibinfo{person}{Liang Chen}, {and} \bibinfo{person}{Yue Zhang}.} \bibinfo{year}{2021}\natexlab{a}.
\newblock \showarticletitle{{D}ialog{S}um: {A} Real-Life Scenario Dialogue Summarization Dataset}. In \bibinfo{booktitle}{\emph{Findings of the Association for Computational Linguistics: ACL-IJCNLP 2021}}, \bibfield{editor}{\bibinfo{person}{Chengqing Zong}, \bibinfo{person}{Fei Xia}, \bibinfo{person}{Wenjie Li}, {and} \bibinfo{person}{Roberto Navigli}} (Eds.). \bibinfo{publisher}{Association for Computational Linguistics}, \bibinfo{address}{Online}, \bibinfo{pages}{5062--5074}.
\newblock
\href{https://doi.org/10.18653/v1/2021.findings-acl.449}{doi:\nolinkurl{10.18653/v1/2021.findings-acl.449}}


\bibitem[Cho et~al\mbox{.}(2025)]%
        {choLightweightDatasetPruning2025}
\bibfield{author}{\bibinfo{person}{Yeseul Cho}, \bibinfo{person}{Baekrok Shin}, \bibinfo{person}{Changmin Kang}, {and} \bibinfo{person}{Chulhee Yun}.} \bibinfo{year}{2025}\natexlab{}.
\newblock \bibinfo{title}{Lightweight {{Dataset Pruning}} without {{Full Training}} via {{Example Difficulty}} and {{Prediction Uncertainty}}}.
\newblock
\showeprint[arxiv]{2502.06905}~[cs]
\href{https://doi.org/10.48550/arXiv.2502.06905}{doi:\nolinkurl{10.48550/arXiv.2502.06905}}


\bibitem[Danilova et~al\mbox{.}(2020)]%
        {danilova2020recent}
\bibfield{author}{\bibinfo{person}{Marina Danilova}, \bibinfo{person}{Pavel Dvurechensky}, \bibinfo{person}{Alexander Gasnikov}, \bibinfo{person}{Eduard Gorbunov}, \bibinfo{person}{Sergey Guminov}, \bibinfo{person}{Dmitry Kamzolov}, {and} \bibinfo{person}{Innokentiy Shibaev}.} \bibinfo{year}{2020}\natexlab{}.
\newblock \bibinfo{title}{Recent Theoretical Advances in Non-Convex Optimization}.
\newblock
\showeprint[arxiv]{2012.06188}~[math.OC]


\bibitem[DeepSeek-AI et~al\mbox{.}(2024)]%
        {deepseekai2024deepseekv3}
\bibfield{author}{\bibinfo{person}{DeepSeek-AI}, \bibinfo{person}{Aixin Liu}, \bibinfo{person}{Bei Feng}, {et~al\mbox{.}}} \bibinfo{year}{2024}\natexlab{}.
\newblock \bibinfo{title}{DeepSeek-V3 Technical Report}.
\newblock
\showeprint[arxiv]{2412.19437}~[cs.CL]


\bibitem[Demidovskij et~al\mbox{.}(2023)]%
        {demidovskijDARELDataReduction2023}
\bibfield{author}{\bibinfo{person}{Alexander~Vladimirovich Demidovskij}, \bibinfo{person}{Aleksei Trutnev}, \bibinfo{person}{Artem Tugarev}, \bibinfo{person}{Igor Salnikov}, {and} \bibinfo{person}{Stanislav Pavlov}.} \bibinfo{year}{2023}\natexlab{}.
\newblock \showarticletitle{{{DAREL}}: {{Data Reduction}} with {{Losses}} for {{Training Acceleration}} of {{Real}} and {{Hypercomplex Neural Networks}}}. In \bibinfo{booktitle}{\emph{Workshop on {{Advancing Neural Network Training}}: {{Computational Efficiency}}, {{Scalability}}, and {{Resource Optimization}} ({{WANT}}@{{NeurIPS}} 2023)}}.
\newblock


\bibitem[Deng et~al\mbox{.}(2024)]%
        {dengInfluentialLanguageData2024}
\bibfield{author}{\bibinfo{person}{Zhiwei Deng}, \bibinfo{person}{Tao Li}, {and} \bibinfo{person}{Yang Li}.} \bibinfo{year}{2024}\natexlab{}.
\newblock \bibinfo{title}{Influential {{Language Data Selection}} via {{Gradient Trajectory Pursuit}}}.
\newblock
\showeprint[arxiv]{2410.16710}
\href{https://doi.org/10.48550/arXiv.2410.16710}{doi:\nolinkurl{10.48550/arXiv.2410.16710}}


\bibitem[Dettmers et~al\mbox{.}(2023)]%
        {dettmers2023qlora}
\bibfield{author}{\bibinfo{person}{Tim Dettmers}, \bibinfo{person}{Artidoro Pagnoni}, \bibinfo{person}{Ari Holtzman}, {and} \bibinfo{person}{Luke Zettlemoyer}.} \bibinfo{year}{2023}\natexlab{}.
\newblock \showarticletitle{Qlora: Efficient finetuning of quantized llms}.
\newblock \bibinfo{journal}{\emph{Advances in neural information processing systems}}  \bibinfo{volume}{36} (\bibinfo{year}{2023}), \bibinfo{pages}{10088--10115}.
\newblock


\bibitem[Fan et~al\mbox{.}(2024)]%
        {10.14778/3681954.3681960}
\bibfield{author}{\bibinfo{person}{Ju Fan}, \bibinfo{person}{Zihui Gu}, \bibinfo{person}{Songyue Zhang}, \bibinfo{person}{Yuxin Zhang}, \bibinfo{person}{Zui Chen}, \bibinfo{person}{Lei Cao}, \bibinfo{person}{Guoliang Li}, \bibinfo{person}{Samuel Madden}, \bibinfo{person}{Xiaoyong Du}, {and} \bibinfo{person}{Nan Tang}.} \bibinfo{year}{2024}\natexlab{}.
\newblock \showarticletitle{Combining Small Language Models and Large Language Models for Zero-Shot NL2SQL}.
\newblock \bibinfo{journal}{\emph{Proc. VLDB Endow.}} \bibinfo{volume}{17}, \bibinfo{number}{11} (\bibinfo{date}{July} \bibinfo{year}{2024}), \bibinfo{pages}{2750–2763}.
\newblock
\showISSN{2150-8097}
\href{https://doi.org/10.14778/3681954.3681960}{doi:\nolinkurl{10.14778/3681954.3681960}}


\bibitem[Feldman(2020)]%
        {feldman2020introduction}
\bibfield{author}{\bibinfo{person}{Dan Feldman}.} \bibinfo{year}{2020}\natexlab{}.
\newblock \bibinfo{title}{Introduction to Core-sets: an Updated Survey}.
\newblock
\showeprint[arxiv]{2011.09384}~[cs.LG]


\bibitem[Giannakouris and Trummer(2025)]%
        {10.1145/3709652}
\bibfield{author}{\bibinfo{person}{Victor Giannakouris} {and} \bibinfo{person}{Immanuel Trummer}.} \bibinfo{year}{2025}\natexlab{}.
\newblock \showarticletitle{$\lambda$-Tune: Harnessing Large Language Models for Automated Database System Tuning}.
\newblock \bibinfo{journal}{\emph{Proc. ACM Manag. Data}} \bibinfo{volume}{3}, \bibinfo{number}{1}, Article \bibinfo{articleno}{2} (\bibinfo{date}{Feb.} \bibinfo{year}{2025}), \bibinfo{numpages}{26}~pages.
\newblock
\href{https://doi.org/10.1145/3709652}{doi:\nolinkurl{10.1145/3709652}}


\bibitem[Hadar et~al\mbox{.}(2024)]%
        {10.14778/3712221.3712249}
\bibfield{author}{\bibinfo{person}{Aviv Hadar}, \bibinfo{person}{Tova Milo}, {and} \bibinfo{person}{Kathy Razmadze}.} \bibinfo{year}{2024}\natexlab{}.
\newblock \showarticletitle{Datamap-Driven Tabular Coreset Selection for Classifier Training}.
\newblock \bibinfo{journal}{\emph{Proc. VLDB Endow.}} \bibinfo{volume}{18}, \bibinfo{number}{3} (\bibinfo{date}{Nov.} \bibinfo{year}{2024}), \bibinfo{pages}{876–888}.
\newblock
\showISSN{2150-8097}
\href{https://doi.org/10.14778/3712221.3712249}{doi:\nolinkurl{10.14778/3712221.3712249}}


\bibitem[He et~al\mbox{.}(2022)]%
        {he2022towards}
\bibfield{author}{\bibinfo{person}{Junxian He}, \bibinfo{person}{Chunting Zhou}, \bibinfo{person}{Xuezhe Ma}, \bibinfo{person}{Taylor Berg-Kirkpatrick}, {and} \bibinfo{person}{Graham Neubig}.} \bibinfo{year}{2022}\natexlab{}.
\newblock \showarticletitle{Towards a Unified View of Parameter-Efficient Transfer Learning}. In \bibinfo{booktitle}{\emph{International Conference on Learning Representations}}.
\newblock
\urldef\tempurl%
\url{https://openreview.net/forum?id=0RDcd5Axok}
\showURL{%
\tempurl}


\bibitem[He et~al\mbox{.}(2024)]%
        {heSHEDShapleyBasedAutomated2024}
\bibfield{author}{\bibinfo{person}{Yexiao He}, \bibinfo{person}{Ziyao Wang}, \bibinfo{person}{Zheyu Shen}, \bibinfo{person}{Guoheng Sun}, \bibinfo{person}{Yucong Dai}, \bibinfo{person}{Yongkai Wu}, \bibinfo{person}{Hongyi Wang}, {and} \bibinfo{person}{Ang Li}.} \bibinfo{year}{2024}\natexlab{}.
\newblock \bibinfo{title}{{{SHED}}: {{Shapley-Based Automated Dataset Refinement}} for {{Instruction Fine-Tuning}}}.
\newblock
\showeprint[arxiv]{2405.00705}~[cs]
\href{https://doi.org/10.48550/arXiv.2405.00705}{doi:\nolinkurl{10.48550/arXiv.2405.00705}}


\bibitem[Hochbaum(1997)]%
        {hochbaum1997approximating}
\bibfield{author}{\bibinfo{person}{Dorit~S Hochbaum}.} \bibinfo{year}{1997}\natexlab{}.
\newblock \showarticletitle{Approximating covering and packing problems: set cover, vertex cover, independent set, and related problems}.
\newblock \bibinfo{journal}{\emph{Approximation algorithms for NP-hard problems}} (\bibinfo{year}{1997}), \bibinfo{pages}{94--143}.
\newblock


\bibitem[Hoffmann et~al\mbox{.}(2022)]%
        {hoffmann2022training}
\bibfield{author}{\bibinfo{person}{Jordan Hoffmann}, \bibinfo{person}{Sebastian Borgeaud}, \bibinfo{person}{Arthur Mensch}, \bibinfo{person}{Elena Buchatskaya}, \bibinfo{person}{Trevor Cai}, \bibinfo{person}{Eliza Rutherford}, \bibinfo{person}{Diego de Las~Casas}, \bibinfo{person}{Lisa~Anne Hendricks}, \bibinfo{person}{Johannes Welbl}, \bibinfo{person}{Aidan Clark}, {et~al\mbox{.}}} \bibinfo{year}{2022}\natexlab{}.
\newblock \showarticletitle{Training compute-optimal large language models}. In \bibinfo{booktitle}{\emph{Proceedings of the 36th International Conference on Neural Information Processing Systems}}. \bibinfo{pages}{30016--30030}.
\newblock


\bibitem[Houlsby et~al\mbox{.}(2019)]%
        {houlsby2019parameter}
\bibfield{author}{\bibinfo{person}{Neil Houlsby}, \bibinfo{person}{Andrei Giurgiu}, \bibinfo{person}{Stanislaw Jastrzebski}, \bibinfo{person}{Bruna Morrone}, \bibinfo{person}{Quentin De~Laroussilhe}, \bibinfo{person}{Andrea Gesmundo}, \bibinfo{person}{Mona Attariyan}, {and} \bibinfo{person}{Sylvain Gelly}.} \bibinfo{year}{2019}\natexlab{}.
\newblock \showarticletitle{Parameter-efficient transfer learning for NLP}. In \bibinfo{booktitle}{\emph{International conference on machine learning}}. PMLR, \bibinfo{pages}{2790--2799}.
\newblock


\bibitem[Hu et~al\mbox{.}(2022)]%
        {hu2022lora}
\bibfield{author}{\bibinfo{person}{Edward~J Hu}, \bibinfo{person}{Yelong Shen}, \bibinfo{person}{Phillip Wallis}, \bibinfo{person}{Zeyuan Allen-Zhu}, \bibinfo{person}{Yuanzhi Li}, \bibinfo{person}{Shean Wang}, \bibinfo{person}{Lu Wang}, \bibinfo{person}{Weizhu Chen}, {et~al\mbox{.}}} \bibinfo{year}{2022}\natexlab{}.
\newblock \showarticletitle{Lora: Low-rank adaptation of large language models.}
\newblock \bibinfo{journal}{\emph{ICLR}} \bibinfo{volume}{1}, \bibinfo{number}{2} (\bibinfo{year}{2022}), \bibinfo{pages}{3}.
\newblock


\bibitem[Indyk and Motwani(1998)]%
        {10.1145/276698.276876}
\bibfield{author}{\bibinfo{person}{Piotr Indyk} {and} \bibinfo{person}{Rajeev Motwani}.} \bibinfo{year}{1998}\natexlab{}.
\newblock \showarticletitle{Approximate nearest neighbors: towards removing the curse of dimensionality}. In \bibinfo{booktitle}{\emph{Proceedings of the Thirtieth Annual ACM Symposium on Theory of Computing}} (Dallas, Texas, USA) \emph{(\bibinfo{series}{STOC '98})}. \bibinfo{publisher}{Association for Computing Machinery}, \bibinfo{address}{New York, NY, USA}, \bibinfo{pages}{604–613}.
\newblock
\showISBNx{0897919629}
\href{https://doi.org/10.1145/276698.276876}{doi:\nolinkurl{10.1145/276698.276876}}


\bibitem[Isik et~al\mbox{.}(2024)]%
        {isikScalingLawsDownstream2024}
\bibfield{author}{\bibinfo{person}{Berivan Isik}, \bibinfo{person}{Natalia Ponomareva}, \bibinfo{person}{Hussein Hazimeh}, \bibinfo{person}{Dimitris Paparas}, \bibinfo{person}{Sergei Vassilvitskii}, {and} \bibinfo{person}{Sanmi Koyejo}.} \bibinfo{year}{2024}\natexlab{}.
\newblock \bibinfo{title}{Scaling {{Laws}} for {{Downstream Task Performance}} of {{Large Language Models}}}.
\newblock
\href{https://doi.org/10.48550/ARXIV.2402.04177}{doi:\nolinkurl{10.48550/ARXIV.2402.04177}}


\bibitem[Joaquin et~al\mbox{.}(2024)]%
        {joaquinIn2CoreLeveragingInfluence2024}
\bibfield{author}{\bibinfo{person}{Ayrton~San Joaquin}, \bibinfo{person}{Bin Wang}, \bibinfo{person}{Zhengyuan Liu}, \bibinfo{person}{Nicholas Asher}, \bibinfo{person}{Brian Lim}, \bibinfo{person}{Philippe Muller}, {and} \bibinfo{person}{Nancy Chen}.} \bibinfo{year}{2024}\natexlab{}.
\newblock \bibinfo{title}{{{In2Core}}: {{Leveraging Influence Functions}} for {{Coreset Selection}} in {{Instruction Finetuning}} of {{Large Language Models}}}.
\newblock
\showeprint[arxiv]{2408.03560}~[cs, stat]
\href{https://doi.org/10.48550/arXiv.2408.03560}{doi:\nolinkurl{10.48550/arXiv.2408.03560}}


\bibitem[Johnson et~al\mbox{.}(2019)]%
        {johnson2019billion}
\bibfield{author}{\bibinfo{person}{Jeff Johnson}, \bibinfo{person}{Matthijs Douze}, {and} \bibinfo{person}{Herv{\'e} J{\'e}gou}.} \bibinfo{year}{2019}\natexlab{}.
\newblock \showarticletitle{Billion-scale similarity search with {GPUs}}.
\newblock \bibinfo{journal}{\emph{IEEE Transactions on Big Data}} \bibinfo{volume}{7}, \bibinfo{number}{3} (\bibinfo{year}{2019}), \bibinfo{pages}{535--547}.
\newblock


\bibitem[Kayali et~al\mbox{.}(2024)]%
        {10.14778/3659437.3659461}
\bibfield{author}{\bibinfo{person}{Moe Kayali}, \bibinfo{person}{Anton Lykov}, \bibinfo{person}{Ilias Fountalis}, \bibinfo{person}{Nikolaos Vasiloglou}, \bibinfo{person}{Dan Olteanu}, {and} \bibinfo{person}{Dan Suciu}.} \bibinfo{year}{2024}\natexlab{}.
\newblock \showarticletitle{Chorus: Foundation Models for Unified Data Discovery and Exploration}.
\newblock \bibinfo{journal}{\emph{Proc. VLDB Endow.}} \bibinfo{volume}{17}, \bibinfo{number}{8} (\bibinfo{date}{April} \bibinfo{year}{2024}), \bibinfo{pages}{2104–2114}.
\newblock
\showISSN{2150-8097}
\href{https://doi.org/10.14778/3659437.3659461}{doi:\nolinkurl{10.14778/3659437.3659461}}


\bibitem[Killamsetty et~al\mbox{.}(2021)]%
        {killamsettyGRADMATCHGradientMatching2021}
\bibfield{author}{\bibinfo{person}{Krishnateja Killamsetty}, \bibinfo{person}{Durga S}, \bibinfo{person}{Ganesh Ramakrishnan}, \bibinfo{person}{Abir De}, {and} \bibinfo{person}{Rishabh Iyer}.} \bibinfo{year}{2021}\natexlab{}.
\newblock \showarticletitle{{{GRAD-MATCH}}: {{Gradient Matching}} Based {{Data Subset Selection}} for {{Efficient Deep Model Training}}}. In \bibinfo{booktitle}{\emph{Proceedings of the 38th {{International Conference}} on {{Machine Learning}}}}. \bibinfo{publisher}{PMLR}, \bibinfo{pages}{5464--5474}.
\newblock
\showISSN{2640-3498}


\bibitem[Lazebnik et~al\mbox{.}(2022)]%
        {10.14778/3574245.3574261}
\bibfield{author}{\bibinfo{person}{Teddy Lazebnik}, \bibinfo{person}{Amit Somech}, {and} \bibinfo{person}{Abraham~Itzhak Weinberg}.} \bibinfo{year}{2022}\natexlab{}.
\newblock \showarticletitle{SubStrat: A Subset-Based Optimization Strategy for Faster AutoML}.
\newblock \bibinfo{journal}{\emph{Proc. VLDB Endow.}} \bibinfo{volume}{16}, \bibinfo{number}{4} (\bibinfo{date}{Dec.} \bibinfo{year}{2022}), \bibinfo{pages}{772–780}.
\newblock
\showISSN{2150-8097}
\href{https://doi.org/10.14778/3574245.3574261}{doi:\nolinkurl{10.14778/3574245.3574261}}


\bibitem[Li et~al\mbox{.}(2024d)]%
        {10.14778/3681954.3682003}
\bibfield{author}{\bibinfo{person}{Boyan Li}, \bibinfo{person}{Yuyu Luo}, \bibinfo{person}{Chengliang Chai}, \bibinfo{person}{Guoliang Li}, {and} \bibinfo{person}{Nan Tang}.} \bibinfo{year}{2024}\natexlab{d}.
\newblock \showarticletitle{The Dawn of Natural Language to SQL: Are We Fully Ready?}
\newblock \bibinfo{journal}{\emph{Proc. VLDB Endow.}} \bibinfo{volume}{17}, \bibinfo{number}{11} (\bibinfo{date}{July} \bibinfo{year}{2024}), \bibinfo{pages}{3318–3331}.
\newblock
\showISSN{2150-8097}
\href{https://doi.org/10.14778/3681954.3682003}{doi:\nolinkurl{10.14778/3681954.3682003}}


\bibitem[Li et~al\mbox{.}(2024a)]%
        {li20242}
\bibfield{author}{\bibinfo{person}{Haoyang Li}, \bibinfo{person}{Shimin Di}, \bibinfo{person}{Lei Chen}, {and} \bibinfo{person}{Xiaofang Zhou}.} \bibinfo{year}{2024}\natexlab{a}.
\newblock \showarticletitle{E2GCL: Efficient and Expressive Contrastive Learning on Graph Neural Networks}. In \bibinfo{booktitle}{\emph{2024 IEEE 40th International Conference on Data Engineering (ICDE)}}. IEEE, \bibinfo{pages}{859--873}.
\newblock


\bibitem[Li et~al\mbox{.}(2024b)]%
        {li2024fight}
\bibfield{author}{\bibinfo{person}{Haoyang Li}, \bibinfo{person}{Shimin Di}, \bibinfo{person}{Calvin Hong~Yi Li}, \bibinfo{person}{Lei Chen}, {and} \bibinfo{person}{Xiaofang Zhou}.} \bibinfo{year}{2024}\natexlab{b}.
\newblock \showarticletitle{Fight Fire with Fire: Towards Robust Graph Neural Networks on Dynamic Graphs via Actively Defense}.
\newblock \bibinfo{journal}{\emph{Proceedings of the VLDB Endowment}} \bibinfo{volume}{17}, \bibinfo{number}{8} (\bibinfo{year}{2024}), \bibinfo{pages}{2050--2063}.
\newblock


\bibitem[Li et~al\mbox{.}(2024c)]%
        {li2024survey}
\bibfield{author}{\bibinfo{person}{Haoyang Li}, \bibinfo{person}{Yiming Li}, \bibinfo{person}{Anxin Tian}, \bibinfo{person}{Tianhao Tang}, \bibinfo{person}{Zhanchao Xu}, \bibinfo{person}{Xuejia Chen}, \bibinfo{person}{Nicole Hu}, \bibinfo{person}{Wei Dong}, \bibinfo{person}{Qing Li}, {and} \bibinfo{person}{Lei Chen}.} \bibinfo{year}{2024}\natexlab{c}.
\newblock \showarticletitle{A survey on large language model acceleration based on kv cache management}.
\newblock \bibinfo{journal}{\emph{arXiv preprint arXiv:2412.19442}} (\bibinfo{year}{2024}).
\newblock


\bibitem[Li et~al\mbox{.}(2024f)]%
        {liQuantityQualityBoosting2024}
\bibfield{author}{\bibinfo{person}{Ming Li}, \bibinfo{person}{Yong Zhang}, \bibinfo{person}{Zhitao Li}, \bibinfo{person}{Jiuhai Chen}, \bibinfo{person}{Lichang Chen}, \bibinfo{person}{Ning Cheng}, \bibinfo{person}{Jianzong Wang}, \bibinfo{person}{Tianyi Zhou}, {and} \bibinfo{person}{Jing Xiao}.} \bibinfo{year}{2024}\natexlab{f}.
\newblock \showarticletitle{From {{Quantity}} to {{Quality}}: {{Boosting LLM Performance}} with {{Self-Guided Data Selection}} for {{Instruction Tuning}}}. In \bibinfo{booktitle}{\emph{Proceedings of the 2024 {{Conference}} of the {{North American Chapter}} of the {{Association}} for {{Computational Linguistics}}: {{Human Language Technologies}} ({{Volume}} 1: {{Long Papers}}), {{NAACL}} 2024, {{Mexico City}}, {{Mexico}}, {{June}} 16-21, 2024}}, \bibfield{editor}{\bibinfo{person}{Kevin Duh}, \bibinfo{person}{Helena {G{\'o}mez-Adorno}}, {and} \bibinfo{person}{Steven Bethard}} (Eds.). \bibinfo{publisher}{Association for Computational Linguistics}, \bibinfo{pages}{7602--7635}.
\newblock
\showeprint[arxiv]{2308.12032}~[cs]
\href{https://doi.org/10.18653/V1/2024.NAACL-LONG.421}{doi:\nolinkurl{10.18653/V1/2024.NAACL-LONG.421}}


\bibitem[Li and Liang(2021)]%
        {li2021prefix}
\bibfield{author}{\bibinfo{person}{Xiang~Lisa Li} {and} \bibinfo{person}{Percy Liang}.} \bibinfo{year}{2021}\natexlab{}.
\newblock \showarticletitle{Prefix-tuning: Optimizing continuous prompts for generation}.
\newblock \bibinfo{journal}{\emph{arXiv preprint arXiv:2101.00190}} (\bibinfo{year}{2021}).
\newblock


\bibitem[Li et~al\mbox{.}(2023)]%
        {li2023textbooks}
\bibfield{author}{\bibinfo{person}{Yuanzhi Li}, \bibinfo{person}{Sébastien Bubeck}, \bibinfo{person}{Ronen Eldan}, \bibinfo{person}{Allie~Del Giorno}, \bibinfo{person}{Suriya Gunasekar}, {and} \bibinfo{person}{Yin~Tat Lee}.} \bibinfo{year}{2023}\natexlab{}.
\newblock \bibinfo{title}{Textbooks Are All You Need II: phi-1.5 technical report}.
\newblock
\showeprint[arxiv]{2309.05463}~[cs.CL]


\bibitem[Li et~al\mbox{.}(2022)]%
        {liCamelManagingData2022}
\bibfield{author}{\bibinfo{person}{Yiming Li}, \bibinfo{person}{Yanyan Shen}, {and} \bibinfo{person}{Lei Chen}.} \bibinfo{year}{2022}\natexlab{}.
\newblock \showarticletitle{Camel: {{Managing Data}} for {{Efficient Stream Learning}}}. In \bibinfo{booktitle}{\emph{Proceedings of the 2022 {{International Conference}} on {{Management}} of {{Data}}}} \emph{(\bibinfo{series}{{{SIGMOD}} '22})}. \bibinfo{publisher}{Association for Computing Machinery}, \bibinfo{address}{New York, NY, USA}, \bibinfo{pages}{1271--1285}.
\newblock
\showISBNx{978-1-4503-9249-5}
\href{https://doi.org/10.1145/3514221.3517836}{doi:\nolinkurl{10.1145/3514221.3517836}}


\bibitem[Li et~al\mbox{.}(2024e)]%
        {10.14778/3696435.3696440}
\bibfield{author}{\bibinfo{person}{Zhaodonghui Li}, \bibinfo{person}{Haitao Yuan}, \bibinfo{person}{Huiming Wang}, \bibinfo{person}{Gao Cong}, {and} \bibinfo{person}{Lidong Bing}.} \bibinfo{year}{2024}\natexlab{e}.
\newblock \showarticletitle{LLM-R2: A Large Language Model Enhanced Rule-Based Rewrite System for Boosting Query Efficiency}.
\newblock \bibinfo{journal}{\emph{Proc. VLDB Endow.}} \bibinfo{volume}{18}, \bibinfo{number}{1} (\bibinfo{date}{Sept.} \bibinfo{year}{2024}), \bibinfo{pages}{53–65}.
\newblock
\showISSN{2150-8097}
\href{https://doi.org/10.14778/3696435.3696440}{doi:\nolinkurl{10.14778/3696435.3696440}}


\bibitem[Liang et~al\mbox{.}(2024)]%
        {liangMultiLayerTransformersGradient2024}
\bibfield{author}{\bibinfo{person}{Yingyu Liang}, \bibinfo{person}{Zhizhou Sha}, \bibinfo{person}{Zhenmei Shi}, \bibinfo{person}{Zhao Song}, {and} \bibinfo{person}{Yufa Zhou}.} \bibinfo{year}{2024}\natexlab{}.
\newblock \bibinfo{title}{Multi-{{Layer Transformers Gradient Can}} Be {{Approximated}} in {{Almost Linear Time}}}.
\newblock
\showeprint[arxiv]{2408.13233}~[cs]
\href{https://doi.org/10.48550/arXiv.2408.13233}{doi:\nolinkurl{10.48550/arXiv.2408.13233}}


\bibitem[Lin(2004)]%
        {lin-2004-rouge}
\bibfield{author}{\bibinfo{person}{Chin-Yew Lin}.} \bibinfo{year}{2004}\natexlab{}.
\newblock \showarticletitle{{ROUGE}: A Package for Automatic Evaluation of Summaries}. In \bibinfo{booktitle}{\emph{Text Summarization Branches Out}}. \bibinfo{publisher}{Association for Computational Linguistics}, \bibinfo{address}{Barcelona, Spain}, \bibinfo{pages}{74--81}.
\newblock
\urldef\tempurl%
\url{https://aclanthology.org/W04-1013/}
\showURL{%
\tempurl}


\bibitem[Liu et~al\mbox{.}(2024)]%
        {DBLP:conf/icde/LiuDLLCZ24}
\bibfield{author}{\bibinfo{person}{Hanmo Liu}, \bibinfo{person}{Shimin Di}, \bibinfo{person}{Haoyang Li}, \bibinfo{person}{Shuangyin Li}, \bibinfo{person}{Lei Chen}, {and} \bibinfo{person}{Xiaofang Zhou}.} \bibinfo{year}{2024}\natexlab{}.
\newblock \showarticletitle{Effective Data Selection and Replay for Unsupervised Continual Learning}. In \bibinfo{booktitle}{\emph{40th {IEEE} International Conference on Data Engineering, {ICDE} 2024, Utrecht, The Netherlands, May 13-16, 2024}}. \bibinfo{publisher}{{IEEE}}, \bibinfo{pages}{1449--1463}.
\newblock
\href{https://doi.org/10.1109/ICDE60146.2024.00119}{doi:\nolinkurl{10.1109/ICDE60146.2024.00119}}


\bibitem[Lloyd(1982)]%
        {lloyd1982least}
\bibfield{author}{\bibinfo{person}{Stuart Lloyd}.} \bibinfo{year}{1982}\natexlab{}.
\newblock \showarticletitle{Least squares quantization in PCM}.
\newblock \bibinfo{journal}{\emph{IEEE transactions on information theory}} \bibinfo{volume}{28}, \bibinfo{number}{2} (\bibinfo{year}{1982}), \bibinfo{pages}{129--137}.
\newblock


\bibitem[Lou et~al\mbox{.}(2024)]%
        {10597732}
\bibfield{author}{\bibinfo{person}{Yuze Lou}, \bibinfo{person}{Chuan Lei}, \bibinfo{person}{Xiao Qin}, \bibinfo{person}{Zichen Wang}, \bibinfo{person}{Christos Faloutsos}, \bibinfo{person}{Rishita Anubhai}, {and} \bibinfo{person}{Huzefa Rangwala}.} \bibinfo{year}{2024}\natexlab{}.
\newblock \showarticletitle{DATALORE: Can a Large Language Model Find All Lost Scrolls in a Data Repository?}. In \bibinfo{booktitle}{\emph{2024 IEEE 40th International Conference on Data Engineering (ICDE)}}. \bibinfo{pages}{5170--5176}.
\newblock
\href{https://doi.org/10.1109/ICDE60146.2024.00388}{doi:\nolinkurl{10.1109/ICDE60146.2024.00388}}


\bibitem[Ma et~al\mbox{.}(2024)]%
        {10.1145/3677139}
\bibfield{author}{\bibinfo{person}{Lei Ma}, \bibinfo{person}{Lei Cao}, \bibinfo{person}{Peter~M. VanNostrand}, \bibinfo{person}{Dennis~M. Hofmann}, \bibinfo{person}{Yao Su}, {and} \bibinfo{person}{Elke~A. Rundensteiner}.} \bibinfo{year}{2024}\natexlab{}.
\newblock \showarticletitle{Pluto: Sample Selection for Robust Anomaly Detection on Polluted Log Data}.
\newblock \bibinfo{journal}{\emph{Proc. ACM Manag. Data}} \bibinfo{volume}{2}, \bibinfo{number}{4}, Article \bibinfo{articleno}{203} (\bibinfo{date}{Sept.} \bibinfo{year}{2024}), \bibinfo{numpages}{25}~pages.
\newblock
\href{https://doi.org/10.1145/3677139}{doi:\nolinkurl{10.1145/3677139}}


\bibitem[Marion et~al\mbox{.}(2023)]%
        {marionWhenLessMore2023}
\bibfield{author}{\bibinfo{person}{Max Marion}, \bibinfo{person}{Ahmet {\"U}st{\"u}n}, \bibinfo{person}{Luiza Pozzobon}, \bibinfo{person}{Alex Wang}, \bibinfo{person}{Marzieh Fadaee}, {and} \bibinfo{person}{Sara Hooker}.} \bibinfo{year}{2023}\natexlab{}.
\newblock \bibinfo{title}{When {{Less}} Is {{More}}: {{Investigating Data Pruning}} for {{Pretraining LLMs}} at {{Scale}}}.
\newblock
\showeprint[arxiv]{2309.04564}~[cs]
\href{https://doi.org/10.48550/arXiv.2309.04564}{doi:\nolinkurl{10.48550/arXiv.2309.04564}}


\bibitem[Mekala et~al\mbox{.}(2024)]%
        {mekalaSmallerLanguageModels2024}
\bibfield{author}{\bibinfo{person}{Dheeraj Mekala}, \bibinfo{person}{Alex Nguyen}, {and} \bibinfo{person}{Jingbo Shang}.} \bibinfo{year}{2024}\natexlab{}.
\newblock \bibinfo{title}{Smaller {{Language Models}} Are Capable of Selecting {{Instruction-Tuning Training Data}} for {{Larger Language Models}}}.
\newblock
\href{https://doi.org/10.48550/ARXIV.2402.10430}{doi:\nolinkurl{10.48550/ARXIV.2402.10430}}


\bibitem[Mirzasoleiman et~al\mbox{.}(2020)]%
        {mirzasoleimanCoresetsDataefficientTraining2020}
\bibfield{author}{\bibinfo{person}{Baharan Mirzasoleiman}, \bibinfo{person}{Jeff Bilmes}, {and} \bibinfo{person}{Jure Leskovec}.} \bibinfo{year}{2020}\natexlab{}.
\newblock \showarticletitle{Coresets for {{Data-efficient Training}} of {{Machine Learning Models}}}. In \bibinfo{booktitle}{\emph{Proceedings of the 37th {{International Conference}} on {{Machine Learning}}}}. \bibinfo{publisher}{PMLR}, \bibinfo{pages}{6950--6960}.
\newblock
\showeprint[arxiv]{1906.01827}~[cs, stat]


\bibitem[Narayan et~al\mbox{.}(2022)]%
        {10.14778/3574245.3574258}
\bibfield{author}{\bibinfo{person}{Avanika Narayan}, \bibinfo{person}{Ines Chami}, \bibinfo{person}{Laurel Orr}, {and} \bibinfo{person}{Christopher R\'{e}}.} \bibinfo{year}{2022}\natexlab{}.
\newblock \showarticletitle{Can Foundation Models Wrangle Your Data?}
\newblock \bibinfo{journal}{\emph{Proc. VLDB Endow.}} \bibinfo{volume}{16}, \bibinfo{number}{4} (\bibinfo{date}{Dec.} \bibinfo{year}{2022}), \bibinfo{pages}{738–746}.
\newblock
\showISSN{2150-8097}
\href{https://doi.org/10.14778/3574245.3574258}{doi:\nolinkurl{10.14778/3574245.3574258}}


\bibitem[Nguyen et~al\mbox{.}(2024)]%
        {nguyenMemoryefficientTrainingLLMs2024}
\bibfield{author}{\bibinfo{person}{Dang Nguyen}, \bibinfo{person}{Wenhan Yang}, \bibinfo{person}{Rathul Anand}, \bibinfo{person}{Yu Yang}, {and} \bibinfo{person}{Baharan Mirzasoleiman}.} \bibinfo{year}{2024}\natexlab{}.
\newblock \bibinfo{title}{Memory-Efficient {{Training}} of {{LLMs}} with {{Larger Mini-batches}}}.
\newblock
\showeprint[arxiv]{2407.19580}~[cs]
\href{https://doi.org/10.48550/arXiv.2407.19580}{doi:\nolinkurl{10.48550/arXiv.2407.19580}}


\bibitem[OpenAI(2023)]%
        {openai2023gpt4}
\bibfield{author}{\bibinfo{person}{OpenAI}.} \bibinfo{year}{2023}\natexlab{}.
\newblock \bibinfo{title}{GPT-4 Technical Report}.
\newblock
\showeprint[arxiv]{2303.08774}~[cs.CL]


\bibitem[Paul et~al\mbox{.}(2021)]%
        {paulDeepLearningData2021}
\bibfield{author}{\bibinfo{person}{Mansheej Paul}, \bibinfo{person}{Surya Ganguli}, {and} \bibinfo{person}{Gintare~Karolina Dziugaite}.} \bibinfo{year}{2021}\natexlab{}.
\newblock \showarticletitle{Deep {{Learning}} on a {{Data Diet}}: {{Finding Important Examples Early}} in {{Training}}}. In \bibinfo{booktitle}{\emph{Advances in {{Neural Information Processing Systems}}}}, Vol.~\bibinfo{volume}{34}. \bibinfo{publisher}{Curran Associates, Inc.}, \bibinfo{pages}{20596--20607}.
\newblock


\bibitem[Ren et~al\mbox{.}(2024)]%
        {ren2024purple}
\bibfield{author}{\bibinfo{person}{Tonghui Ren}, \bibinfo{person}{Yuankai Fan}, \bibinfo{person}{Zhenying He}, \bibinfo{person}{Ren Huang}, \bibinfo{person}{Jiaqi Dai}, \bibinfo{person}{Can Huang}, \bibinfo{person}{Yinan Jing}, \bibinfo{person}{Kai Zhang}, \bibinfo{person}{Yifan Yang}, {and} \bibinfo{person}{X~Sean Wang}.} \bibinfo{year}{2024}\natexlab{}.
\newblock \showarticletitle{Purple: Making a large language model a better sql writer}. In \bibinfo{booktitle}{\emph{2024 IEEE 40th International Conference on Data Engineering (ICDE)}}. IEEE, \bibinfo{pages}{15--28}.
\newblock


\bibitem[Sorscher et~al\mbox{.}(2022)]%
        {sorscherNeuralScalingLaws2022}
\bibfield{author}{\bibinfo{person}{Ben Sorscher}, \bibinfo{person}{Robert Geirhos}, \bibinfo{person}{Shashank Shekhar}, \bibinfo{person}{Surya Ganguli}, {and} \bibinfo{person}{Ari Morcos}.} \bibinfo{year}{2022}\natexlab{}.
\newblock \showarticletitle{Beyond Neural Scaling Laws: Beating Power Law Scaling via Data Pruning}.
\newblock \bibinfo{journal}{\emph{Advances in Neural Information Processing Systems}}  \bibinfo{volume}{35} (\bibinfo{date}{Dec.} \bibinfo{year}{2022}), \bibinfo{pages}{19523--19536}.
\newblock


\bibitem[Team(2024)]%
        {qwen2.5}
\bibfield{author}{\bibinfo{person}{Qwen Team}.} \bibinfo{year}{2024}\natexlab{}.
\newblock \bibinfo{title}{Qwen2.5: A Party of Foundation Models}.
\newblock
\urldef\tempurl%
\url{https://qwenlm.github.io/blog/qwen2.5/}
\showURL{%
\tempurl}


\bibitem[Thrush et~al\mbox{.}(2024)]%
        {thrushImprovingPretrainingData2024}
\bibfield{author}{\bibinfo{person}{Tristan Thrush}, \bibinfo{person}{Christopher Potts}, {and} \bibinfo{person}{Tatsunori Hashimoto}.} \bibinfo{year}{2024}\natexlab{}.
\newblock \bibinfo{title}{Improving {{Pretraining Data Using Perplexity Correlations}}}.
\newblock
\showeprint[arxiv]{2409.05816}~[cs, stat]
\href{https://doi.org/10.48550/arXiv.2409.05816}{doi:\nolinkurl{10.48550/arXiv.2409.05816}}


\bibitem[Tirumala et~al\mbox{.}(2023)]%
        {tirumalaD4ImprovingLLM2023}
\bibfield{author}{\bibinfo{person}{Kushal Tirumala}, \bibinfo{person}{Daniel Simig}, \bibinfo{person}{Armen Aghajanyan}, {and} \bibinfo{person}{Ari Morcos}.} \bibinfo{year}{2023}\natexlab{}.
\newblock \showarticletitle{D4: {{Improving LLM Pretraining}} via {{Document De-Duplication}} and {{Diversification}}}.
\newblock \bibinfo{journal}{\emph{Advances in Neural Information Processing Systems}}  \bibinfo{volume}{36} (\bibinfo{date}{Dec.} \bibinfo{year}{2023}), \bibinfo{pages}{53983--53995}.
\newblock


\bibitem[Tran et~al\mbox{.}(2024)]%
        {10.1093/jamia/ocae122}
\bibfield{author}{\bibinfo{person}{Hieu Tran}, \bibinfo{person}{Zhichao Yang}, \bibinfo{person}{Zonghai Yao}, {and} \bibinfo{person}{Hong Yu}.} \bibinfo{year}{2024}\natexlab{}.
\newblock \showarticletitle{{{BioInstruct}}: Instruction Tuning of Large Language Models for Biomedical Natural Language Processing}.
\newblock \bibinfo{journal}{\emph{Journal of the American Medical Informatics Association}} \bibinfo{volume}{31}, \bibinfo{number}{9} (\bibinfo{date}{June} \bibinfo{year}{2024}), \bibinfo{pages}{1821--1832}.
\newblock
\showISSN{1527-974X}
\showeprint{https://academic.oup.com/jamia/article-pdf/31/9/1821/58868340/ocae122.pdf}
\href{https://doi.org/10.1093/jamia/ocae122}{doi:\nolinkurl{10.1093/jamia/ocae122}}


\bibitem[Vaswani(2017)]%
        {vaswani2017attention}
\bibfield{author}{\bibinfo{person}{A Vaswani}.} \bibinfo{year}{2017}\natexlab{}.
\newblock \showarticletitle{Attention is all you need}.
\newblock \bibinfo{journal}{\emph{Advances in Neural Information Processing Systems}} (\bibinfo{year}{2017}).
\newblock


\bibitem[Wang et~al\mbox{.}(2022)]%
        {10.14778/3561261.3561267}
\bibfield{author}{\bibinfo{person}{Jiayi Wang}, \bibinfo{person}{Chengliang Chai}, \bibinfo{person}{Nan Tang}, \bibinfo{person}{Jiabin Liu}, {and} \bibinfo{person}{Guoliang Li}.} \bibinfo{year}{2022}\natexlab{}.
\newblock \showarticletitle{Coresets over multiple tables for feature-rich and data-efficient machine learning}.
\newblock \bibinfo{journal}{\emph{Proc. VLDB Endow.}} \bibinfo{volume}{16}, \bibinfo{number}{1} (\bibinfo{date}{Sept.} \bibinfo{year}{2022}), \bibinfo{pages}{64–76}.
\newblock
\showISSN{2150-8097}
\href{https://doi.org/10.14778/3561261.3561267}{doi:\nolinkurl{10.14778/3561261.3561267}}


\bibitem[Wang et~al\mbox{.}(2025)]%
        {wang2025datawhispererefficientdata}
\bibfield{author}{\bibinfo{person}{Shaobo Wang}, \bibinfo{person}{Xiangqi Jin}, \bibinfo{person}{Ziming Wang}, \bibinfo{person}{Jize Wang}, \bibinfo{person}{Jiajun Zhang}, \bibinfo{person}{Kaixin Li}, \bibinfo{person}{Zichen Wen}, \bibinfo{person}{Zhong Li}, \bibinfo{person}{Conghui He}, \bibinfo{person}{Xuming Hu}, {and} \bibinfo{person}{Linfeng Zhang}.} \bibinfo{year}{2025}\natexlab{}.
\newblock \bibinfo{title}{Data Whisperer: Efficient Data Selection for Task-Specific LLM Fine-Tuning via Few-Shot In-Context Learning}.
\newblock
\showeprint[arxiv]{2505.12212}~[cs.CL]
\urldef\tempurl%
\url{https://arxiv.org/abs/2505.12212}
\showURL{%
\tempurl}


\bibitem[Wei et~al\mbox{.}(2022)]%
        {wei2022chain}
\bibfield{author}{\bibinfo{person}{Jason Wei}, \bibinfo{person}{Xuezhi Wang}, \bibinfo{person}{Dale Schuurmans}, \bibinfo{person}{Maarten Bosma}, \bibinfo{person}{Fei Xia}, \bibinfo{person}{Ed Chi}, \bibinfo{person}{Quoc~V Le}, \bibinfo{person}{Denny Zhou}, {et~al\mbox{.}}} \bibinfo{year}{2022}\natexlab{}.
\newblock \showarticletitle{Chain-of-thought prompting elicits reasoning in large language models}.
\newblock \bibinfo{journal}{\emph{Advances in neural information processing systems}}  \bibinfo{volume}{35} (\bibinfo{year}{2022}), \bibinfo{pages}{24824--24837}.
\newblock


\bibitem[Xia et~al\mbox{.}(2024)]%
        {xiaLESSSelectingInfluential2024}
\bibfield{author}{\bibinfo{person}{Mengzhou Xia}, \bibinfo{person}{Sadhika Malladi}, \bibinfo{person}{Suchin Gururangan}, \bibinfo{person}{Sanjeev Arora}, {and} \bibinfo{person}{Danqi Chen}.} \bibinfo{year}{2024}\natexlab{}.
\newblock \showarticletitle{{{LESS}}: {{Selecting Influential Data}} for {{Targeted Instruction Tuning}}}. In \bibinfo{booktitle}{\emph{Forty-First {{International Conference}} on {{Machine Learning}}}}.
\newblock


\bibitem[Xinyang~Zhao(2024)]%
        {zhao2024llmdbdemo}
\bibfield{author}{\bibinfo{person}{Guoliang~Li Xinyang~Zhao, Xuanhe~Zhou}.} \bibinfo{year}{2024}\natexlab{}.
\newblock \bibinfo{title}{Chat2Data: An Interactive Data Analysis System with RAG, Vector Databases and LLMs}.
\newblock


\bibitem[Yang et~al\mbox{.}(2023)]%
        {yangSustainableLearningCoresets2023}
\bibfield{author}{\bibinfo{person}{Yu Yang}, \bibinfo{person}{Hao Kang}, {and} \bibinfo{person}{Baharan Mirzasoleiman}.} \bibinfo{year}{2023}\natexlab{}.
\newblock \showarticletitle{Towards {{Sustainable Learning}}: {{Coresets}} for {{Data-efficient Deep Learning}}}. In \bibinfo{booktitle}{\emph{Proceedings of the 40th {{International Conference}} on {{Machine Learning}}}}. \bibinfo{publisher}{PMLR}, \bibinfo{pages}{39314--39330}.
\newblock
\showISSN{2640-3498}
\urldef\tempurl%
\url{https://proceedings.mlr.press/v202/yang23g.html}
\showURL{%
\tempurl}


\bibitem[Yang et~al\mbox{.}(2024)]%
        {yangSmallToLargeS2LScalable2024}
\bibfield{author}{\bibinfo{person}{Yu Yang}, \bibinfo{person}{Siddhartha Mishra}, \bibinfo{person}{Jeffrey~N Chiang}, {and} \bibinfo{person}{Baharan Mirzasoleiman}.} \bibinfo{year}{2024}\natexlab{}.
\newblock \bibinfo{title}{{{SmallToLarge}} ({{S2L}}): {{Scalable Data Selection}} for {{Fine-tuning Large Language Models}} by {{Summarizing Training Trajectories}} of {{Small Models}}}.
\newblock
\href{https://doi.org/10.48550/ARXIV.2403.07384}{doi:\nolinkurl{10.48550/ARXIV.2403.07384}}


\bibitem[Yin and Rush(2024)]%
        {yin2024compute}
\bibfield{author}{\bibinfo{person}{Junjie~Oscar Yin} {and} \bibinfo{person}{Alexander~M Rush}.} \bibinfo{year}{2024}\natexlab{}.
\newblock \showarticletitle{Compute-constrained data selection}.
\newblock \bibinfo{journal}{\emph{arXiv preprint arXiv:2410.16208}} (\bibinfo{year}{2024}).
\newblock


\bibitem[Yu et~al\mbox{.}(2024a)]%
        {yuDiversifyConquerDiversityCentric2024}
\bibfield{author}{\bibinfo{person}{Simon Yu}, \bibinfo{person}{Liangyu Chen}, \bibinfo{person}{Sara Ahmadian}, {and} \bibinfo{person}{Marzieh Fadaee}.} \bibinfo{year}{2024}\natexlab{a}.
\newblock \bibinfo{title}{Diversify and {{Conquer}}: {{Diversity-Centric Data Selection}} with {{Iterative Refinement}}}.
\newblock
\showeprint[arxiv]{2409.11378}~[cs]
\href{https://doi.org/10.48550/arXiv.2409.11378}{doi:\nolinkurl{10.48550/arXiv.2409.11378}}


\bibitem[Yu et~al\mbox{.}(2024b)]%
        {yuMATESModelAwareData2024}
\bibfield{author}{\bibinfo{person}{Zichun Yu}, \bibinfo{person}{Spandan Das}, {and} \bibinfo{person}{Chenyan Xiong}.} \bibinfo{year}{2024}\natexlab{b}.
\newblock \bibinfo{title}{{{MATES}}: {{Model-Aware Data Selection}} for {{Efficient Pretraining}} with {{Data Influence Models}}}.
\newblock
\href{https://doi.org/10.48550/ARXIV.2406.06046}{doi:\nolinkurl{10.48550/ARXIV.2406.06046}}


\bibitem[Yue et~al\mbox{.}(2024)]%
        {yue2024mammoth}
\bibfield{author}{\bibinfo{person}{Xiang Yue}, \bibinfo{person}{Xingwei Qu}, \bibinfo{person}{Ge Zhang}, \bibinfo{person}{Yao Fu}, \bibinfo{person}{Wenhao Huang}, \bibinfo{person}{Huan Sun}, \bibinfo{person}{Yu Su}, {and} \bibinfo{person}{Wenhu Chen}.} \bibinfo{year}{2024}\natexlab{}.
\newblock \showarticletitle{{MA}mmo{TH}: Building Math Generalist Models through Hybrid Instruction Tuning}. In \bibinfo{booktitle}{\emph{The Twelfth International Conference on Learning Representations}}.
\newblock
\urldef\tempurl%
\url{https://openreview.net/forum?id=yLClGs770I}
\showURL{%
\tempurl}


\bibitem[Zhang et~al\mbox{.}(2024b)]%
        {zhangHarnessingDiversityImportant2024}
\bibfield{author}{\bibinfo{person}{Chi Zhang}, \bibinfo{person}{Huaping Zhong}, \bibinfo{person}{Kuan Zhang}, \bibinfo{person}{Chengliang Chai}, \bibinfo{person}{Rui Wang}, \bibinfo{person}{Xinlin Zhuang}, \bibinfo{person}{Tianyi Bai}, \bibinfo{person}{Jiantao Qiu}, \bibinfo{person}{Lei Cao}, \bibinfo{person}{Ju Fan}, \bibinfo{person}{Ye Yuan}, \bibinfo{person}{Guoren Wang}, {and} \bibinfo{person}{Conghui He}.} \bibinfo{year}{2024}\natexlab{b}.
\newblock \bibinfo{title}{Harnessing {{Diversity}} for {{Important Data Selection}} in {{Pretraining Large Language Models}}}.
\newblock
\showeprint[arxiv]{2409.16986}~[cs]
\href{https://doi.org/10.48550/arXiv.2409.16986}{doi:\nolinkurl{10.48550/arXiv.2409.16986}}


\bibitem[Zhang et~al\mbox{.}(2024a)]%
        {zhangTAGCOSTaskagnosticGradient2024}
\bibfield{author}{\bibinfo{person}{Jipeng Zhang}, \bibinfo{person}{Yaxuan Qin}, \bibinfo{person}{Renjie Pi}, \bibinfo{person}{Weizhong Zhang}, \bibinfo{person}{Rui Pan}, {and} \bibinfo{person}{Tong Zhang}.} \bibinfo{year}{2024}\natexlab{a}.
\newblock \bibinfo{title}{{{TAGCOS}}: {{Task-agnostic Gradient Clustered Coreset Selection}} for {{Instruction Tuning Data}}}.
\newblock
\showeprint[arxiv]{2407.15235}~[cs]
\href{https://doi.org/10.48550/arXiv.2407.15235}{doi:\nolinkurl{10.48550/arXiv.2407.15235}}


\bibitem[Zhao et~al\mbox{.}(2024)]%
        {zhaoLongMoreAlignment2024}
\bibfield{author}{\bibinfo{person}{Hao Zhao}, \bibinfo{person}{Maksym Andriushchenko}, \bibinfo{person}{Francesco Croce}, {and} \bibinfo{person}{Nicolas Flammarion}.} \bibinfo{year}{2024}\natexlab{}.
\newblock \bibinfo{title}{Long {{Is More}} for {{Alignment}}: {{A Simple}} but {{Tough-to-Beat Baseline}} for {{Instruction Fine-Tuning}}}.
\newblock
\showeprint[arxiv]{2402.04833}~[cs]
\href{https://doi.org/10.48550/arXiv.2402.04833}{doi:\nolinkurl{10.48550/arXiv.2402.04833}}


\bibitem[Zheng et~al\mbox{.}(2023)]%
        {zhengCoveragecentricCoresetSelection2023}
\bibfield{author}{\bibinfo{person}{Haizhong Zheng}, \bibinfo{person}{Rui Liu}, \bibinfo{person}{Fan Lai}, {and} \bibinfo{person}{Atul Prakash}.} \bibinfo{year}{2023}\natexlab{}.
\newblock \bibinfo{title}{Coverage-Centric {{Coreset Selection}} for {{High Pruning Rates}}}.
\newblock
\showeprint[arxiv]{2210.15809}~[cs]
\href{https://doi.org/10.48550/arXiv.2210.15809}{doi:\nolinkurl{10.48550/arXiv.2210.15809}}


\bibitem[Zhou et~al\mbox{.}(2023)]%
        {zhouLoBaSSGaugingLearnability2023}
\bibfield{author}{\bibinfo{person}{Haotian Zhou}, \bibinfo{person}{Tingkai Liu}, \bibinfo{person}{Qianli Ma}, \bibinfo{person}{Jianbo Yuan}, \bibinfo{person}{Pengfei Liu}, \bibinfo{person}{Yang You}, {and} \bibinfo{person}{Hongxia Yang}.} \bibinfo{year}{2023}\natexlab{}.
\newblock \bibinfo{title}{{{LoBaSS}}: {{Gauging Learnability}} in {{Supervised Fine-tuning Data}}}.
\newblock
\showeprint[arxiv]{2310.13008}~[cs]
\href{https://doi.org/10.48550/arXiv.2310.13008}{doi:\nolinkurl{10.48550/arXiv.2310.13008}}


\bibitem[Zhou et~al\mbox{.}(2025)]%
        {zhou2025cracksql}
\bibfield{author}{\bibinfo{person}{Wei Zhou}, \bibinfo{person}{Yuyang Gao}, \bibinfo{person}{Xuanhe Zhou}, {and} \bibinfo{person}{Guoliang Li}.} \bibinfo{year}{2025}\natexlab{}.
\newblock \showarticletitle{{Cracking SQL Barriers:} {An} LLM-based Dialect Transaltion System}.
\newblock \bibinfo{journal}{\emph{Proc. {ACM} Manag. Data}} \bibinfo{volume}{3}, \bibinfo{number}{3 (SIGMOD)} (\bibinfo{year}{2025}).
\newblock


\bibitem[Zhou et~al\mbox{.}(2024)]%
        {10.14778/3675034.3675043}
\bibfield{author}{\bibinfo{person}{Xuanhe Zhou}, \bibinfo{person}{Guoliang Li}, \bibinfo{person}{Zhaoyan Sun}, \bibinfo{person}{Zhiyuan Liu}, \bibinfo{person}{Weize Chen}, \bibinfo{person}{Jianming Wu}, \bibinfo{person}{Jiesi Liu}, \bibinfo{person}{Ruohang Feng}, {and} \bibinfo{person}{Guoyang Zeng}.} \bibinfo{year}{2024}\natexlab{}.
\newblock \showarticletitle{D-Bot: Database Diagnosis System using Large Language Models}.
\newblock \bibinfo{journal}{\emph{Proc. VLDB Endow.}} \bibinfo{volume}{17}, \bibinfo{number}{10} (\bibinfo{date}{June} \bibinfo{year}{2024}), \bibinfo{pages}{2514–2527}.
\newblock
\showISSN{2150-8097}
\href{https://doi.org/10.14778/3675034.3675043}{doi:\nolinkurl{10.14778/3675034.3675043}}


\end{thebibliography}

\section{Appendix}\label{appendix}

\subsection{Proof of Theorem~\ref{thm:select}}\label{eq:proof4-1}
\begin{proof}
  The gradient with respect to all parameters is a concatenation of gradients with respect to each parameter's weights. Therefore, by the triangle inequality:
  \begin{align*}
    & \|\nabla_{\boldsymbol{\theta}} \mathcal{L}(X_i, \boldsymbol{\theta}) - \nabla_{\boldsymbol{\theta}} \mathcal{L}(X_j, \boldsymbol{\theta})\| \\
    \leq & \sum_{\mathbf{W}}\|\nabla_{\mathbf{W}} \mathcal{L}(X_i, \boldsymbol{\theta}) - \nabla_{\mathbf{W}} \mathcal{L}(X_j, \boldsymbol{\theta})\|
  \end{align*}
  where $\mathbf{W} \in\{\mathbf{W}_1, \mathbf{W}_2, \mathbf{W}_Q, \mathbf{W}_K, \mathbf{W}_V \mathbf{W}_O\}$ We analyze all parameters $\mathbf{W}_1$, $\mathbf{W}_2$, $\mathbf{W}_Q$, $\mathbf{W}_K$, $\mathbf{W}_V$, and $\mathbf{W}_O$, and combine them into the final form. Since the layer norm and activation function are generally Lipschitz continuous, we omit them from the equation.

  Following~\cite{liangMultiLayerTransformersGradient2024}, we define:
  \begin{align*}
    G(\mathbf{X}) & = \nabla_{\mathbf{Z}}\mathcal{L}(X, \boldsymbol{\theta}) \in \mathbb{R}^{T\times d} \\
    q(\mathbf{X}) &= G(\mathbf{X}) \cdot (\mathbf{X}\mathbf{W}_V\mathbf{W}_O)^T \in \mathbb{R}^{T\times T} \\
    p_1(\mathbf{X})&=h(\mathbf{X})\odot q(\mathbf{X}) \in \mathbb{R}^{T\times T}
  \end{align*}
  \noindent where $T$ is the sequential length, $d$ is the hidden dimension, and $\odot$ is the Hadamard product. In the following analysis, we mainly use the triangle inequality and the Cauchy-Schwarz inequality, and we state them here to avoid repeated descriptions. For the gradient match associated with $\mathbf{W}_1$, we have:
  \begin{align*}
& \left\|\nabla_{\mathbf{W}_1}\mathcal{L}(X_i,\boldsymbol{\theta})) - \nabla_{\mathbf{W}_1}\mathcal{L}(X_j,\boldsymbol{\theta}))\right\| \nonumber\\
& \leq \left\|\mathbf{W}_O^T\mathbf{W}_V^T\mathbf{X}_i^T\mathbf{Z}_i^T\nabla_{\hat{\mathbf{O}_i}}\mathcal{L}(X_i,\boldsymbol{\theta}))\mathbf{W}_2^T \right. \nonumber \\
& - \left.\mathbf{W}_O^T\mathbf{W}_V^T\mathbf{X}_j^T\mathbf{Z}_j^T\nabla_{\hat{\mathbf{O}_i}}\mathcal{L}(X_j,\boldsymbol{\theta}))\mathbf{W}_2^T\right\| \nonumber\\
& \leq c_1^1\left\|\nabla_{\hat{\mathbf{O}_i}}\mathcal{L}(X_i,\boldsymbol{\theta})) - \nabla_{\hat{\mathbf{O}_i}}\mathcal{L}(X_j,\boldsymbol{\theta}))\right\| \nonumber\\
& + c_3^1\left\|\mathbf{X}_i-\mathbf{X}_j\right\|  + c_2^1 \left\|\mathbf{Z}_i-\mathbf{Z}_j\right\| \\
& \text{where}\  c_1^1=\left\|\mathbf{Z}_i\mathbf{X}_i\right\|\left\|\mathbf{W}_V\mathbf{W}_O\right\|\left\|\mathbf{W}_2\right\|, \nonumber \\
& c^1_3=\left\|\mathbf{W}_V\mathbf{W}_O\right\|\left\|\nabla_{\hat{\mathbf{O}_i}}\mathcal{L}(X_j,\boldsymbol{\theta}))\right\|\left\|\mathbf{Z}_i\right\|\left\|\mathbf{W}_2\right\|, \nonumber \\
& c_2^1=\left\|\mathbf{W}_V\mathbf{W}_O\right\|\left\|\nabla_{\hat{\mathbf{O}_i}}\mathcal{L}(X_j,\boldsymbol{\theta}))\right\|\left\|\mathbf{X}_j\right\|\left\|\mathbf{W}_2\right\| \nonumber
\end{align*}

For the gradient match associated with $\mathbf{W}_2$, we have:

\begin{align*}
& \left\|\nabla_{\mathbf{W}_2}\mathcal{L}(X_i,\boldsymbol{\theta})) - \nabla_{\mathbf{W}_2}\mathcal{L}(X_j,\boldsymbol{\theta}))\right\| \nonumber\\
&
\leq \left\|\mathbf{W}_1^T\mathbf{W}_O^T\mathbf{W}_V^T\mathbf{X}_i^T\mathbf{Z}_i^T\nabla_{\hat{\mathbf{O}_i}}\mathcal{L}(X_i,\boldsymbol{\theta}))\right. \nonumber \\
& - \left. \mathbf{W}_1^T\mathbf{W}_O^T\mathbf{W}_V^T\mathbf{X}_j^T\mathbf{Z}_j^T\nabla_{\hat{\mathbf{O}_i}}\mathcal{L}(X_j,\boldsymbol{\theta}))\mathbf{W}_2^T\right\|
\nonumber\\
& \leq c^2_1\left\|\nabla_{\hat{\mathbf{O}_i}}\mathcal{L}(X_i,\boldsymbol{\theta})) - \nabla_{\hat{\mathbf{O}_i}}\mathcal{L}(X_j,\boldsymbol{\theta}))\right\| \nonumber\\
& + c^2_3 \left\|\mathbf{X}_i-\mathbf{X}_j\right\|+ c^2_2\left\|\mathbf{Z}_i-\mathbf{Z}_j\right\| \\
& \text{where}\
c^2_1=\left\|\mathbf{Z}_i\mathbf{X}_i\right\|\left\|\mathbf{W}_V\mathbf{W}_O\mathbf{W}_1\right\|\nonumber \\
& c^2_3=\left\|\mathbf{W}_V\mathbf{W}_O\mathbf{W}_1\right\|\left\|\nabla_{\hat{\mathbf{O}_i}}\mathcal{L}(X_j,\boldsymbol{\theta}))\right\|\left\|\mathbf{Z}_i\right\|\nonumber\\
&c^2_2=\left\|\mathbf{W}_V\mathbf{W}_O\mathbf{W}_1\right\|\left\|\nabla_{\hat{\mathbf{O}_i}}\mathcal{L}(X_j,\boldsymbol{\theta}))\right\|\left\|\mathbf{X}_j\right\| \nonumber
\end{align*}

For the gradient match associated with $\mathbf{W}_O$, we have:
\begin{align*}
& \left\|\nabla_{\mathbf{W}_O}\mathcal{L}(X_i,\boldsymbol{\theta})) - \nabla_{\mathbf{W}_O}\mathcal{L}(X_j,\boldsymbol{\theta}))\right\| \nonumber\\
& \leq \left\|\mathbf{W}_V^T\mathbf{X}_i^T\mathbf{Z}_i\nabla_{\hat{\mathbf{O}_i}}\mathcal{L}(X_i,\boldsymbol{\theta}))\mathbf{W}_2^T\mathbf{W}_1^T \right. \nonumber \\
& - \left. \mathbf{W}_V^T\mathbf{X}_j^T\mathbf{Z}_j\nabla_{\hat{\mathbf{O}_i}}\mathcal{L}(X_j,\boldsymbol{\theta}))\mathbf{W}_2^T\mathbf{W}_1^T\right\| \nonumber\\
& \leq c^O_1\left\|\nabla_{\hat{\mathbf{O}_i}}\mathcal{L}(X_i,\boldsymbol{\theta})) - \nabla_{\hat{\mathbf{O}_i}}\mathcal{L}(X_j,\boldsymbol{\theta}))\right\| \nonumber\\
& + c^O_3\left\|\mathbf{X}_i-\mathbf{X}_j\right\|\nonumber\\
& + c_2^O \left\|\mathbf{Z}_i-\mathbf{Z}_j\right\| \\
& \text{where}\ c^O_1=\left\|\mathbf{W}_1\mathbf{W}_2\right\|\left\|\mathbf{X}_i^T\mathbf{Z}_i\right\|\left\|\mathbf{W}_V\right\| \nonumber \\
& c^O_3=\left\|\mathbf{W}_1\mathbf{W}_2\right\|\left\|\mathbf{W}_V\right\|\left\|\nabla_{\hat{\mathbf{O}_i}}\mathcal{L}(X_j,\boldsymbol{\theta}))\right\|\left\|\mathbf{Z}_i\right\| \nonumber \\
& c_2^O=\left\|\mathbf{W}_1\mathbf{W}_2\right\|\left\|\mathbf{W}_V\right\|\left\|\nabla_{\hat{\mathbf{O}_i}}\mathcal{L}(X_j,\boldsymbol{\theta}))\right\|\left\|\mathbf{X}_j\right\| \nonumber
\end{align*}

For the gradient match associated with $\mathbf{W}_V$, we have:
\begin{align*}
& \left\|\nabla_{\mathbf{W}_V}\mathcal{L}(X_i,\boldsymbol{\theta})) - \nabla_{\mathbf{W}_V}\mathcal{L}(X_j,\boldsymbol{\theta}))\right\| \nonumber\\
& \leq \left\|\mathbf{X}_i^T\mathbf{Z}_i\nabla_{\hat{\mathbf{O}_i}}\mathcal{L}(X_i,\boldsymbol{\theta}))\mathbf{W}_2^T\mathbf{W}_1^T\mathbf{W}_O^T \right. \nonumber \\
& - \left. \mathbf{X}_j^T\mathbf{Z}_j\nabla_{\hat{\mathbf{O}_i}}\mathcal{L}(X_j,\boldsymbol{\theta}))\mathbf{W}_2^T\mathbf{W}_1^T\mathbf{W}_O^T\right\| \nonumber\\
& \leq c^V_1\left\|\nabla_{\hat{\mathbf{O}_i}}\mathcal{L}(X_i,\boldsymbol{\theta})) - \nabla_{\hat{\mathbf{O}_i}}\mathcal{L}(X_j,\boldsymbol{\theta}))\right\| \nonumber\\
& + c^V_3\left\|\mathbf{X}_i-\mathbf{X}_j\right\| + c^V_2\left\|\mathbf{Z}_i-\mathbf{Z}_j\right\| \\
& \text{where} \ c^V_1=\left\|\mathbf{Z}_i\mathbf{X}_i^T\right\|\left\|\mathbf{W}_O\mathbf{W}_1\mathbf{W}_2\right\|\nonumber \\
& c^V_3=\left\|\mathbf{W}_O\mathbf{W}_1\mathbf{W}_2\right\|\left\|\nabla_{\hat{\mathbf{O}_i}}\mathcal{L}(X_j,\boldsymbol{\theta}))\right\|\left\|\mathbf{Z}_i\right\| \nonumber \\
& c^V_2=\left\|\mathbf{W}_O\mathbf{W}_1\mathbf{W}_2\right\|\left\|\nabla_{\hat{\mathbf{O}_i}}\mathcal{L}(X_j,\boldsymbol{\theta}))\right\|\left\|\mathbf{X}_j\right\| \nonumber
\end{align*}

For gradient match associated with $\mathbf{W}_Q$ and $\mathbf{W}_K$, we first define:
\begin{align*}
& c_3=\left\|\mathbf{Z}_i\odot\nabla_{\hat{\mathbf{O}_i}}\mathcal{L}(X_i,\boldsymbol{\theta}))\mathbf{X}_i^T\right\| \\
& \left\|\mathbf{W}_1\mathbf{W}_2\right\|\left\|\mathbf{W}_V\mathbf{W}_O\right\|\left(\left\|\mathbf{X}_i\right\| + \left\|\mathbf{X}_j\right\|\right)
\\
& c_4=\left(\left\|\mathbf{W}_1\mathbf{W}_2\right\|\left\|\mathbf{W}_V\mathbf{W}_O\right\|\left\|\mathbf{X}_j\right\|^2\right)  \\
& c_5=\left(\left\|\text{diag}(p_1(\mathbf{X}_i)\vmathbb{1}_n)\mathbf{Z}_i\right\|\left(\left\|\mathbf{X}_i\right\| + \left\|\mathbf{X}_j\right\|\right)\right)  \\
& c_6=\left(\left\|\text{diag}(p_1(\mathbf{X}_i)\vmathbb{1}_n)\mathbf{Z}_i\right\|\right)  \\
& c_7=\left(\left\|\mathbf{X}_j\right\|^2\left\|h(\mathbf{X})_j\right\|\right)
\end{align*}
and we further define:
\begin{align*}
& c_3^Q=(c_3+c_5)\left\|\mathbf{W}_K\right\|, c_4^Q=c_4\left\|\mathbf{W}_K\right\| \\
& c_2^Q=c_6\left\|\mathbf{W}_K\right\|, c_5^Q=c_7\left\|\mathbf{W}_K\right\| \\
& c_3^K=(c_3+c_5)\left\|\mathbf{W}_Q\right\|, c_4^K=c_4\left\|\mathbf{W}_Q\right\| \\
& c_2^K=c_6\left\|\mathbf{W}_Q\right\|, c_5^K=c_7\left\|\mathbf{W}_Q\right\|
\end{align*}

So for $\mathbf{W}_Q$ we have:
\begin{align*}
& \left\|\nabla_{\mathbf{W}_Q}\mathcal{L}(X_i,\boldsymbol{\theta})) - \nabla_{\mathbf{W}_Q}\mathcal{L}(X_j,\boldsymbol{\theta}))\right\| \nonumber\\
& \leq c_3^Q \left\|\mathbf{X}_i - \mathbf{X}_j\right\| \nonumber\\
& +
c_4^Q \left\|\mathbf{Z}_i\odot\nabla_{\hat{\mathbf{O}_i}}\mathcal{L}(X_i,\boldsymbol{\theta}))\mathbf{X}_i^T - \mathbf{Z}_j\odot\nabla_{\hat{\mathbf{O}_i}}\mathcal{L}(X_j,\boldsymbol{\theta}))\mathbf{X}_j^T\right\| \nonumber\\
& + c_2^Q \left\|\mathbf{Z}_i - \mathbf{Z}_j\right\| \nonumber\\
& + c_5^Q \left\|\text{diag}(p_1(\mathbf{X}_i)\vmathbb{1}_n) - \text{diag}(p_1(\mathbf{X}_j)\vmathbb{1}_n)\right\|
\end{align*}

Similar for $\mathbf{W}_K$ we have:
\begin{align*}
& \left\|\nabla_{\mathbf{W}_K}\mathcal{L}(X_i,\boldsymbol{\theta})) - \nabla_{\mathbf{W}_K}\mathcal{L}(X_j,\boldsymbol{\theta}))\right\|\nonumber \\
& \leq c_3^K \left\|\mathbf{X}_i - \mathbf{X}_j\right\| \nonumber \\
& +
c_4^K \left\|\mathbf{Z}_i\odot\nabla_{\hat{\mathbf{O}_i}}\mathcal{L}(X_i,\boldsymbol{\theta}))\mathbf{X}_i^T - \mathbf{Z}_j\odot\nabla_{\hat{\mathbf{O}_i}}\mathcal{L}(X_j,\boldsymbol{\theta}))\mathbf{X}_j^T\right\|\nonumber \\
& + c_2^K \left\|\mathbf{Z}_i - \mathbf{Z}_j\right\| \nonumber\\
& + c_5^K \left\|\text{diag}(p_1(\mathbf{X}_i)\vmathbb{1}_n) - \text{diag}(p_1(\mathbf{X}_j)\vmathbb{1}_n)\right\|
\end{align*}

Summarize the above, we have:
\begin{align}
& \left\|\nabla_{\boldsymbol{\theta}}\mathcal{L}(X_i,\boldsymbol{\theta})) - \nabla_{\boldsymbol{\theta}}\mathcal{L}(X_j,\boldsymbol{\theta}))\right\| \nonumber\\
& \leq \sum_{k\in \{Q,K,V,O,1,2\}} \left\|\nabla_{\mathbf{W}_k}\mathcal{L}(X_i,\boldsymbol{\theta})) - \nabla_{\mathbf{W}_k}\mathcal{L}(X_j,\boldsymbol{\theta}))\right\| \nonumber\\
& \leq (c_1^1+c_1^2+c_1^O+c_1^V) \cdot \nonumber \\
& \left\|\nabla_{\hat{\mathbf{O}_i}}\mathcal{L}(X_i,\boldsymbol{\theta})) - \nabla_{\hat{\mathbf{O}_i}}\mathcal{L}(X_j,\boldsymbol{\theta}))\right\| \label{eq:1}\\
& + (c_2^1+c_2^2+c_2^O+c_2^V) \left\|\mathbf{Z}_i-\mathbf{Z}_j\right\| \label{eq:2}\\
& + (c_3^1+c_3^2+c_3^O+c_3^V+c_3^Q+c_3^K) \left\|\mathbf{X}_i - \mathbf{X}_j\right\| \label{eq:3}\\
& + (c_4^Q+c_4^K) \cdot \nonumber \\
& \left\|\mathbf{Z}_i\odot\nabla_{\hat{\mathbf{O}_i}}\mathcal{L}(X_i,\boldsymbol{\theta}))\mathbf{X}_i^T - \mathbf{Z}_j\odot\nabla_{\hat{\mathbf{O}_i}}\mathcal{L}(X_j,\boldsymbol{\theta}))\mathbf{X}_j^T\right\| \label{eq:4}\\
& + (c_5^Q+c_5^K) \left\|\text{diag}(p_1(\mathbf{X}_i)\vmathbb{1}_n) - \text{diag}(p_1(\mathbf{X}_j)\vmathbb{1}_n)\right\| \label{eq:5}
\end{align}

With local Lipschitz continuity of the attention mechanism \cite{castin2024how}, we have for Equation~\eqref{eq:3}
\begin{align*}
& \left\|\mathbf{Z}_i - \mathbf{Z}_j\right\| \leq \alpha\left\|\mathbf{X}_i - \mathbf{X}_j\right\|
\end{align*}

Also, we have that for Equation~\eqref{eq:4}
\begin{align*}
& \left\|\mathbf{Z}_i\odot\nabla_{\hat{\mathbf{O}_i}}\mathcal{L}(X_i,\boldsymbol{\theta}))\mathbf{X}_i^T - \mathbf{Z}_j\odot\nabla_{\hat{\mathbf{O}_i}}\mathcal{L}(X_j,\boldsymbol{\theta}))\mathbf{X}_j^T\right\| \\
& \leq max_{l\in\{i,j\}, p, q<n}\left\|\mathbf{Z}_l^{p,q}\right\| \cdot \\
& \left\|\nabla_{\hat{\mathbf{O}_i}}\mathcal{L}(X_i,\boldsymbol{\theta}))\mathbf{X}_i^T - \nabla_{\hat{\mathbf{O}_i}}\mathcal{L}(X_j,\boldsymbol{\theta}))\mathbf{X}_j^T\right\| \\
& \leq \beta \left\|\nabla_{\hat{\mathbf{O}_i}}\mathcal{L}(X_i,\boldsymbol{\theta}))\mathbf{X}_i^T - \nabla_{\hat{\mathbf{O}_i}}\mathcal{L}(X_j,\boldsymbol{\theta}))\mathbf{X}_j^T\right\| \\
& \leq \beta\left\|\mathbf{X}_i\right\|\left\|\nabla_{\hat{\mathbf{O}_i}}\mathcal{L}(X_i,\boldsymbol{\theta}) - \nabla_{\hat{\mathbf{O}_i}}\mathcal{L}(X_j,\boldsymbol{\theta})\right\| \\
& + \beta\left\|\nabla_{\hat{\mathbf{O}_i}}\mathcal{L}(X_j,\boldsymbol{\theta})\right\|\left\|\mathbf{X}_i - \mathbf{X}_j\right\|
\end{align*}

And for the Equation~\eqref{eq:5}, similarly, we can relax it to a combination of $\left\|\nabla_{\hat{\mathbf{O}_i}}\mathcal{L}(X_i,\boldsymbol{\theta})) - \nabla_{\hat{\mathbf{O}_i}}\mathcal{L}(X_j,\boldsymbol{\theta}))\right\|$ and $\left\|\mathbf{X}_i - \mathbf{X}_j\right\|$.

For simpler notation and analysis, we directly denote the remaining constant terms in Equation~\eqref{eq:4} and Equation~\eqref{eq:5} associated with $\left\|\nabla_{\hat{\mathbf{O}_i}}\mathcal{L}(X_i,\boldsymbol{\theta}) - \nabla_{\hat{\mathbf{O}_i}}\mathcal{L}(X_j,\boldsymbol{\theta})\right\|$ as $C_1$, and $\left\|\mathbf{X}_i - \mathbf{X}_j\right\|$ as $C_2$.

Then, define $c_1=c_3^1+c_3^2+c_3^O+c_3^V+c_3^Q+c_3^K+C_1$ and $c_2=c_1^1+c_1^2+c_1^O+c_1^V + \alpha (c_2^1+c_2^2+c_2^O+c_2^V) + C_2$, we have
\begin{align*}
&\|\nabla_{\boldsymbol{\theta}} \mathcal{L}(X_i, \boldsymbol{\theta}) - \nabla_{\boldsymbol{\theta}} \mathcal{L}(X_j, \boldsymbol{\theta})\| \\
\leq& c_1\|\mathbf{X}_i - \mathbf{X}_j\| + c_2\|\nabla_{\hat{\mathbf{O}_i}}\mathcal{L}(X_i,\boldsymbol{\theta})-\nabla_{\hat{\mathbf{O}_j}}\mathcal{L}(X_j,\boldsymbol{\theta})\| \\
\leq & c_1\|\mathbf{X}_i - \mathbf{X}_j\| + c_2\|\nabla_{\hat{\mathbf{O}_i}}\mathcal{L}(X_i,\boldsymbol{\theta})\| + c_2\|\nabla_{\hat{\mathbf{O}_j}}\mathcal{L}(X_j,\boldsymbol{\theta})\| \\
\end{align*}

Define $c_3=\| \sum_{X_i \in \mathcal{D}_{train} }{ \nabla_{\mathbf{\hat{O}}_i} \mathcal{L}(X_i,\boldsymbol{\theta})}\|$ then we have
\begin{align*}
&  \sum_{X_i\in\mathcal{D}_{train}}\min_{X_j\in\mathcal{D}_{core}}\|\nabla_{\boldsymbol{\theta}} \mathcal{L}(X_i, \boldsymbol{\theta}) - \nabla_{\boldsymbol{\theta}} \mathcal{L}(X_j, \boldsymbol{\theta})\|  \nonumber\\
\leq &
\sum_{X_i\in \mathcal{D}_{train}} \min_{X_j\in \mathcal{D}_{core}} c_1\left\|\mathbf{X}_i - \mathbf{X}_j \right\|  \nonumber \\
& +  c_2\|
\nabla_{\mathbf{\hat{O}}_j}\mathcal{L}(X_j,\boldsymbol{\theta})\| +c_3
\end{align*}
Thus, Theorem~\ref{thm:select} is proved.
\end{proof}

\subsection{Proof of Theorem~\ref{thm:nphard}}

\begin{proof}
We show that the computation of the representation score can be reduced from the Maximum Coverage Problem~\cite{hochbaum1997approximating}, which is known to be NP-hard. Given a universe $U$ of elements, a collection of subsets $\{S\}_j^n$, and a budget k, MCP is going to pick a fixed number of sets such that the cardinality of the union of chosen sets $|\cup S_j|$ is maximized. Consider each sample $X_i \in \mathcal{D}_{train}$ as an item that needs to be covered, and each candidate $X_j\in \mathcal{D}_{core}$ as a set that covers the samples in $\mathcal{D}_{train}$. We define the cosine similarity as an indicator function: $\cos(X_i, X_j)=1$ if $X_i$ is covered by $X_j$ and 0 otherwise. For simplicity, we set the importance score $\hat{I}(X_i)$ to 0 for all $X_i$. Then $RS(\mathcal{D}_{core})=\lambda R (\mathcal{D}_{core})$ is equivalent to the total number of covered items. If $RS(\mathcal{D}_{core})$ can be solved exactly in polynomial time, then MCP can also be solved in polynomial time. Since MCP is known to be NP-hard, our selection objective $RS(\mathcal{D}_{core})$ is also NP-hard.
\end{proof}

\subsection{Proof of Theorem~\ref{thm:greedy}}\label{proof4-5}
\begin{proof}
We first prove the following lemma.
\begin{lemma}\label{lemma:is}
The $\Delta RS(X_i|\mathcal{D}_{core}^{k})$ is monotone increasing and submodular.
\end{lemma}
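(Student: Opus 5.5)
The lemma should be read as a statement about the set function $RS(\cdot)$ itself. The intended claim is that $RS$ is monotone non-decreasing, i.e.\ $\Delta RS(X_i|\mathcal{D}_{core})\ge 0$, and submodular, i.e.\ $\Delta RS(X_i|A)\ge \Delta RS(X_i|B)$ whenever $A\subseteq B$ and $X_i\notin B$. My plan is to split $RS=\lambda R+(1-\lambda)\hat{I}$ and prove both properties for each term separately. Nonnegative combinations preserve both monotonicity and submodularity, and $\lambda\in[0,1]$ gives nonnegative coefficients.

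\textbf{Importance term.} $\hat{I}(\mathcal{D}_{core})=\sum_{X\in\mathcal{D}_{core}}\hat{I}(X)$ is modular. Its marginal gain is $\hat{I}(X_i)=(\text{Beta}(\tilde I(X_i),\alpha,\beta))^\gamma$, which does not depend on the current set. This makes the diminishing-returns inequality hold with equality. Monotonicity needs $\hat I(X_i)\ge 0$. This holds because $\tilde I\in[0,1]$, the Beta density is nonnegative there, and $\gamma\ge 0$. Choosing $C$ with $\alpha\ge 1$ and $\beta\ge 1$ keeps the density finite, and the definition $\alpha=1+C(\cdot)^q\eta^r$ gives $\alpha\ge 1$ automatically.

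\textbf{Representation term.} $R$ is a facility-location function $R(A)=\sum_{i}\max_{j\in A}s_{ij}$ with $s_{ij}=\cos(\bar{\mathbf H}_i,\bar{\mathbf H}_j)$.
\begin{itemize}
\item \emph{Convention for the empty set.} I fix $\max_{\emptyset}=m_0$, where $m_0$ is a lower bound on all similarities, e.g.\ $m_0=-1$. Equivalently, I shift $s_{ij}$ by $+1$ so that similarities are nonnegative and $\max_\emptyset=0$; this changes $R$ only by the additive constant $n$.
\item \emph{Monotonicity.} For each $i$, $\max_{j\in A\cup\{x\}}s_{ij}\ge\max_{j\in A}s_{ij}$. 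Summing over $i$ gives $\Delta R\ge 0$.
\item \emph{Submodularity.} The per-$i$ gain is $g_i(A)=\max(s_{ix}-M_i(A),0)$ with $M_i(A)=\max_{j\in A}s_{ij}$. Since $M_i$ is monotone in $A$, $A\subseteq B$ gives $M_i(A)\le M_i(B)$, hence $g_i(A)\ge g_i(B)$. Summing over $i$ yields the diminishing-returns inequality.
\end{itemize}

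\textbf{Conclusion and main obstacle.} With the lemma in hand, the approximation ratio follows from the standard Nemhauser--Wolsey--Fisher argument. Monotonicity and submodularity give $RS(\mathcal{D}^{opt})-RS(\mathcal{D}^{k})\le b\,\Delta_{k+1}$, where $\Delta_{k+1}$ is the greedy gain at step $k+1$. Iterating gives $RS(\mathcal{D}^{opt})-RS(\mathcal{D}^{b})\le(1-1/b)^b\,(RS(\mathcal{D}^{opt})-RS(\emptyset))$. This yields $1-1/e$ provided $RS(\emptyset)\ge 0$, which the shifted-similarity normalization ensures. I expect the main obstacle to be exactly this normalization. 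Raw cosine similarities can be negative, the empty-set value must be defined, and the Beta weight must be finite and nonnegative for the chosen $(\alpha,\beta,\gamma)$. I would handle all three by working with $s_{ij}+1\ge 0$ and noting that the additive constant does not change the greedy choices.
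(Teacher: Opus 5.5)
Your proof is correct and follows the paper's own decomposition: $RS=\lambda R+(1-\lambda)\hat{I}$, with $\hat{I}$ modular so that its marginal gain $\hat{I}(\{X_i\})$ cancels in the diminishing-returns comparison; you go further by actually proving the facility-location step via the per-point gain $\max(s_{ix}-M_i(A),0)$, which the paper merely asserts, and by supplying the empty-set convention that the paper's appeal to ``non-negative scores'' silently needs when cosines can be negative. One minor caveat beyond the lemma is that, after your shift $s_{ij}\mapsto s_{ij}+1$, the $1-1/e$ guarantee you derive holds for $RS+\lambda n$ rather than for the raw-cosine objective, since an additive constant preserves greedy choices but not multiplicative ratios.
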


\begin{proof}
We prove that the $\Delta RS(X_i|\mathcal{D}_{core}^{k})$ is monotone and submodular as follows.
\begin{itemize}[leftmargin=10pt]
\item {\textbf{Monotone increasing}: }
Given any $\mathcal{D}_{core}^{k}$ and data point $X_i$, we can have $RS(\mathcal{D}_{core}^{k}\cup \{X_i\})-RS(\mathcal{D}_{core}^{k}) \geq 0$, since the representation score and the importance score are both non-negative for a new $X_i$ for the selected coreset $\mathcal{D}_{core}^{k}$.
Thus, $\Delta RS(X_i|\mathcal{D}_{core}^{k})$  is
monotone increasing.
\item {\textbf{Submodularity}: }
Given  $\tilde{\mathcal{D}}_{core}^{\prime k} \subset \mathcal{D}_{core}^{\prime k}$, $X_i \in \mathcal{D}_{train}$, and $X_i \notin \tilde{\mathcal{D}_{core}^{\prime k}}, \mathcal{D}_{core}^{\prime k}$,
we can obtain:
\begin{align}
&\Delta  RS(X_i|\mathcal{D}_{core}^{\prime k})= \lambda (R(\mathcal{D}_{core}^{\prime k}\cup\{X_i\})  \nonumber \\
&-R(\mathcal{D}_{core}^{\prime k})) + (1-\lambda)\hat{I}(\{X_i\}) \nonumber\\
&\Delta  RS(X_i|\tilde{\mathcal{D}}_{core}^{\prime k})=  \lambda (R(\tilde{\mathcal{D}}_{core}^{\prime k}\cup\{X_i\}) -R(\tilde{\mathcal{D}}_{core}^{\prime k})) \nonumber\\
& + (1-\lambda)\hat{I}(\{X_i\}) \nonumber
\end{align}
Since the marginal score satisfies the following inequality:
\begin{align}
\Delta  RS(X_i|\mathcal{D}_{core}^{k}) - \Delta  RS(X_i|\tilde{\mathcal{D}}_{core}^{\prime k}) \le 0
\end{align}
we can obtain the inequality as:

\begin{small}
\begin{align}
RS(\tilde{\mathcal{D}}_{core}^{\prime k} \cup {X_i}) -  RS(\tilde{\mathcal{D}}_{core}^{\prime k}) \ge  \nonumber  RS(\mathcal{D}_{core}^{\prime k} \cup {X_i}) -  RS(\mathcal{D}_{core}^{\prime k}) \nonumber
\end{align}
\end{small}

It demonstrates that $\Delta RS(X_i|\mathcal{D}_{core}^{k})$ is submodular.
\end{itemize}
\end{proof}
Lemma~\ref{lemma:is} indicates that $\Delta RS(X_i|\mathcal{D}_{core}^{k})$ is monotone increasing and submodular.

Suppose $RS(\mathcal{D}_{core}^{opt})$ denotes the optimal value of the objective for the coreset selection problem within budget $k_S$, we can derive:
\begin{equation} \label{eq:upperbound}
\Delta RS(X_i|\mathcal{D}_{core}^{k})\geq\frac{1}{k_S}\sum_{X_i\in \mathcal{D}_{core}^{opt}\backslash \mathcal{D}_{core}^{k}}\Delta RS(X_i|\mathcal{D}_{core}^{k})
\end{equation}

The Equation~\eqref{eq:upperbound} provides the upper bound for $RS(\mathcal{D}_{core}^{opt})$. By the inductive hypothesis,
\begin{equation} \label{eq:approximation}
RS(\mathcal{D}_{core}^{k}) \geq (1-(\frac{1}{k_S})^{k_S})RS(\mathcal{D}_{core}^{opt})
\end{equation}
The fraction $(\frac{1}{k_S})^{k_S})$ approaches $\frac{1}{e}$ as $k_S$ grows. Thus, the approximation ratio for the coreset selection algorithm is $1- \frac{1}{e}$.
\end{proof}

\subsection{Full Version and Proof of Theorem~\ref{thm:error-bound-short}}

\begin{theorem}[Bounded Error of the Selective Update]\label{thm:error-bound}
Consider the update round as defined in Equation~\eqref{eq:update}.
Let $I^\star(X_i)$ be the true importance score of $X_i$ at the current model parameter $\theta_t$. Define $\mathcal{N}_i^\star=\{j\in\mathcal{N}(X_i), \text{aff}(i,j)>\beta\}$ to be the set of neighbors of $X_i$ that pass the threshold in Equation~\eqref{eq:aff}, $r_i=\max_{j\in \mathcal{N}^\star_i}|\|\bar{\mathbf{H}_i}-\bar{\mathbf{H_j}}\|$ as the maximum distance between neighbors and $X_i$ in $k$-NN. We also define $\varepsilon$ as the LSH approximation error according to~\cite{10.1145/276698.276876}. The score error of $X_i$ is then defined as $e_i = I(X_i) - I^\star(X_i)$.
Then for every $X_i$, the following two properties hold:

\noindent (1) \textbf{Change Stability}: The per-round change of any node is bounded by:
\begin{align*}
& \big| I^{new}(X_i) - I(X_i) \big| \nonumber \leq \frac{1}{2}\max\{0,\max_{X_j \in \mathcal{N}_i^*}   \big|I^*(X_j) - I(X_i)\big|\}
\end{align*}

\noindent (2) \textbf{Local Approximation Error bound}: Let the current estimation error of node $X_i$ before the update be   $e_i^{old} = I(X_i) - I^\star(X_i)$ and after the update be $e_i^{new} = I^{new}(X_i) - I^\star(X_i)$. Then we have:
\begin{equation*}
\big| e_i^{new} \big| \leq \frac{1}{2}\big| e_i^{old} \big| + \frac{1}{2} (L_hr_i+\varepsilon),
\end{equation*}
\end{theorem}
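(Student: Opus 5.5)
Both parts of Theorem~\ref{thm:error-bound} follow once the update \eqref{eq:update} is written as a convex combination. With $\alpha_{ij}=w_{ij}/\sum_{j\in\mathcal{N}_i^\star}w_{ij}$ and $\sum_{j\in\mathcal{N}_i^\star}\alpha_{ij}=1$, we read the update as
\begin{equation*}
I^{new}(X_i)=\tfrac12 I(X_i)+\tfrac12\sum_{j\in\mathcal{N}_i^\star}\alpha_{ij}I^{new}(X_j).
\end{equation*}
Here each $I^{new}(X_j)$ for $X_j\in\mathcal{D}_{recal}\cap\mathcal{N}_i^\star$ is a freshly recomputed score, so $I^{new}(X_j)=I^\star(X_j)$. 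If $\mathcal{N}_i^\star=\emptyset$, no update is applied, $I^{new}(X_i)=I(X_i)$, and both bounds hold trivially; the $\max\{0,\cdot\}$ in part (1) accounts for exactly this case.

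For \textbf{Change Stability}, subtract $I(X_i)$ from both sides and use $\sum_j\alpha_{ij}=1$:
\begin{equation*}
I^{new}(X_i)-I(X_i)=\tfrac12\sum_{j}\alpha_{ij}\bigl(I^\star(X_j)-I(X_i)\bigr).
\end{equation*}
The triangle inequality and the convexity of the weights then give $|I^{new}(X_i)-I(X_i)|\le\tfrac12\max_{j\in\mathcal{N}_i^\star}|I^\star(X_j)-I(X_i)|$, which is the claimed bound.

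For the \textbf{Local Approximation Error bound}, subtract $I^\star(X_i)$ instead and split:
\begin{equation*}
e_i^{new}=\tfrac12 e_i^{old}+\tfrac12\sum_j\alpha_{ij}\bigl(I^\star(X_j)-I^\star(X_i)\bigr).
\end{equation*}
The first term is already in the required form. For the second, we assume that the true score is $L_h$-Lipschitz in the mean hidden representation, i.e.\ $|I^\star(X_j)-I^\star(X_i)|\le L_h\|\bar{\mathbf{H}}_i-\bar{\mathbf{H}}_j\|$. The paper implicitly needs this hypothesis, since $L_h$ appears in the statement without definition. We justify it by the same Lipschitz reasoning as in Theorem~\ref{thm:select}: $I$ is the norm of $\hat P(\hat{\mathbf{O}})-\mathbf{O}$, softmax is $1$-Lipschitz, and the logits depend linearly on $\mathbf{H}^{(L)}$ through $\mathbf{W}_h$. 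Because the neighbor set is retrieved by LSH rather than exactly, an actual neighbor may sit at distance up to $r_i+\varepsilon$ relative to the ideal one under the $(1+\varepsilon)$-type guarantee of \cite{10.1145/276698.276876}. The cleanest route is to charge this slack as an additive $\varepsilon$ in score space, giving $|I^\star(X_j)-I^\star(X_i)|\le L_hr_i+\varepsilon$, possibly after absorbing $L_h$ into $\varepsilon$. Averaging with the convex weights $\alpha_{ij}$ and applying the triangle inequality then yields $|e_i^{new}|\le\tfrac12|e_i^{old}|+\tfrac12(L_hr_i+\varepsilon)$.

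The main obstacle is this second step, where two modeling choices have to be fixed. The first is to state the Lipschitz assumption on $I^\star$ explicitly, including that it is relative to the current embeddings $\bar{\mathbf{H}}$ even though these may themselves be stale. The second is to make the LSH error enter additively as $\varepsilon$ rather than multiplicatively as $(1+\varepsilon)r_i$. If the additive form cannot be justified, the fallback is to prove the bound with $L_h(1+\varepsilon)r_i$ and note that it matches the stated form after renaming constants. The algebra in the first two steps is routine.
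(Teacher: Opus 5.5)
Your proof is correct and follows the paper's own argument: you rewrite \eqref{eq:update} as a convex combination in which the recomputed neighbors carry $I^\star(X_j)$, obtain (1) by subtracting $I(X_i)$, and obtain (2) by subtracting $I^\star(X_i)$, applying Lipschitz continuity of $I^\star$ in $\bar{\mathbf{H}}$ with constant $L_h$, charging the LSH retrieval as an additive $\varepsilon$, and finishing with the triangle inequality. Two caveats: keep the Lipschitz property as an explicit hypothesis, as the paper effectively does, rather than claiming to derive it from softmax and $\mathbf{W}_h$, because $I$ also depends on each sample's own targets $\mathbf{O}$ and on token-level rather than mean-pooled states; and the LSH step is not a real obstacle, because $r_i$ is taken over the neighbors actually used, so under your global Lipschitz hypothesis the $\varepsilon$ term is pure slack.
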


\begin{proof}
(1) is directly obtained from the update rule in Equation~\eqref{eq:update}.
For (2), starting from Equation~\eqref{eq:update}, if we subtract $I^\star(X_i)$ from both sides, we obtain $e_i^{\mathrm{new}}=\frac{1}{2} e_i^{\mathrm{old}}+\frac{1}{2}\sum_{X_j\in\mathcal{N}_i^\star}w_{ij}\,\big(I^\star(X_j)-I^\star(X_i)\big)$. By local Lipschitz continuity
we have $|I^\star(X_j)-I^\star(X_i)|\le L_h\|\bar{\mathbf{H}}_j-\bar{\mathbf{H}}_i\|\le L_h r_i$ for every similar neighbor $X_j\in\mathcal{N}_i^\star$ when exact neighbors are used. When neighborhoods are retrieved via LSH, the use of approximate candidates perturbs the convex combination by at most $\varepsilon_{\mathrm{lsh}}$ in magnitude. Thus, we have $\big|\sum_{j}w_{ij}(I^\star(X_j)-I^\star(X_i))\big|\le L_h r_i+\varepsilon_{\mathrm{lsh}}$. Combining these inequalities with the triangle inequality, we obtain the stated bound.
\end{proof}

\end{document}